\algrenewcommand\algorithmicrequire{\textbf{Input:}}
\algrenewcommand\algorithmicensure{\textbf{Output:}}
\tikzset{
  basic settings/.style={
    on grid=true,
    node distance=1.7cm,
  },
  drawandfill/.style={
    draw=#1,fill=#1
  },
  every edge/.style={
    line width=0.5pt,
    draw=black,
    >=stealth',
    bend angle=35,
    auto
  },
  role/.style={
    -stealth',
    shorten >=0.5pt,
    draw=#1
  },
  role/.default=black,
  unnamed individual/.style = {
    circle,
    drawandfill=#1,
    inner sep=1pt,
  },
  unnamed individual/.default = gray!30,
}
\newcommand{\PSpace}{\upshape{\textsc{PSpace}}\xspace}
\newcommand{\ExpTime}{\upshape{\textsc{Exp\-Time}}\xspace}
\newcommand{\NExpTime}{\upshape{\textsc{NExp\-Time}}\xspace}
\newcommand{\Amc}{{\ensuremath{\mathcal{A}}}\xspace}
\newcommand{\Emc}{{\ensuremath{\mathcal{E}}}\xspace}
\newcommand{\Imc}{{\ensuremath{\mathcal{I}}}\xspace}
\newcommand{\Jmc}{{\ensuremath{\mathcal{J}}}\xspace}
\newcommand{\Mmc}{{\ensuremath{\mathcal{M}}}\xspace}
\newcommand{\Omc}{{\ensuremath{\mathcal{O}}}\xspace}
\newcommand{\Tmc}{{\ensuremath{\mathcal{T}}}\xspace}
\newcommand{\Umc}{{\ensuremath{\mathcal{U}}}\xspace}
\newcommand{\Cbb}{{\ensuremath{\mathbb{C}}}\xspace}
\newcommand{\Ibb}{{\ensuremath{\mathbb{I}}}\xspace}
\newcommand{\Nbb}{{\ensuremath{\mathbb{N}}}\xspace}
\newcommand{\Qbb}{{\ensuremath{\mathbb{Q}}}\xspace}
\newcommand{\Tbb}{{\ensuremath{\mathbb{T}}}\xspace}
\newcommand{\Zbb}{{\ensuremath{\mathbb{Z}}}\xspace}
\newcommand{\Bfr}{{\ensuremath{\mathfrak{B}}}\xspace}
\newcommand{\Cfr}{{\ensuremath{\mathfrak{C}}}\xspace}
\newcommand{\efr}{{\ensuremath{\mathfrak{e}}}\xspace}
\newcommand{\NC}{\ensuremath{\mathsf{N_C}}\xspace}
\newcommand{\NR}{\ensuremath{\mathsf{N_R}}\xspace}
\newcommand{\NI}{\ensuremath{\mathsf{N_I}}\xspace}
\newcommand{\NF}{\ensuremath{\mathsf{N_F}}\xspace}
\newcommand{\ex}[1]{\ensuremath{\mathsf{#1}}\xspace}
\newcommand{\ALC}{\ensuremath{\mathcal{ALC}}\xspace}
\newcommand{\ALCQ}{\ensuremath{\mathcal{ALCQ}}\xspace}
\newcommand{\ELD}{\ensuremath{\mathcal{EL}(\cDom)}\xspace}
\newcommand{\ALCD}{\ensuremath{\ALC(\cDom)}\xspace}
\newcommand{\ALCSCC}{\ensuremath{\mathcal{ALCSCC}}\xspace}
\newcommand{\ALCSCCpp}{\ensuremath{\mathcal{ALCSCC}^{++}}\xspace}
\newcommand{\ALCSCCppD}{\ensuremath{\ALCSCCpp(\cDom)}\xspace}
\newcommand{\ALCOSCC}{\ensuremath{\mathcal{ALCOSCC}}\xspace}
\newcommand{\ALCO}{\ensuremath{\mathcal{ALCO}}\xspace}
\newcommand{\ALCOSCCD}{\ensuremath{\mathcal{ALCOSCC}(\stru{D})}\xspace}
\newcommand{\SSCC}{\ensuremath{\mathcal{SSCC}}\xspace}
\DeclareMathOperator{\suc}{\mathsf{succ}}
\DeclareMathOperator{\sat}{\mathsf{sat}}
\newcommand{\dvd}{\mathrel{\mathsf{div}}}
\newcommand{\con}{\ensuremath{\mathfrak{con}}\xspace}
\DeclareMathOperator{\ars}{\mathsf{ars}}
\newcommand{\stru}[1]{\mathfrak{#1}}
\newcommand{\JEPD}{\textsf{JEPD}\xspace}
\newcommand{\JD}{\textsf{JD}\xspace}
\newcommand{\AP}{\textsf{AP}\xspace}
\DeclareMathOperator{\CSP}{CSP}
\newcommand{\cDom}{\ensuremath{\stru{D}}\xspace}
\newcommand{\qDom}{\ensuremath{\stru{Q}}\xspace}
\DeclareMathOperator{\nx}{\mathtt{next}}
\newcommand{\eqterm}[2]{(#1 = \, #2)}
\newcommand{\expadm}{\ExpTime-$\omega$-admissible\xspace}
\newcommand{\nt}{\ensuremath{N_{\Tmc}}\xspace}
\newcommand{\mt}{\ensuremath{M_{\Tmc}}\xspace}
\DeclareMathOperator{\supp}{\mathsf{supp}}
\newcommand{\agt}{\mathfrak{t}}
\newcommand{\ain}{\mathfrak{a}}
\DeclareMathOperator{\agroot}{\mathsf{root}}
\DeclareMathOperator{\wend}{\mathsf{end}}
\newcommand{\cs}{\ensuremath{\mathfrak{C}}\xspace} 
\newcommand{\ocs}{\ensuremath{\mathfrak{B}}\xspace} 
\newcommand{\csi}{\ensuremath{\mathfrak{C}_\Ibb}\xspace}
\renewcommand{\merge}[1]{\ensuremath{\mathop{{\vartriangleleft}_{#1}}}\xspace}
\DeclareMathOperator{\wit}{S_\mathsf{cd}}
\renewcommand{\orcidID}[1]{\,\href{https://orcid.org/#1}{\includegraphics[width=7px]{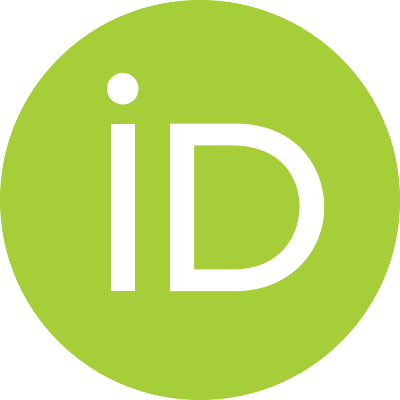}}}
\newcommand{\inAppendix}{in the appendix\xspace}
\begin{document}
\title{Concrete Domains Meet Expressive Cardinality Restrictions in Description Logics \texorpdfstring{\\}{} (Extended Version)}
\titlerunning{Concrete Domains Meet Expressive Cardinality Restrictions in DLs}
\author{Franz Baader\inst{1,2}\orcidID{0000-0002-4049-221X} \and
Stefan Borgwardt\inst{1}\orcidID{0000-0003-0924-8478} \and
Filippo De Bortoli\inst{1,2}\orcidID{0000-0002-8623-6465} \and
Patrick~Koopmann\inst{3}\orcidID{0000-0001-5999-2583}}
\authorrunning{F. Baader et al.}
%
\institute{TU Dresden, Institute of Theoretical Computer Science \\
Dresden, Germany \and
Center for Scalable Data Analytics and Artificial Intelligence (ScaDS.AI) \\
Dresden/Leipzig, Germany \and
Vrije Universiteit Amsterdam, Department of Computer Science \\
Amsterdam, The Netherlands}
\maketitle

\begin{abstract}
Standard Description Logics (DLs) can encode quantitative aspects of an application domain through either \emph{number restrictions}, which constrain the number of individuals that are in a certain relationship with an individual, or \emph{concrete domains}, which can be used to assign concrete values to individuals using so-called features.
These two mechanisms have been extended towards very expressive DLs, for which reasoning nevertheless remains decidable.
Number restrictions have been generalized to more powerful comparisons of sets of role successors in \ALCSCC, while the comparison of feature values of different individuals in \ALCD has been studied in the context of $\omega$-admissible concrete domains \cDom.
In this paper, we combine both formalisms and investigate the complexity of reasoning in the thus obtained DL \ALCOSCCD, which additionally includes the ability to refer to specific individuals by name.
We show that, in spite of its high expressivity, the consistency problem for this DL is \ExpTime-complete, assuming that the constraint satisfaction problem of~\cDom is also decidable in exponential time.
It is thus not higher than the complexity of the basic DL \ALC.
At the same time, we show that many natural extensions to this DL, including a tighter integration of the concrete domain and number restrictions, lead to undecidability.
\keywords{Description Logics \and Automated Deduction \and Concrete Domains \and Cardinality Constraints}
\end{abstract}
\section{Introduction}
\label{sec:introduction}

Description logics (DLs) \cite{Baader_etal_2003,baader_horrocks_lutz_sattler_2017} are a well-investigated family of logic-based knowledge
representation languages, which can be used to formalize the terminological knowledge of an application domain in a machine-processable way.
For instance, the popular Web Ontology Language OWL\footnote{https://www.w3.org/TR/owl2-overview/} is based on an expressive DL and 
large medical ontologies such as 
SNOMED\,CT\footnote{https://www.snomed.org/} and Galen\footnote{https://bioportal.bioontology.org/ontologies/GALEN}
have been developed using appropriate DLs.
A key feature of DLs is the ability to construct descriptions of complex concepts (i.e., sets of individuals sharing certain properties)
using concept names (unary predicates) and role names (binary predicates). For example, the concept of a parent can be described as
$\ex{Human}\sqcap\exists\ex{child}.\ex{Human}$. 
Knowledge about the relationship between concepts can then be expressed using concept inclusions (CIs), such as
$\ex{Human}\sqcap\exists\ex{child}.\ex{Human}\sqsubseteq\exists\ex{eligible}.\ex{TaxBreak}$, which says that parents are
eligible for a tax break.
 
Such purely qualitative statements are not always sufficient to express quantitative information (e.g. the number of children required for a tax break) that is relevant for an application domain.
\emph{Qualified number restrictions} that constrain the number of role successors belonging to a certain concept by a fixed natural number can be employed in DLs to express such quantitative information; 
e.g., $\ex{Human}\sqcap({\geq}\,3\,\ex{child}.\ex{Human})\sqsubseteq\exists\ex{eligible}.\ex{TaxBreak}$ says that a tax break is
available if one has at least three children.
\emph{Concrete domain restrictions} can represent a different type of quantitative information, where concrete objects such as numbers or strings
can be assigned to individuals using partial functions (\emph{features}). 
For example, a tax break might only be available if the annual salary is not too high. The CI
$\ex{Human}\sqcap({\geq}\,3\,\ex{child}.\ex{Human})\sqcap\exists\ex{salary}.{<}_{100,000} \sqsubseteq \exists\ex{eligible}.\ex{TaxBreak}$
specifies at least three children and an annual salary of less than 100,000\;\euro{} as eligibility criteria for a tax break.

Both (qualified) number restrictions \cite{DBLP:conf/ecai/HollunderNS90,DBLP:conf/kr/HollunderB91} and 
concrete domain restrictions~\cite{DBLP:conf/ijcai/BaaderH91} have been introduced early on in DL research,
but it turned out that they create considerable algorithmic challenges.
For \ALCQ, the extension of the basic DL \ALC with qualified number restrictions, it was open for a decade
whether the increase in expressivity also increases the complexity of reasoning if numbers in number restrictions are assumed to be represented
in binary, until Tobies \cite{DBLP:journals/logcom/Tobies01,DBLP:journals/jair/Tobies00} was able to show that it stays the same
(\PSpace without and \ExpTime with CIs). It also turned out that the unrestricted use of transitive roles within number restrictions
can cause undecidability \cite{HoSaTo00}. 
In~\cite{Baad17}, it was shown that reasoning in \ALCSCC, which extends \ALCQ with  very expressive counting constraints on role successors
expressed in the logic QFBAPA~\cite{KuRi07}, still has the same complexity as in \ALC and \ALCQ. In this logic, one can, e.g., 
describe humans that have exactly as many cars as children as $\ex{Human}\sqcap\suc(|\ex{own}\cap\ex{Car}|=|\ex{child}\cap\ex{Human}|)$,
without having to specify the exact numbers of cars and children. Such statements cannot even be expressed in full first-order logic~\cite{DBLP:conf/frocos/BaaderB19}.
 
The decidability result for \ALCD, i.e., \ALC extended with an admissible concrete domain \cDom, in~\cite{DBLP:conf/ijcai/BaaderH91} did not take 
CIs into account.
In the presence of CIs, integrating even rather simple concrete domains into the DL \ALC may cause 
undecidability~\cite{lutz2004nexp,DBLP:conf/cade/BaaderR20}.
In~\cite{LuMi07}, it was proved that integrating a so-called $\omega$-admissible concrete domain into $\ALC$ leaves reasoning decidable also in the 
presence of CIs. That paper gives two examples of such concrete domains (Allen's interval algebra~\cite{DBLP:journals/cacm/Allen83} and 
RCC8~\cite{DBLP:conf/kr/RandellCC92}). Using well-known notions and results from model theory, additional $\omega$-admissible concrete domains were 
exhibited in~\cite{DBLP:conf/cade/BaaderR20,BaRy22},
for example the rational numbers with comparisons $\mathfrak{Q}:=(\mathbb{Q},<,=,>)$.
Decidability results for \ALCD in the presence of CIs for concrete domains \cDom that are
not $\omega$-admissible can be found in~\cite{DBLP:conf/ecai/CarapelleT16,DBLP:conf/kr/LabaiOS20,DBLP:conf/jelia/DemriQ23}. 
A simpler, but considerably more restrictive way of achieving decidability is to use unary concrete domains~\cite{DBLP:conf/ijcai/HorrocksS01}.

In this paper, we study \ALCOSCCD, a combination of the DLs \ALCSCC and \ALCD with $\omega$-admissible concrete domains~\cDom as well as nominals ($\mathcal{O}$).
However, our logic goes beyond a pure combination of number restrictions and concrete domains by additionally allowing them to interact.
For a numerical concrete domain, it seems natural
to use the values of concrete features directly in the QFBAPA constraints, e.g.\ to describe people that own more
books than their age. We show, however, that this unrestricted combination easily leads to undecidability. Instead, we use concrete domain
constraints to define roles, which can then be employed within QFBAPA constraints. For example, the \emph{concrete role} $(\ex{salary} < \nx \ex{salary})$
connects an individual to all individuals that have a higher salary. One can use this to describe all persons that have a lower salary than at least half of their children with $\suc(|\mathsf{child} \cap (\mathsf{salary} < \nx \mathsf{salary})| > |\mathsf{child} \cap (\mathsf{salary} \ge \nx \mathsf{salary})|)$.
However, we show that the unrestricted use of such concrete roles also leads
to undecidability.
Hence, we additionally restrict them to pairs of individuals that are already connected by a role name.

Our main result is that the complexity of reasoning in \ALCOSCCD stays in \ExpTime if the complexity
of reasoning in \cDom is in \ExpTime. There are few results in the literature that determine the exact complexity of reasoning in DLs with concrete domains
\cite{DBLP:phd/dnb/Lutz02,DBLP:conf/kr/LabaiOS20,DBLP:conf/jelia/DemriQ23,DBLP:conf/dlog/BorgwardtBK24}. 
Only \cite{DBLP:phd/dnb/Lutz02} and \cite{DBLP:conf/dlog/BorgwardtBK24} consider $\omega$-admissible concrete domains, and the \ExpTime-completeness
result in the former is restricted to a specific temporal concrete domain. Our paper extends the results of the latter from \ALCD to \ALCOSCCD and is
generic since it holds for all $\omega$-admissible concrete domains with a decision problem in \ExpTime.
Finally, apart from the aforementioned undecidability results, we show that adding transitive roles also makes reasoning undecidable, even under strong syntactic restrictions.
All proof details can be found \inAppendix.

\section{Preliminaries}
\label{sec:preliminaries}

\paragraph{Concrete domains.}
We adopt the term \emph{concrete domain} to refer to a relational structure $\cDom=(D,P_1^D,P_2^D,\dots)$ over a non-empty, countable relational signature $\{P_1,P_2,\dots\}$, where $D$ is a non-empty set, and each predicate~$P_i$ has an associated arity $k_i \in \Nbb$ and is interpreted by a relation $P_i^D \subseteq D^{k_i}$.
An example is the structure $\stru{Q} := (\mathbb{Q},{<},{=},{>})$ over the rational numbers~$\mathbb{Q}$ with standard binary order and equality relations.
Given a countably infinite set~$V$ of variables, a \emph{constraint system} over $V$ is a set~$\cs$ of \emph{constraints} $P(v_1,\dotsc,v_k)$, where $v_1,\dotsc,v_k \in V$ and $P$ is a $k$-ary predicate of \cDom.
We denote by $V(\cs)$ the set of variables that occur in~$\cs$.
The constraint system $\cs$ is \emph{satisfiable} if there is a mapping $h \colon V(\cs) \to D$, called \emph{solution} of $\cs$, such that $P(v_1,\dotsc,v_k) \in \cs$ implies $(h(v_1),\dotsc,h(v_k)) \in P^D$.
The \emph{constraint satisfaction problem} for \cDom, denoted $\CSP(\cDom)$, asks if a given finite constraint system~$\cs$ over~\cDom is satisfiable.
The CSP of $\stru{Q}$ is decidable in polynomial time, by reduction to ${<}$-cycle detection: for example, the clique $x_1 < x_2,\ x_2 < x_3,\ x_3 < x_1$ is unsatisfiable over~$\stru{Q}$.

To ensure that reasoning in DLs with concrete domain restrictions remains decidable, we impose further properties on~\cDom regarding its predicates and the compositionality of its CSP for finite and countable constraint systems.
We say that \cDom is a \emph{patchwork} if it satisfies the following conditions:\footnote{Originally \cite{LuMi07} used \JEPD (jointly exhaustive, pairwise disjoint) and \AP (amalgamation property), while \JD (jointly diagonal) was later added by~\cite{BaRy22}.}
\begin{description}
  \item[\JEPD] if $k \ge 1$ and \cDom has $k$-ary predicates, then these predicates partition $D^k$;
  \item[\JD] there is a quantifier-free, equality-free first-order formula $\phi_{=}(x,y)$ over the signature of \cDom that defines equality between two elements of \cDom;
  \item[\AP] if $\ocs$, $\cs$ are constraint systems and
  $P(v_1,\dotsc,v_k) \in \ocs \;\text{iff}\; P(v_1,\dotsc,v_k) \in \cs$
  holds for all $v_1,\dotsc,v_k \in V(\ocs)\cap V(\cs)$ and all $k$-ary predicates $P$ over \cDom, then $\ocs$ and $\cs$ are satisfiable iff $\ocs \cup \cs$ is satisfiable.
\end{description}
If \cDom is a patchwork, we call a constraint system $\cs$ \emph{complete} if, for all $k \in \Nbb$ for which \cDom has $k$-ary predicates and all $v_1,\dotsc,v_k \in V(\cs)$, there is exactly one $k$-ary predicate~$P$ over \cDom such that $P(v_1,\dotsc,v_k) \in \cs$.
The concrete domain \cDom is \emph{homomorphism $\omega$-compact} if every countable constraint system $\cs$ over~\cDom is satisfiable whenever all its finite subsystems $\cs' \subseteq \cs$ are satisfiable.
We introduce \expadm concrete domains, which differ from $\omega$-admissible ones as defined in~\cite{LuMi07,BaRy22} by a stronger requirement on the decidability of $\CSP(\cDom)$.

\begin{definition}
  \label{dfn:omega-admissible}
  A concrete domain \cDom is \emph{\expadm} if it has a finite signature, it is a patchwork, it is homomorphism $\omega$-compact and its CSP is in \ExpTime.
\end{definition}
The finiteness of the signature of \cDom is necessary to ensure decidability.
Without this assumption, one can find instances of \cDom that satisfy all the other conditions of~\Cref{dfn:omega-admissible} such that reasoning in \ALCD is undecidable.
One such example is given by the concrete domain $(\Zbb, \{{+}_m \mid m \in \Zbb\})$ where ${+}_m$ relates those integers whose difference is equal to $m$~\cite{BaRy22}.
The conditions of \Cref{dfn:omega-admissible} are satisfied by Allen's interval algebra, RCC8 and $\mathfrak{Q}$~\cite{LuMi07,BaRy22}.
\paragraph{The logic QFBAPA.}
\emph{Set terms} are built using the operations intersection~$\cap$, union~$\cup$ and complement~$^c$ from \emph{set variables} and the constants~$\emptyset$ and~\Umc.
Set terms $s,t$ are then used in \emph{inclusion-} and \emph{equality constraints} $s \subseteq t$, $s = t$.
\emph{Presburger Arithmetic (PA) expressions} $\ell$, $\ell'$ of the form $n_0 + n_1 t_1 + \dotsb + n_k t_k$, where $n_i \in \Nbb$ and each $t_i$ is the cardinality $|s_i|$ of a set term $s_i$ or an integer variable, are used to form \emph{numerical constraints} $\ell = \ell'$, $\ell < \ell'$ and $n \dvd \ell$ ($n$ divides $\ell$), where $n\in\Nbb$.
A \emph{QFBAPA formula} is a Boolean combination of set- and numerical constraints.

A \emph{solution} $\sigma$ of a QFBAPA formula $\phi$ assigns a \emph{finite} set $\sigma(\Umc)$ to \Umc, subsets of $\sigma(\Umc)$ to set variables and integers to integer variables such that $\phi$ is \emph{satisfied} by~$\sigma$, which is defined in the standard way.
$\phi$ is \emph{satisfiable} if it has a solution.

The satisfiability problem for QFBAPA formulae is NP-complete.
Membership in NP is proved in~\cite{KuRi07}, using the notion of Venn regions.
If $\phi$ is a QFBAPA formula containing the set variables $X_1, \ldots, X_k$, a \emph{Venn region} for $\phi$ is a set term of the form
$X_1^{c_1} \cap \dotsb \cap X_k^{c_k}$
where each $c_i$ is either empty or the set complement~$c$.
For a Venn region $v$ for $\phi$ and a set variable $X$ in $\phi$, we write $X \in v$ to indicate that $X$ occurs without complement in $v$, and $X \notin v$ if $X^c$ occurs in $v$.
The following characterization, proved in~\cite{KuRi07} and strengthened in~\cite{Baad17}, guarantees the existence of solutions with a polynomial number of non-empty Venn regions for satisfiable QFBAPA formulae, as follows.
\begin{lemma}[\cite{Baad17}]
    \label{lem:preliminaries:savior}
    For every QFBAPA formula $\phi$, one can compute in polynomial time a number $N_\phi$, whose value is polynomial in the size of $\phi$, such that for every solution $\sigma$ of $\phi$ there exists a solution $\sigma'$ fulfilling the following conditions:
    \begin{itemize}
        \item there are at most $N_\phi$ Venn regions $v$ for $\phi$ for which $\sigma'(v) \ne \emptyset$;
        \item if $v$ is a Venn region for $\phi$ and $\sigma'(v) \ne \emptyset$, then $\sigma(v) \ne \emptyset$.
    \end{itemize}
\end{lemma}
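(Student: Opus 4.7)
The plan is to reduce the statement to a classical sparse-solution result for integer linear programs, following the Venn-region encoding of Kuncak-Rinard~\cite{KuRi07} and then tightening the argument to preserve the support of the given solution, as in~\cite{Baad17}.

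First, I would transform $\phi$ into an equivalent Presburger arithmetic formula $\hat\phi$ by introducing, for every Venn region $v$ of $\phi$, a non-negative integer variable $k_v$ intended to represent $|v|$. Every cardinality $|s|$ occurring in a numerical constraint of $\phi$ is rewritten as $\sum_{v \subseteq s} k_v$, every set inclusion $s \subseteq t$ is replaced by $\sum_{v \subseteq s \cap t^c} k_v = 0$, and every set equality by the analogous constraint on the symmetric difference. The resulting formula $\hat\phi$ is a Boolean combination of linear (in)equalities and divisibility constraints whose bit-size is polynomial in $|\phi|$ (even though the number of variables $k_v$ is exponential in the number of set variables of $\phi$, the constraints themselves only refer implicitly to these variables through sums over Venn regions contained in a syntactically specified set term). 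Crucially, solutions $\sigma$ of $\phi$ correspond bijectively, up to permutation of base elements, to non-negative integer solutions of $\hat\phi$ via $k_v = |\sigma(v)|$, so that any such integer solution lifts back to a solution of $\phi$ by taking disjoint sets of the prescribed cardinalities as the values of the Venn regions.

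Second, writing $S_0 := \{v : \sigma(v) \ne \emptyset\}$ for the support of the given solution, the task reduces to producing a non-negative integer solution of $\hat\phi$ whose support is contained in $S_0$ and has size bounded by a polynomial $N_\phi$ in $|\phi|$. I would fix the truth assignment on the atomic subformulas of $\hat\phi$ induced by the values $k_v^0 := |\sigma(v)|$, which turns $\hat\phi$ into a conjunction of linear (in)equalities and divisibility constraints that is already solved by $(k_v^0)_v$, and then apply the Eisenbrand-Shmonin type sparse-solution theorem used in~\cite{KuRi07}: any feasible non-negative integer linear system admits a solution whose support has cardinality polynomially bounded in the number of constraints and their bit-length. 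Viewing $(k_v^0)_{v \in S_0}$ as a feasible point of the lower-dimensional restriction of this system to the coordinates indexed by $S_0$ produces a solution $(k_v')_v$ with $\{v : k_v' > 0\} \subseteq S_0$ and $|\{v : k_v' > 0\}| \le N_\phi$, and the polynomial $N_\phi$ is computable from $\phi$ in polynomial time. Lifting $(k_v')_v$ back to a set-valued assignment $\sigma'$ as in the first step yields the desired solution.

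The main obstacle is ensuring that the sparse-solution step can be executed while keeping the support inside the prescribed set $S_0$, rather than only guaranteeing the existence of some sparse solution in the full variable space. This is handled by exploiting the structure of the proof of the sparse-solution theorem, which iteratively picks an integer vector in the kernel of the currently binding equality constraints whose support lies inside the current non-zero support, and moves in that direction to zero out one more coordinate; thus the support only shrinks during the reduction and remains contained in $S_0$ throughout. This refinement is precisely the strengthening of the Kuncak-Rinard bound that~\cite{Baad17} establishes, and it yields the second bullet of the lemma on top of the usual polynomial bound on the number of non-empty Venn regions.
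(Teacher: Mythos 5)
The paper does not prove this lemma itself but imports it from \cite{Baad17}, and your reconstruction follows essentially the same route as that reference: the Kuncak--Rinard translation of $\phi$ into linear arithmetic over Venn-region cardinalities, followed by the Eisenbrand--Shmonin sparse-solution bound, strengthened by applying it only to the columns indexed by the support $S_0$ of the given solution (so that feasibility is witnessed by $(|\sigma(v)|)_{v\in S_0}$ and the sparse solution's support cannot leave $S_0$). Your argument is correct; the only point worth making explicit is that $N_\phi$ depends only on the number of atoms of $\phi$ and their coefficient bit-lengths, not on the (exponentially many) Venn-region variables or on $\sigma$, which you do address.
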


\section{Syntax and Semantics of \texorpdfstring{\ALCOSCCD}{ALCOSCC(D)}}
\label{sec:dls}

We now introduce the classical description logic \ALCO, its extension \ALCOSCC, and finally our new logic \ALCOSCCD.

Given at most countable, disjoint sets \NC, \NR and \NI of \emph{concept}-, \emph{role}- and \emph{individual names}, \ALCO \emph{concepts} are built using negation~$\neg$ and conjunction~$\sqcap$ from concept names $A \in \NC$, \emph{nominals} $\{a\}$ with $a \in \NI$ and \emph{existential restrictions} $\exists r. C$ with $r \in \NR$ and $C$ an \ALCO concept~\cite{baader_horrocks_lutz_sattler_2017}. As usual, we use $C \sqcup D := \neg(\neg C \sqcap \neg D)$ (disjunction) and $\top := A \sqcup \neg A$.
An \emph{interpretation} \Imc consists of a \emph{domain} $\Delta^{\Imc} \ne \emptyset$ and a mapping ${\cdot}^{\Imc}$ assigning sets $A^{\Imc} \subseteq \Delta^{\Imc}$ to $A \in \NC$, relations $r^{\Imc} \subseteq \Delta^{\Imc} \times \Delta^{\Imc}$ to $r \in \NR$ and individuals $a^{\Imc} \in \Delta^{\Imc}$ to $a \in \NI$.
For $d \in \Delta^{\Imc}$, we define $r^{\Imc}(d) := \{ e \in \Delta^{\Imc} \mid (d,e) \in r^{\Imc} \}$.
We extend ${\cdot}^{\Imc}$ to concepts by
$(\lnot C)^\Imc := \Delta^\Imc \setminus C^\Imc$,
$(C\sqcap D)^\Imc := C^\Imc \cap D^\Imc$, $\{a\}^{\Imc} := \{a^{\Imc}\}$
and
$
    {(\exists r. C)}^{\Imc} := \{ d \in \Delta^{\Imc} \mid \exists e \in r^{\Imc}(d) \cap C^{\Imc}\}
$.
In this DL, the concept of all individuals that are human and have a child who is not \texttt{Sam} can be written as $\mathsf{Human} \sqcap \exists \mathsf{child}. \neg \{\mathtt{Sam}\}$.

\ALCOSCC extends \ALCO concepts with \emph{role successor restrictions} (or \emph{$\suc$-restrictions}) $\suc(\con)$, where \con is a set- or numerical constraint with role names and concepts as set variables and no integer variables, e.g.\ $r\subseteq C$~\cite{Baad17}.
This DL requires interpretations \Imc to be \emph{finitely branching}, i.e.\ such that the set of all role successors $\ars^{\Imc}(d) := \bigcup_{r\in\NR}r^\Imc(d)$ is finite, for all $d \in \Delta^{\Imc}$.
Then, each $d \in \Delta^{\Imc}$ induces a QFBAPA assignment~$\sigma_d$, where $\sigma_d(\Umc) := \ars^\Imc(d)$, $\sigma_d(r) := r^{\Imc}(d)$ for $r\in\NR$ and $\sigma_d(C) := C^{\Imc} \cap \ars^{\Imc}(d)$ for concepts~$C$.
The mapping $\cdot^\Imc$ is extended to $\suc$-restrictions by defining $d \in \suc(\con)^{\Imc}$ iff $\sigma_d$ is a solution of \con.

\ALCOSCC does not need existential restrictions $\exists r. C$, as they can be expressed as $\suc(|r \cap C| \ge 1)$.
Moreover, $\suc$-restrictions can compare quantities of successors, e.g.\ $\suc(|\ex{own}\cap\ex{Car}|=|\ex{child}\cap\ex{Human}|)$ describes people who own as many cars as they have children, without specifying the exact amount.

To integrate the concrete domain \cDom, we complement \NC, \NR and \NI with an at most countable set~\NF of \emph{feature names} that connect individuals with values in~$D$~\cite{DBLP:conf/ijcai/BaaderH91}.
A \emph{feature path} $p$ is of the form $f$ or $r f$ with $r \in \NR$ and $f \in \NF$. For instance, $\mathsf{salary}$ is a feature name as well as a feature path, while $\mathsf{child}\ \mathsf{salary}$ is a feature path including the role name~$\ex{child}$.
\emph{Concrete domain restrictions} (or \emph{CD-restrictions}) are concepts $\exists p_1,\dotsc,p_k. P$, where $p_i$ are feature paths and $P$ is a $k$-ary predicate of~\cDom.
An interpretation~\Imc assigns to each $f \in \NF$ a \emph{partial} function $f^{\Imc} \colon \Delta^{\Imc} \rightharpoonup D$.
A feature path~$p$ is mapped to $p^\Imc \subseteq \Delta^{\Imc} \times D$ by defining $p^{\Imc}(d) := \{ f^{\Imc}(d) \}$ if $p = f$\footnote{In a slight abuse of notation, we view $f^\Imc(d)$ both as a value and as a singleton set.} and $p^{\Imc}(d) := \{ f^{\Imc}(e) \mid e \in r^{\Imc}(d)\}$ if $p = r f$. 
Then we can define
\begin{equation*}
    (\exists p_1,\dotsc,p_k.P)^\Imc \!\coloneqq \big\{ d\in\Delta^\Imc\! \mid \text{\emph{some} tuple in}\;p_1^\Imc(d) \times \dotsb \times p_k^\Imc(d)\;\text{is in}\; P^D \}.
\end{equation*}
For example, one can describe individuals whose salaries are greater than that of some of their children using $\exists \mathsf{salary}, \mathsf{child}\ \mathsf{salary}. {>}$.
Furthermore, due to \JEPD, we can encode \emph{universal CD-restrictions} $\forall p_1,\dotsc,p_k. P$ using the conjunction of all concepts $\lnot\exists p_1,\dotsc,p_k. P'$ where $P' \ne P$ is a $k$-ary predicate of \cDom~\cite{LuMi07}.

A naive extension of \ALCOSCC with concrete domain reasoning that simply combines $\suc$- and CD-restrictions offers limited expressive power.
To improve that, we introduce \emph{feature pointers} $\alpha$ of the form $f$ or $\nx f$ with $f \in \NF$ and define \emph{feature roles} $\gamma := P(\alpha_1,\dotsc,\alpha_k)$, where each $\alpha_i$ is a feature pointer and $P$ is a $k$-ary predicate of \cDom.
For example, $\mathsf{salary}$ is a pointer to the salary of a given individual $d$, while $\nx \mathsf{salary}$ is a pointer to the salary of an individual~$e$ that we want to compare to $d$; the feature role $(\mathsf{salary} < \nx \mathsf{salary})$ describes a binary relation that contains $(d,e)$ iff the salary of $d$ is smaller than that of $e$.

We define \ALCOSCCD as the extension of \ALCOSCC with CD-restrictions and $\suc$-restrictions $\suc(\con)$ where \con can also contain feature roles as set variables.
We can now describe individuals that earn less than the majority of their children by
\begin{equation*}
    C_{\mathsf{ex}} := \suc(|\mathsf{child} \cap (\mathsf{salary} < \nx \mathsf{salary})| > |\mathsf{child} \cap (\mathsf{salary} < \nx \mathsf{salary})^c|).
\end{equation*}
Feature roles $\gamma := P(\alpha_1,\dotsc,\alpha_k)$ are mapped by interpretations~\Imc to relations $\gamma^{\Imc} \subseteq \Delta^{\Imc} \times \Delta^{\Imc}$ such that $(d,e) \in \gamma^{\Imc}$ iff $(c_1,\dots,c_k)\in P^D$, where $c_i:=f_i^\Imc(d)$ if $\alpha_i = f_i$ and $c_i:=f_i^\Imc(e)$ if $\alpha_i = \nx f_i$.
The QFBAPA assignment $\sigma_d$ is extended to map feature roles $\gamma$ to $\gamma^{\Imc} \cap \ars^{\Imc}(d)$, and $\suc(\con)^\Imc$ is defined as before.

An \emph{\ALCOSCCD TBox} \Tmc is a finite set of \emph{concept inclusions (CIs)} $C \sqsubseteq D$ between concepts $C,D$.
For example, we can describe an individual \texttt{Jane} that earns more than \texttt{Sam}, where the role $\mathsf{ref}_{\mathtt{Sam}}$ always points to \texttt{Sam}:
\begin{equation*}
    \Tmc_{\mathsf{ex}} := \big\{\, \top \sqsubseteq \suc(\mathsf{ref}_\mathtt{Sam} = \{\mathtt{Sam}\}), \; \{\mathtt{Jane}\} \sqsubseteq \exists \mathsf{salary}, \mathsf{ref}_{\mathtt{Sam}}\ \mathsf{salary}. {>} \,\big\}.
\end{equation*}
A finitely branching interpretation \Imc is a \emph{model} of~\Tmc if $C^{\Imc} \subseteq D^{\Imc}$ holds for every CI $C \sqsubseteq D$ in~\Tmc.
A TBox \Tmc is \emph{consistent} if it has a model.

We mentioned above that feature roles make \ALCOSCCD more expressive. Precisely, we can show that some concepts with feature roles cannot be equivalently expressed by only using feature names in CD-restrictions; two concepts are \emph{equivalent} if they are always interpreted by the same sets of individuals.
\begin{theorem}
    There is no $\ALCOSCC(\qDom)$ concept without feature roles that is equivalent to $C_{\mathsf{ex}}$.
\end{theorem}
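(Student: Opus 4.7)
The plan is to prove non-equivalence by constructing a single interpretation \Imc containing two individuals $d_1, d_2$ that disagree on $C_{\mathsf{ex}}$ but satisfy exactly the same feature-role-free \ALCOSCC(\qDom) concepts. This suffices because any feature-role-free concept equivalent to $C_{\mathsf{ex}}$ would have to separate $d_1$ from $d_2$ in \Imc.

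Take $\Delta^\Imc \coloneqq \{d_1, d_2\} \cup \{c_j^i \mid i \in \{1,2\},\; j \in \{1,2,3\}\}$, interpret $\ex{child}^\Imc$ as the six pairs $(d_i, c_j^i)$ and leave other role names empty, interpret all concept names as empty and individual names as elements disjoint from $\{d_1, d_2\}$ (so nominals are trivially uniform at $d_1, d_2$), and populate only the feature $\ex{salary}$: put $\ex{salary}^\Imc(d_1) = \ex{salary}^\Imc(d_2) = 0$, $(\ex{salary}^\Imc(c_j^1))_{j=1,2,3} = (-1,1,2)$ and $(\ex{salary}^\Imc(c_j^2))_{j=1,2,3} = (-2,-1,1)$. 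Then $d_1 \in C_{\mathsf{ex}}^\Imc$ (two of three children strictly higher: $2 > 1$) while $d_2 \notin C_{\mathsf{ex}}^\Imc$ (only one: $1 \not> 2$).

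The heart of the argument is an induction on concept structure showing $d_1 \in C^\Imc$ iff $d_2 \in C^\Imc$ for every feature-role-free $C$. The atomic cases (concept names, nominals) and Boolean cases are trivial. For CD-restrictions $\exists p_1, p_2. P$ (all \qDom-predicates are binary), a direct case split on $(p_1, p_2) \in \{\ex{salary}, \ex{child}\,\ex{salary}\}^2$ uses two invariants: both parents have salary $0$, and each has at least one strictly-lower, at least one strictly-higher, and no equal-salary child, so every combination yields matching truth values. For a $\suc$-restriction $\suc(\con)$, the key auxiliary lemma is that all six leaves satisfy identical feature-role-free concepts: a leaf has empty role-successor universe (so any $\suc$-constraint collapses to a fixed truth value) and CD-restrictions on a leaf $c$ reduce to $\exists \ex{salary}, \ex{salary}. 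P$ (all paths through \ex{child} are vacuous), which holds iff $P$ is equality, independently of $c$'s salary value. Consequently every feature-role-free concept contains either all leaves or none, making the QFBAPA assignments $\sigma_{d_1}$ and $\sigma_{d_2}$ structurally identical --- each has a three-element universe in which every set variable maps uniformly to the whole universe or to $\emptyset$ --- and so $\sigma_{d_1} \models \con$ iff $\sigma_{d_2} \models \con$.

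The main obstacle is validating the leaf-indistinguishability lemma for arbitrarily nested feature-role-free concepts rather than just atoms. This is handled by observing that leaves have no role successors and only the feature $\ex{salary}$ populated, so every nested construct on a leaf reduces inductively to either a trivial $\suc$-constraint on an empty universe or a reflexive comparison $P(s,s)$ --- both uniform across the six leaves --- which closes the induction and establishes the theorem.
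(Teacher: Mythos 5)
Your proposal is essentially sound and rests on the same core insight as the paper's proof: without feature roles, a $\suc$-restriction can only count successors according to the feature-role-free concepts they satisfy, and successors that are leaves with a defined $\ex{salary}$ are pairwise indistinguishable by such concepts, so the \emph{number} of higher-salary children is invisible. The structural difference is that the paper compares \emph{one} individual $d$ across \emph{two} interpretations $\Imc$ and $\Jmc$ (obtained by flipping one child's salary from $1$ to $0$, which changes the truth of $C_{\mathsf{ex}}$ at $d$ but changes no individual's membership in any feature-role-free concept), whereas you compare \emph{two} individuals $d_1,d_2$ inside a \emph{single} interpretation via an induction on concept structure. Both work; the paper's two-interpretation version gets nominals for free (the named structure is identical in $\Imc$ and $\Jmc$, so $\{a\}^\Imc=\{a\}^\Jmc$ pointwise), while your one-interpretation version must additionally ensure that no nominal separates the relevant successors, and that is where your construction as written has a flaw.

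Concretely: you stipulate that individual names are interpreted as ``elements disjoint from $\{d_1,d_2\}$,'' but your domain contains nothing besides $d_1,d_2$ and the six leaves, so every $a\in\NI$ must denote some leaf $c_j^i$. Then the feature-role-free concept $\suc(|\ex{child}\cap\{a\}|\geq 1)$ holds at $d_i$ and fails at the other parent, which falsifies your leaf-indistinguishability lemma and with it the central claim that $d_1$ and $d_2$ agree on all feature-role-free concepts. The repair is immediate: add one isolated element $e_0$ with no roles and no defined features and set $a^\Imc:=e_0$ for all $a\in\NI$; then every nominal is false at $d_1$, $d_2$, and all six leaves, and your induction goes through. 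With that one-line fix the argument is correct, and arguably more self-contained than the paper's, since it makes explicit (via the leaf lemma and the isomorphism of the QFBAPA assignments $\sigma_{d_1}$ and $\sigma_{d_2}$) the uniformity argument that the paper compresses into the single sentence ``every individual in $\Jmc$ satisfies the same CD-restrictions, concept names and $\suc$-restrictions without feature roles as in $\Imc$.''
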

\begin{proof}
    Assume that there is an $\ALCOSCC(\qDom)$ concept $D$ without feature roles that is equivalent to $C_{\mathsf{ex}}$.
    Consider the interpretation \Imc, where $d$ has $\mathsf{salary}$~$0$ and five $\mathsf{child}$-successors, two whose $\mathsf{salary}$ is~$0$ and three whose $\mathsf{salary}$ is~$1$.
    Then, $d \in C_{\mathsf{ex}}^{\Imc} = D^{\Imc}$.
    Construct \Jmc from~\Imc by changing the $\mathsf{salary}$ of one of the successors from~$1$ to~$0$.
    Since every individual in \Jmc satisfies the same CD-restrictions, concept names and $\suc$-restrictions without feature roles as in~\Imc, we deduce that $d \in D^{\Jmc}$.
    However, $d$ has more successors with equal $\mathsf{salary}$ in \Jmc than successors with greater $\mathsf{salary}$, hence $d \notin C_{\mathsf{ex}}^{\Jmc} = D^{\Jmc}$ must hold, which is a contradiction.
    \qed
\end{proof}

\section{Deciding Consistency}
\label{sec:satisfiability}

Let \cDom be an \expadm concrete domain and \Tmc an \ALCOSCCD TBox.
In this section, we assume w.l.o.g.\ that \NC, \NR, \NI and \NF are finite and contain exactly the names occurring in~$\Tmc$ and that there is at least one individual name; indeed, \Tmc is consistent iff $\Tmc \cup \{ \{a\} \sqsubseteq \{a\} \}$ is consistent, where $a$ is a fresh individual name.
We define the notion of \emph{individual types}, describing sets of equivalent individual names in an interpretation.
\begin{definition}
  \label{dfn:individual-type}
  An \emph{individual type} $\ain$ w.r.t. \NI is a non-empty subset of \NI, and a set of individual types \Ibb is an \emph{individual type system} for \NI if \Ibb partitions \NI.
  Given an interpretation~\Imc, an individual $d \in \Delta^{\Imc}$ \emph{has individual type} $\ain_{\Imc}(d) := \{ a \in \NI \mid a^{\Imc} = d\}$ if this set is non-empty, and $d$ is \emph{anonymous} otherwise.
\end{definition}
We now fix an individual type system~\Ibb.
Let \Mmc be the set of all subconcepts appearing in~\Tmc, as well as their negations.
We define the notion of \emph{type} as usual.

\begin{definition}
  \label{dfn:type}
  A \emph{type} w.r.t.~\Tmc is a set $t \subseteq \Mmc$ such that:
  \begin{itemize}
    \item if $C\sqsubseteq D\in\Tmc$ and $C \in t$, then $D \in t$;
    \item if $\neg C \in \Mmc$, then $C \in t$ iff $\neg C \notin t$;
    \item if $C \sqcap  C' \in \Mmc$, then $C \sqcap C' \in t$ iff $C \in t$ and $C' \in t$.
  \end{itemize}
  The \emph{type} of $d \in \Delta^{\Imc}$ w.r.t. \Tmc is the set
  $t_{\Imc}(d) := \big\{ C \in \Mmc \mid d \in C^{\Imc} \big\}$.
\end{definition}
If \Imc is a model of \Tmc, then $t_{\Imc}(d)$ is indeed a type w.r.t.~\Tmc.
A type $t$ is \emph{named with} an individual type~$\ain_t$ if for all $a \in \NI$, $a \in \ain_t$ iff $\{a\} \in t$, and is \emph{anonymous} if it is not named with any individual type.

Following the approach used in~\cite{Baad17}, we construct a QFBAPA formula $\phi_t$ that is induced by the $\suc$-restrictions $\suc(\con)$ in a type~$t$ and enriched with constraints derived from the individual type system~\Ibb and the set of role names~\NR.
Formally, $\phi_t$ is defined as the conjunction of
\begin{itemize}
  \item $\phi_\con$ if $\suc(\con)\in t$ and $\neg \phi_{\con}$ otherwise, where $\phi_\con$ is derived from $\con$ by replacing role names~$r$, feature roles~$\gamma$ and concepts~$C$ with set variables $X_r$, $X_{\gamma}$ and $X_C$, respectively;
  \item $|\bigcap_{a \in \ain} X_{\{a\}}|\le 1$ for every $\ain \in \Ibb$; and
  \item $\Umc=\bigcup_{r\in\NR}X_r$.
\end{itemize}
All formulae $\phi_t$ contain exactly the same set variables and thus have the same Venn regions (cf.\ \Cref{sec:preliminaries}), called the \emph{Venn regions of~\Tmc}.
A Venn region $v$ of \Tmc \emph{has individual type} $\ain_v = \{ a \in \NI \mid X_{\{a\}} \in v\}$ if this set is non-empty, and $v$ is \emph{anonymous} otherwise.
The following example shows that $\phi_t$ does not yet account for the numerical constraints induced by the CD-restrictions in $t$.
\begin{example}
  \label{exa:unsatisfiable-lower-bounds}
  Let $\Tmc=\{\, \top \sqsubseteq (\exists \mathsf{salary}, \mathsf{child}\ \mathsf{salary}.{<}) \sqcap (\suc(|\mathsf{child}| \le 0)) \,\}$.
  For every model \Imc of \Tmc and $d \in \Delta^{\Imc}$, the type $t := t_{\Imc}(d)$ contains both conjuncts appearing in this CI.
  The QFBAPA formula $\phi_t := |X_{\mathsf{child}}| \le 0 \land \Umc = X_{\mathsf{child}}$ is satisfied by any solution assigning the empty set to $\Umc$.
  However, $t$ cannot be realized: the first conjunct implies that $d$ has a $\mathsf{child}$-successor $e \ne d$ such that $\mathsf{salary}^{\Imc}(d) < \mathsf{salary}^{\Imc}(e)$, while the last conjunct forces $d$ to have no $\mathsf{child}$-successor.
\end{example}
To realize the CD-restrictions in~$t$, we may need up to $\mt:=R_\Tmc\cdot P_\Tmc$ distinct role successors, where $R_\Tmc$ is the number of CD-restrictions in \Mmc and $P_\Tmc$ is the maximal arity of predicates of \cDom occurring in~\Mmc.
We add this information to the QFBAPA formula $\phi_t$ with additional constraints over a set of pre-selected Venn regions, representing sets of role successors whose existence is implied by the CD-restrictions in~$t$.
Let $S$ be a set of at most $\mt$ Venn regions $v$, each associated to a natural number $0 \le n_v \le \mt$.
By~\Cref{lem:preliminaries:savior}, the QFBAPA formula $\phi_{t,S}$, which extends $\phi_t$ with a conjunct $|v| \ge n_v$ for each $v \in S$, is satisfiable iff there is a natural number \nt of polynomial size w.r.t.\ the size of~$\phi_t$ and \mt s.t.\ $\phi_{t,S}$ has a solution in which at most \nt Venn regions are non-empty.
Moreover, since all formulae~$\phi_t$ are nearly of the same size (except for the difference between $\phi_\con$ and $\lnot\phi_\con$) and $|S|$ and the numbers~$n_v$ are bounded by~$M_\Tmc$, we can assume that the bound \nt is independent of the choice of $S$ and $t$, is polynomial w.r.t.\ the size of \Tmc and can be computed in polynomial time.

To formalize these additional restrictions, we consider \emph{bags}, i.e.\ functions~$V$ assigning to every Venn region $v$ of \Tmc a \emph{multiplicity} $V(v) \in \Nbb$, whose \emph{support} $\supp(V)$ is the set of Venn regions of \Tmc with multiplicity $V(v) \ge 1$.
The associated QFBAPA formula $\phi_V$ is the conjunction of the constraint $\Umc = \bigcup \supp(V)$ and all constraints $|v| \ge c$ where $v \in \supp(V)$ and $c = V(v)$.
\begin{definition}\label{dfn:venn-bag}
  A \emph{Venn bag} for a type~$t$ w.r.t.~\Tmc is a bag~$V$ of Venn regions of~\Tmc s.t. $|\supp(V)| \le \nt$, $V(v) \le \mt + 1$ holds for all $v \in \supp(V)$ and the QFBAPA formula $\phi_{t,V} := \phi_t \land \phi_V$ is satisfiable.
\end{definition}
By Lemma~\ref{lem:preliminaries:savior}, $\phi_{t,S}$ is satisfiable iff there is a Venn bag $V$ for $t$ such that $\phi_{t,V}$ includes all constraints from $\phi_{t,S}$.

Finally, we take care of actually satisfying the CD-restrictions occurring in a type by using complete constraint systems to describe all relevant feature values.
Feature values that are not represented in these systems correspond to undefined values.
To ensure that all types agree on the feature values of individual names, we fix an \emph{individual constraint system} \csi w.r.t.~\Ibb, i.e.\ a complete constraint system over variables of the form $f^{\ain}$, where $f\in\NF$ and $\ain \in \Ibb$, that refer to the feature values of named individuals.
Then, we define constraint systems $\cs_{t,V}$ representing the relations between the feature values associated with an individual of type~$t$ and those of its role successors as specified by a Venn bag~$V$ for~$t$.
The system $\cs_{t,V}$ extends $\csi$ by adding variables of the form
\begin{itemize}
  \item $f^{\star}$, representing the value of the feature~$f\in\NF$ at the current individual;
  \item $f^{(v,j)}$ with $v \in \supp(V)$ and $1 \le j \le V(v)$ for the $f$-values at the successors, in order to express the relevant CD-restrictions.
\end{itemize}
Again, not all these variables actually need to occur in the constraint system, only the ones whose associated feature values should be defined.
To handle named types and named Venn regions, we define the indexing functions
\[
  \iota(t) := \begin{cases}
    \star & \text{if $t$ is anonymous} \\
    \ain_t & \text{otherwise}
  \end{cases}
  \qquad
  \text{and }\iota((v,j)) := \begin{cases}
    (v,j) & \text{if $v$ is anonymous} \\
    \ain_v & \text{otherwise}
  \end{cases}
\]
for all $v \in \supp(V)$ and $1 \le j \le V(v)$.
Additionally, we do not allow more variables of the form~$f^\ain$ than those already contained in~\csi.
\begin{definition}
  \label{dfn:local-system}
  Let $t$ be a type w.r.t.~\Tmc and $V$ a Venn bag for~$t$.
  A \emph{local system} for $t,V$
  is a complete constraint system $\cs_{t,V}$ that includes $\csi$ and no additional variables of the form $f^\ain$, $\ain\in\Ibb$, such that:
  \begin{enumerate}
    \item if $C := \exists p_1,\dots,p_k.P\in\Mmc$, then
    $C \in t$ iff $P(f_1^{x_1},\dotsc,f_k^{x_k}) \in \cs_{t,V}$ such that 
    \begin{equation*}
      x_{i} = 
      \begin{cases}
        \iota(t) &\text{if $p_{i}={f_i}$, or} \\
        \iota((v,j)) &\text{if $p_{i}=rf_{i}$, for some $1 \le j \le V(v)$ and $X_r \in v$};
      \end{cases}
    \end{equation*}
    \item for all set variables $X_{P(\alpha_1,\dots,\alpha_k)}$, all $v\in\supp(V)$, and $1\le j\le V(v)$ it holds that $X_{P(\alpha_1,\dots,\alpha_k)}\in v$ iff $P(f_1^{x_1},\dotsc,f_k^{x_k}) \in \cs_{t,V}$, where
    \begin{equation*}
      x_{i} = 
      \begin{cases}
        \iota(t) &\text{if $\alpha_{i}=f_i$, and} \\
        \iota((v,j)) &\text{if $\alpha_{i}=\nx f_i$}.
      \end{cases}
    \end{equation*}
  \end{enumerate}
\end{definition}
In the following definition, we denote with $S_v$ the subset of \Mmc that contains $C \in \Mmc$ if $X_C \in v$ and $\neg C \in \Mmc$ if $X_C \notin v$ (cf.\ \Cref{sec:preliminaries}).

\begin{definition}          
  \label{dfn:augmented-type}
  An \emph{augmented type} for~\Tmc is a tuple $\agt := (t,V,\cs_\agt)$, where $t$ is a type w.r.t.~\Tmc, $V$ is a Venn bag for~$t$, and $\cs_\agt$ is a satisfiable local system for~$t,V$.
  The \emph{root} of~$\agt$ is $\agroot(\agt) := t$.

  An augmented type $\agt'=(t',V',\cs_{\agt'})$ \emph{patches} $\agt$ at $(v,i)$, where $v\in\supp(V)$ and $1 \le i \le V(v)$, if $S_v \subseteq t'$ and the \emph{merged system} $\cs_{\agt} \merge{(v,i)} \cs_{\agt'}$ has a solution, where $\cs_{\agt} \merge{(v,i)} \cs_{\agt'}$ is obtained as the union of~$\cs_{\agt}$ and the result of replacing all variables in $\cs_{\agt'}$ as follows:
  \[\begin{array}{lll@{\quad}l}
    f^\star & \mapsto & f^{(v,i)} & \text{if $t'$ is anonymous}; \\
    f^{(w,j)} & \mapsto & {f^{(w,j)}}' & \text{for all anonymous $w\in\supp(V')$ and $1\le j\le V'(w)$}; \\
    f^\ain & \mapsto & f^\ain & \text{for all $\ain\in\Ibb$}.
  \end{array}\]
  A set of augmented types \Tbb \emph{patches} $\agt$ if, for all $v \in \supp(V)$ and $1 \le i \le V(v)$, there is a $\agt' \in\Tbb$ that patches $\agt$ at $(v,i)$.
\end{definition}
The merging operation identifies all features associated to $(v,i)$ in~$\cs_{\agt}$ with those associated to $t'$ in~$\cs_{\agt'}$, while keeping the remaining variables associated to anonymous individuals separate.
If $t'$ is not anonymous (and thus $\cs_{\agt'}$ contains no variable of the form $f^\star$) then the condition $S_v \subseteq t'$ ensures that $\ain_v = \ain_{\agt'}$, and thus the variable $f^{\iota((v,i))} = f^{\ain_v} = f^{\ain_{\agt'}}$ in $\cs_\agt$ is already identical to $f^{\iota(t')} = f^{\ain_{\agt'}}$ in~$\cs_{\agt'}$.

The augmented types are now used by \Cref{alg:type-elimination} to decide consistency of an \ALCOSCCD TBox via a type elimination approach. We show that \Cref{alg:type-elimination} is indeed sound and complete.

\begin{algorithm}[tb]
  \caption{Type elimination algorithm for \ALCOSCCD}
  \label{alg:type-elimination}
  \begin{algorithmic}[1]
    \Require An \ALCOSCCD TBox \Tmc
    \Ensure \textsc{consistent} if \Tmc is consistent, and \textsc{inconsistent} otherwise
    \State \textbf{guess} an individual type system \Ibb and an individual constraint system \csi
    \State \textbf{guess} augmented types $\agt_{\ain}=(t_{\ain},V_{\ain},\cs_{\ain})$ for $\ain \in \Ibb$ s.t. $t_\ain$ is named with $\ain$
    \State $\Tbb \gets \{ \agt = (t,V,\cs)\;\text{augmented type} \mid t \;\text{is anonymous} \} \cup \{ \agt_{\ain} \mid \ain \in \Ibb \}$
    \While{there is $\agt\in\Tbb$ that is not patched by \Tbb}
      $\Tbb\gets\Tbb\setminus\{t\}$
    \EndWhile
    \If{$\agt_{\ain}\in\Tbb$ for all $\ain \in \Ibb$}
      \Return{\textsc{consistent}}
    \Else{}
      \Return{\textsc{inconsistent}}
    \EndIf
  \end{algorithmic}
\end{algorithm}

\begin{lemma}\label{lem:soundness}
  If there is a run of \Cref{alg:type-elimination} that returns \textsc{consistent}, then \Tmc is consistent.
\end{lemma}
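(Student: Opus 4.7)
The plan is to construct a model $\Imc$ of $\Tmc$ by unravelling the surviving set $\Tbb$ of augmented types into a mostly tree-shaped interpretation whose nodes are labelled by augmented types. I would fix one distinguished root $d_\ain$ for every named augmented type $\agt_\ain = (t_\ain, V_\ain, \cs_\ain)$ and then expand the frontier as follows. Given a node $d$ labelled by $\agt = (t,V,\cs_\agt)$, for each $v \in \supp(V)$ and each $1 \le i \le V(v)$, the patching property (which holds because the while-loop of \Cref{alg:type-elimination} has terminated) guarantees an augmented type $\agt' \in \Tbb$ patching $\agt$ at $(v,i)$; if $\agt' = \agt_\ain$ is named, the corresponding successor is the already existing root $d_\ain$, otherwise a fresh anonymous node labelled by $\agt'$ is introduced. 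In addition, for every $v \in \supp(V)$ I would create as many further anonymous successors (by re-using patching types) as the solution of $\phi_{t,V}$ guaranteed by \Cref{dfn:venn-bag} and \Cref{lem:preliminaries:savior} demands in that Venn region. Role names are then interpreted so that, for every $d$, the distribution of $\ars^\Imc(d)$ over $\NR$ and over the Venn regions of $\Tmc$ matches this QFBAPA solution.

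Concept names and individual names are read off from the labels: $A^\Imc := \{d \mid A \in \agroot(\agt_d)\}$ and $a^\Imc := d_\ain$ whenever $a \in \ain \in \Ibb$. For feature names, I would form the union $\cs^\infty$ of all the local systems $\cs_\agt$ appearing at nodes of the construction, suitably renamed: all local systems share the variables $f^\ain$ (each of them contains $\csi$), and at each edge from $d$ to a successor $e$ introduced at position $(v,i)$ the renaming identifies $f^{(v,i)}$ at $d$ with $f^\star$ at $e$, exactly as in the merge $\merge{(v,i)}$ of \Cref{dfn:augmented-type}, while anonymous-successor variables at $e$ are renamed to be globally fresh. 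Every finite subsystem of $\cs^\infty$ involves only finitely many nodes, hence can be obtained by iterated merging along a finite subtree; since each step exactly satisfies the agreement condition of \AP (the shared variables are precisely $f^\ain$ and the renamed $f^{(v,i)}$, on which the systems coincide by construction and by completeness of local systems), \AP yields satisfiability at each step, and hence finite satisfiability of $\cs^\infty$. By homomorphism $\omega$-compactness, $\cs^\infty$ itself has a solution $h$, and I would set $f^\Imc(d) := h(f^{x_d})$ whenever the corresponding variable is present in $\cs^\infty$ and leave $f^\Imc(d)$ undefined otherwise.

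The proof is then finished by a routine induction on subconcepts in $\Mmc$ showing $t_\Imc(d) = \agroot(\agt_d)$ for every node $d$, which immediately implies $C^\Imc \subseteq D^\Imc$ for all CIs $C \sqsubseteq D \in \Tmc$ since CIs are baked into the definition of a type. Boolean and nominal cases are immediate; CD-restrictions and feature roles occurring as set variables inside $\suc$-restrictions are handled using conditions (1) and (2) of \Cref{dfn:local-system} together with the fact that $h$ solves $\cs^\infty$; ordinary $\suc$-restrictions hold because, by construction of the role and concept interpretations, the QFBAPA assignment $\sigma_d$ of \Cref{sec:dls} satisfies $\phi_t$. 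The main obstacle is the construction of $h$: one has to orchestrate the variable renamings with enough care that \AP applies at every patching step and then invoke homomorphism $\omega$-compactness to jump from finite satisfiability to a global solution of the countable system $\cs^\infty$; this is precisely where the full strength of \expadm-ness of \cDom is used.
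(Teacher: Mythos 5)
Your construction follows the paper's proof essentially step for step: unravel the surviving set $\Tbb$ into a forest rooted at the named elements, use the Venn-bag solutions of $\phi_{t,V}$ to fix the number of successors per Venn region, glue the local constraint systems along the patching edges via \AP, pass to a global solution by homomorphism $\omega$-compactness, and finish with a structural induction showing $t_\Imc(d)=\agroot(\agt_d)$.

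There is, however, one concrete gap in the feature-value bookkeeping. A Venn bag only records multiplicities $V(v)\le\mt+1$, but the solution of $\phi_{t,V}$ may demand $\sigma(|v|)>V(v)$ successors in region~$v$, and you rightly create these surplus anonymous successors. Your renaming scheme, though, identifies $f^\star$ at a successor created at position $(v,i)$ with the variable $f^{(v,i)}$ of the parent's local system --- and for $i>V(v)$ no such variable exists in $\cs_\agt$. As written, the feature values of the surplus successors are therefore unconstrained relative to the parent (and to each other), which breaks the induction in two places: a negated CD-restriction $\lnot\exists\, r f,\dots.P$ at the parent can be violated by a surplus successor whose $f$-value happens to complete a tuple in $P^D$, and the requirement that a successor in region $v$ satisfies exactly the feature roles $\gamma$ with $X_\gamma\in v$ (needed for the $\suc$-restriction case) is no longer guaranteed. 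The paper closes this by first extending $\cs_{t,V}$ with fresh variables $f^{(v,i)}$ for $V(v)<i\le\sigma(|v|)$ that duplicate every constraint on $f^{(v,V(v))}$, so that each surplus successor is forced to relate to the parent exactly as the representative at position $(v,V(v))$ does; satisfiability of the enlarged system is preserved by mapping the new variables to the value of $f^{(v,V(v))}$. With that addition your argument goes through; without it the step ``the QFBAPA assignment $\sigma_d$ satisfies $\phi_t$, hence $\suc$-restrictions hold'' is not justified, because membership of a successor in the set assigned to $X_\gamma$ is a semantic condition on feature values, not something you can stipulate when defining the roles.
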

\begin{proof}
  We construct a model $\Imc$ of $\Tmc$
  using the individual type system \Ibb and the set \Tbb of augmented types constructed by \Cref{alg:type-elimination}.
  The domain $\Delta^{\Imc}$ consists of tuples $(\ain,w)$, where $\ain \in \Ibb$ and $w$ is a word over the alphabet $\Sigma$ of all tuples $(\agt, v, i)$ with $\agt \in \Tbb$, $v$ a Venn region of \Tmc and $i \ge 1$ a natural number.
  We associate to each tuple $(\ain,w)$ the augmented type $\wend(\ain,w) \in \Tbb$ defined as $\wend(\ain,\varepsilon) := \agt_{\ain}$ and $\wend(\ain,w' \cdot (\agt,v,i)) := \agt$ for $w' \in \Sigma^{*}$.
  
  We define $\Delta^{\Imc}$ as the union of sets $\Delta^m$ with $m \in \Nbb$, where $\Delta^0$ contains $(\ain,\varepsilon)$ for every $\ain \in \Ibb$ and $\Delta^{m+1}$ is defined inductively in the following.
  Given $(\ain,w) \in \Delta^m$ with $\wend(\ain,w) = \agt = (t,V,\cs_{t,V}) \in \Tbb$ we observe that
  \begin{itemize}
    \item the QFBAPA formula $\phi_{t}$ has a solution $\sigma_{\ain,w}$ such that $\sigma_{\ain,w}(|v|) \ge V(v)$ if $v \in \supp(V)$ and $\sigma_{\ain,w}(|v|) = 0$ otherwise for all Venn regions $v$ of~\Tmc,
    \item for $v \in \supp(V)$ and $i = 1,\dotsc,V(v)$ there exists an augmented type $\agt_{(v,i)}\in \Tbb$ patching $\agt$ at $(v,i)$, as otherwise $\agt$ would have been eliminated from \Tbb.
  \end{itemize} 
  Using these augmented types, we define for $r \in \NR$ the set $\Delta_r^{m+1}[\ain,w]$ containing $(\ain,w \cdot (\agt_{(v,i)},v,j))$ iff $X_r \in v$, $\agroot(\agt_{(v,i)})$ is anonymous, $j = 1,\dotsc,\sigma_{\ain,w}(|v|)$ and $i = \max(j,V(v))$.
  We now define $\Delta^{m+1}$ as the extension of $\Delta^{m}$ by all sets $\Delta_r^{m+1}[\ain,w]$ for which $r \in \NR$ and $(\ain,w) \in \Delta^m$.

  The extensions of $a \in \NI$, $A \in \NC$ and $r \in \NR$ are defined as:
  \begin{align*}
    a^{\Imc} &:= (\ain,\varepsilon)\text{, where $\ain$ is the unique individual type in $\Ibb$ containing $a$;} \\
    A^{\Imc} &:= \{ (\ain,w) \in \Delta^{\Imc} \mid \wend(\ain,w) = \agt \;\text{and}\; A \in \agroot(\agt) \}; \\
    r^{\Imc} &:= \{ ((\ain,w),(\mathfrak{b},\varepsilon)) \mid  \wend(\ain,w) = \agt \;\text{and $\agt_{\mathfrak{b}}$ patches $\agt$ at $(v,i)$ with $X_r\in v$} \} \cup{} \\
    &\phantom{{}:={}} \{ ((\ain,w),(\ain,w')) \mid (\ain,w) \in \Delta^m \;\text{and}\; (\ain,w') \in \Delta_r^{m+1}[\ain,w] \;\text{with}\; m \in \Nbb \}.
  \end{align*}
  For $f \in \NF$, $f^{\Imc}$ is defined as follows.
  If $(\ain,w) \in \Delta^{\Imc}$ and $\wend(\ain,w) = (t,V,\cs_{t,V})$, we extend $\cs_{t,V}$ with all variables $f^{(v,i)}$ where $i > V(v)$ and $f \in \NF$ such that $f^{(v, V(v))}$ occurs in $\cs_{t,V}$ and $(\ain, w \cdot (\agt',v,i))$ occurs in $\Delta^{\Imc}$.
  Then, we add all constraints $P((f_1^{x_1})',\dotsc,(f_k^{x_k})')$ obtained by replacing every occurrence of $f^{(v,V(v))}$ in a constraint $P(f_1^{x_1},...,f_k^{x_k}) \in \cs_{t,V}$ with some variable $f^{(v,i)}$ among those occurring in the extended system with $i \ge V(v)$.
  In this way, the feature values of all role successors of $(\ain,w)$ are handled correctly w.r.t. one another and w.r.t. those of $(\ain,w)$.
  Next, we replace all variables $f^{\ain}$ with $f^{\ain,\varepsilon}$, all variables $f^{\star}$ with $f^{\ain,w}$ and all variables $f^{(v,i)}$ with $f^{\ain,u}$ where $u$ is the unique word of the form $w \cdot (\agt',v,i)$ for which $(\ain, u) \in \Delta^{\Imc}$.

  Let $\cs_{\ain,w}$ be the resulting complete constraint system and $\cs^m$ with $m \in \Nbb$ be the union of all $\cs_{\ain,w}$ with $(\ain,w) \in \Delta^m$. We show \inAppendix (\Cref{lem:soundness-features}) that
  for every $m\in\Nbb$, the constraint system $\cs^m$ has a solution.
  Let $\cs^{\Imc}$ be the union of all systems $\cs^m$ with $m \in \Nbb$.
  Every finite subsystem of $\cs^{\Imc}$ is a subsystem of $\cs^m$ for some $m \in \Nbb$ and is thus satisfiable.
  Thus, by homomorphism $\omega$-compactness of \cDom, we can infer that $\cs^{\Imc}$ has a solution $h^{\Imc}$.
  For every $f\in\NF$ and $(\ain,w) \in \Delta^{\Imc}$, we now define $f^{\Imc}((\ain,w)) := h^{\Imc}(f^{\ain,w})$ if $f^{\ain,w}$ occurs in $\cs^{\Imc}$ and leave it undefined otherwise.

  We show \inAppendix (\Cref{lem:soundness-concepts}) that for all $d = (\ain,w) \in \Delta^{\Imc}$ and $C \in \Mmc$, we have $C \in \agroot(\wend(d))$ iff $d \in C^{\Imc}$.
  It is a direct consequence that $\Imc$ satisfies all CIs in \Tmc and is thus a model of $\Tmc$.
  Hence, \Tmc is consistent.
  \qed
\end{proof}

\begin{lemma}\label{lem:completeness}
  If \Tmc is consistent, then there is a run of \Cref{alg:type-elimination} that returns \textsc{consistent}.
\end{lemma}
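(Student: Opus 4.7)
The plan is to extract from an arbitrary model~$\Imc$ of~$\Tmc$ the data needed in lines~1--3 of \Cref{alg:type-elimination}, and then show that the resulting initial~$\Tbb$ contains an augmented type $\agt_d$ for every $d\in\Delta^{\Imc}$ such that each $\agt_d$ is patched within~$\Tbb$; this immediately ensures that no $\agt_\ain$ is eliminated and the algorithm returns \textsc{consistent}.

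First I would set $\Ibb := \{\ain_\Imc(d)\mid d\in\Delta^\Imc,\ \ain_\Imc(d)\ne\emptyset\}$, which partitions~$\NI$, and let $d_\ain\in\Delta^\Imc$ denote the unique element of individual type~$\ain$. For~$\csi$ I would take the (unique, by \JEPD) complete constraint system over variables $\{f^\ain\mid f^\Imc(d_\ain)\text{ defined}\}$ whose constraints hold when each $f^\ain$ is evaluated as~$f^\Imc(d_\ain)$; this system is satisfiable via that very evaluation.

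Next, for each $d\in\Delta^\Imc$ I would construct $\agt_d=(t_d,V_d,\cs_d)$ as follows. Set $t_d := t_\Imc(d)$, and choose a witness set $W_d$ of size at most~$\mt$ whose elements collectively realise every CD-restriction in~$t_d$ that involves role names; define $V_d(v)$ as the number of witnesses in~$W_d$ whose (unique) Venn region equals~$v$. The QFBAPA assignment~$\sigma_d$ induced by~$d$ satisfies $\phi_{t_d}$ together with the constraints $|v|\ge V_d(v)$, so \Cref{lem:preliminaries:savior} yields a solution of $\phi_{t_d,V_d}$ with $|\supp(V_d)|\le\nt$, making $V_d$ a valid Venn bag. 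Enumerating the witnesses per Venn region as $e_{v,1},\dotsc,e_{v,V_d(v)}$, I build $\cs_d$ from~$\csi$ by adjoining variables $f^\star$ for each defined $f^\Imc(d)$ and $f^{(v,j)}$ for each defined $f^\Imc(e_{v,j})$ (renamed via $\iota$ when the respective individual is named), and closing under the unique \JEPD-dictated completion inherited from the concrete values in~$\Imc$. Conditions~(1)--(2) of \Cref{dfn:local-system} are immediate from this construction, and $\cs_d$ is satisfied by the very evaluation extracted from~$\Imc$. Finally, I set $\agt_\ain := \agt_{d_\ain}$; the nominal semantics ensures that $t_{d_\ain}$ is named with~$\ain$, so line~2 of \Cref{alg:type-elimination} can indeed guess these augmented types.

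It remains to show that every $\agt_d\in\Tbb_0:=\{\agt_d\mid d\in\Delta^\Imc\}\subseteq\Tbb$ is patched by~$\Tbb_0$. For $v\in\supp(V_d)$ and $1\le i\le V_d(v)$, take $\agt':=\agt_{e_{v,i}}$: the very choice of Venn region gives $S_v\subseteq t_{e_{v,i}}$, and a solution of $\cs_{\agt_d}\merge{(v,i)}\cs_{\agt'}$ is obtained by evaluating every variable at its corresponding concrete value in~$\Imc$. The main obstacle will be the bookkeeping around named successors: if $e_{v,i}$ is named, then $S_v\subseteq t_{e_{v,i}}$ forces $\ain_v=\ain_{e_{v,i}}$, so $\iota((v,i))=\ain_{e_{v,i}}$, and the renaming step in \Cref{dfn:augmented-type} automatically identifies the $f^{(v,i)}$-variables of~$\cs_{\agt_d}$ with the pre-existing $f^{\ain_{e_{v,i}}}$-variables of~$\cs_{\agt'}$ inherited from the shared~$\csi$; checking that these identifications are consistent, so that the merged system is well-defined and satisfied by the $\Imc$-evaluation, is the delicate part. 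Once verified, no $\agt_d$ is ever eliminated, so all $\agt_\ain$ survive and the algorithm returns \textsc{consistent}.
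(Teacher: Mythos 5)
Your overall strategy coincides with the paper's: extract $\Ibb$, $\csi$, and an augmented type for every domain element from a model $\Imc$, then show that these augmented types patch one another, so that none of them (in particular none of the $\agt_\ain$) is ever eliminated. The construction of $\csi$, the local systems from the concrete values in $\Imc$, and the patching argument --- including your observation that for named successors the condition $S_v\subseteq t'$ forces $\ain_v=\ain_{t'}$ and makes the variable identifications consistent --- all match the paper's proof.

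There is, however, a genuine gap in your construction of the Venn bag $V_d$. You define $V_d(v)$ as the number of CD-witnesses in $W_d$ whose Venn region is $v$, so $\supp(V_d)$ consists only of witness Venn regions. But $\phi_{V_d}$ contains the conjunct $\Umc=\bigcup\supp(V_d)$, which forces \emph{every} role successor into a witness Venn region. If some $\suc$-restriction in $t_d$ requires a successor in a Venn region occupied by no CD-witness (e.g.\ $\suc(|r\cap A|\ge 1)\in t_d$ where no witness lies in $r\cap A$), then $\phi_{t_d,V_d}$ is unsatisfiable and your $\agt_d$ is not an augmented type at all. Note also that \Cref{lem:preliminaries:savior}, applied to $\phi_{t_d}\land\bigwedge_v|v|\ge V_d(v)$, yields a solution of \emph{that} formula with few non-empty Venn regions, not a solution of $\phi_{t_d,V_d}$: the reduced solution may populate Venn regions outside $\supp(V_d)$. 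The fix --- which is exactly what the paper's \Cref{claim:completeness-bag} does --- is to let $\supp(V_d)$ be the full set of Venn regions that are non-empty under the reduced solution $\sigma_d'$, assigning multiplicity $1$ to each non-witness region in that set; by the second bullet of \Cref{lem:preliminaries:savior} each such region still corresponds to an actual successor $e$ of $d$ with $v=v_e$, which you then also need as a representative when verifying condition~2 of \Cref{dfn:local-system} (which ranges over \emph{all} $v\in\supp(V_d)$) and when patching $\agt_d$ at these extra regions.
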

\begin{proof}
  Let \Imc be a model of \Tmc and \Ibb be the individual type system that contains an individual type~$\ain$ iff $\ain = \ain_{\Imc}(d)$ for some $d \in \Delta^{\Imc}$  (cf.\ \Cref{dfn:individual-type}).
  Then \Ibb is well-defined, as every $a \in \NI$ is uniquely assigned to $a^\Imc \in \Delta^{\Imc}$.
  For $\ain\in\Ibb$, we denote by $\ain^\Imc$ the unique $d\in\Delta^\Imc$ with $\ain=\ain_\Imc(d)$.
  Further, let $T_\Imc:=\{t_\Imc(d) \mid d\in\Delta^\Imc \}$ be the set of all types that are realized in~\Imc.

  For each individual type $\ain \in \Ibb$, we define $t_\ain := t_{\Imc}(\ain^\Imc)$.
  Using \Ibb and $T_{\Imc}$, we build a constraint system $\csi$ and a set $\Tbb_{\Imc} := \{ \agt_{\Imc}(d) \mid d \in \Delta^{\Imc}\}$ of augmented types, containing unique types $\agt_{\ain}$ whose roots are named with $\ain \in \Ibb$.
  We define the individual constraint system $\csi$ over all variables $f^\ain$ with $f\in\NF$ and $\ain\in\Ibb$ for which $f^\Imc(\ain^\Imc)$ is defined, such that $P(f_1^{\ain_1},\dots,f_k^{\ain_k}) \in \csi$ iff $\big(f_1^\Imc(\ain_1^\Imc),\dots,f_k^\Imc(\ain_k^\Imc)\big)\in P^D$.
  Clearly, $\csi$ is complete.

  Next, we associate to each $d \in \Delta^{\Imc}$ an augmented type $\agt_{\Imc}(d) := (t_{\Imc}(d),V_d,\cs_d)$.
  If $e$ is a role successor of $d$, let $v_e$ be the Venn region of \Tmc whose variables $X_r$, $X_C$, $X_\gamma$ for role names~$r$, concepts~$C$ and feature roles~$\gamma$ satisfy the following:
  \begin{itemize}
    \item $X_r \in v_e$ iff $e$ is an $r$-successor of $d$; 
    \item $X_C \in v_e$ iff $C \in t_\Imc(e)$;
    \item $X_\gamma \in v_e$ iff $e \in \gamma^{\Imc}(d)$.
  \end{itemize}
  For every non-negated CD-restriction $\exists p_1,\dots,p_k.P\in t_\Imc(d)$, we can find values $c_i \in p_i^{\Imc}(d)$ for $i = 1,\dotsc,k$ such that $(c_1,\dotsc,c_k) \in P^D$.
  If $p_i=r_if_i$, this implies that there is $e_i \in r_i^{\Imc}(d)$ such that $f_i^\Imc(e_i)=c_i$.
  We collect all these successors of~$d$, which are at most \mt many distinct elements, in the set $\wit$.
  For $e \in \wit$, let $n_e$ be the number of elements $e' \in \wit$ such that $v_{e} = v_{e'}$.
  We show \inAppendix (\Cref{claim:completeness-bag}) that there is a Venn bag $V_d$ for $t_{\Imc}(d)$ w.r.t.~\Tmc such that $V_d(v_e) = n_e$ for all $e \in \wit$, and for all other $v\in\supp(V_d)$ we have $V_d(v)=1$ and there is a role successor $e\in\Delta^\Imc\setminus\wit$ of~$d$ such that $v=v_e$.

  It remains to define the local system~$\cs_d$.
  Consider the set~$X_d$ that contains $\iota(t_\Imc(d))$ (either $\star$ or $\ain_{t_\Imc(d)}$), all $\ain\in\Ibb$, and all pairs $(v,j)$ with anonymous Venn bags $v\in\supp(V_d)$ and $1\le j\le V_d(v)$ (cf.\ \Cref{dfn:local-system}).
  Let $\lambda_d$ be a bijection mapping every such $(v,j)$ to an anonymous successor~$e$ of $d$ satisfying $v_e = v$, such that $\lambda_d((v_e,j))\in\wit$ for all $e\in\wit$.
  Such a bijection exists due to \Cref{claim:completeness-bag}.
  We extend this bijection to~\Ibb by setting $\lambda_d(\ain):=\ain^\Imc$.
  Furthermore, we extend~$\lambda_d$ to~$\xi_d$ with $\xi_d(\iota(t_\Imc(d))):=d$.
  Then, $\xi_d$ is injective except for $\star$: if $d$ is its own role successor, it can happen that $\xi_d(\star)=d=\xi_d((v,j))$.
  We define the complete constraint system $\cs_d$ over variables $f^x$ with $f \in \NF$, $x \in X_d$, such that $P(f_1^{x_1},\dotsc,f_k^{x_k}) \in \cs_d$ iff $\big(f_1^\Imc(\xi_d(x_1)),\dots,f_k^\Imc(\xi_d(x_k))\big)\in P^D$ holds for all $x_1,\dotsc,x_k \in X_d$.
  If $f^\Imc(\xi_d(x))$ is undefined, then $f^x$ does not occur in~$\cs_d$.

  We show \inAppendix (\Cref{claim:completeness-local-system}) that $\cs_d$ is a satisfiable local system for $t_{\Imc}(d)$ and~$V_d$.
  Thus, $\mathfrak{t}_\Imc(d) = \big( t_\Imc(d), V_d, \cs_d \big)$ is an augmented type.
  Furthermore, we can show (cf.\ \Cref{claim:completeness-patching} \inAppendix) that every augmented type in $\Tbb_\Imc$ is patched in $\Tbb_\Imc$.
  If \Cref{alg:type-elimination} guesses $\Ibb$, $\csi$, and $\agt_\ain$, where $\ain\in\Ibb$, then the initial set \Tbb must contain $\Tbb_\Imc$, and no augmented type from $\Tbb_\Imc$ can ever be removed from~\Tbb. This shows that the augmented types $\{ \agt_\ain \mid \ain\in\NI \} \subseteq \Tbb_\Imc$ remain in \Tbb throughout the execution of the algorithm. Since the algorithm terminates, it thus returns \textsc{consistent}.
  \qed
\end{proof}
Because \Cref{alg:type-elimination} runs in exponential time, we obtain a matching upper bound to the \ExpTime-hardness inherited from $\ALC$. Indeed, as there are at most exponentially many individual type systems and polynomially many individual types in such a type system, all guesses can be implemented by enumerating all choices in exponential time. The main elimination procedure also runs in exponential time as the number of augmented types is exponentially bounded, and all required tests can be performed in exponential time, provided that $\cDom$ is \ExpTime-$\omega$-admissible. We thus obtain the following result.
\begin{restatable}{theorem}{ThmDecidability}
  \label{thm:decidability}
  Let \cDom be an \ExpTime-$\omega$-admissible concrete domain.
  Then, consistency checking in \ALCOSCCD is \ExpTime-complete.
\end{restatable}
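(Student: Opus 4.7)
The plan is to combine the correctness of \Cref{alg:type-elimination} established by \Cref{lem:soundness} and \Cref{lem:completeness} with a complexity analysis, noting that the \ExpTime lower bound is inherited from $\ALC$ consistency checking under general TBoxes, a well-known result. All remaining work lies in showing that \Cref{alg:type-elimination} can be implemented deterministically in exponential time, which I would do by enumerating all candidate guesses instead of nondeterministically choosing them.

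First I would bound the number of combinatorial objects the algorithm manipulates. Individual type systems $\Ibb$ are partitions of the polynomially-sized set $\NI$, of which there are at most $2^{O(|\NI|\log|\NI|)}$. An individual constraint system $\csi$ is a complete constraint system over the variables $f^\ain$ with $f\in\NF$ and $\ain\in\Ibb$; since $\cDom$ has a finite signature of bounded arity, there are polynomially many variables and only exponentially many such complete systems. Types $t$ are subsets of $\Mmc$, so exponentially many. A Venn bag $V$ has support of polynomial size ($|\supp(V)|\le\nt$) with polynomial multiplicities ($V(v)\le\mt+1$), so even though the Venn regions of $\Tmc$ are themselves exponentially many, each bag admits a polynomial-size encoding and only exponentially many bags exist. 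Local systems $\cs_{t,V}$ range over polynomially many variables $f^x$, and hence only exponentially many complete systems of the required shape exist. Altogether, the set of augmented types for $\Tmc$ is exponentially bounded.

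Next I would bound the cost of the basic tests. Verifying that a triple $(t,V,\cs)$ is an augmented type reduces to (i) the syntactic conditions of \Cref{dfn:type}, \Cref{dfn:venn-bag} and \Cref{dfn:local-system} (polynomial time), (ii) satisfiability of the QFBAPA formula $\phi_{t,V}$, which is in NP and thus in \ExpTime, and (iii) satisfiability of the complete constraint system $\cs$ over $\cDom$, which is in \ExpTime since $\cDom$ is \expadm. Checking whether $\agt'$ patches $\agt$ at $(v,i)$ similarly requires the polynomial syntactic check $S_v\subseteq t'$ and a single satisfiability test of the merged system $\cs_\agt\merge{(v,i)}\cs_{\agt'}$, which is again a polynomially-sized complete system over $\cDom$ and hence in \ExpTime. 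The elimination while-loop removes one augmented type per iteration, so it runs at most exponentially often; each iteration scans the exponentially many candidates in $\Tbb$, the polynomially many pairs $(v,i)$ per $\agt$, and at most exponentially many potential patching partners, each test costing at most exponential time. The overall running time therefore stays within \ExpTime.

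The step I expect to be the main obstacle is confirming that the parameters $\nt$ and $\mt$ truly are polynomial in $|\Tmc|$, so that every constraint system actually submitted to the CSP oracle is of polynomial size and its satisfiability is checkable in exponential (rather than doubly-exponential) time. These bounds have already been argued when introducing $\nt$ via \Cref{lem:preliminaries:savior} applied to $\phi_{t,V}$ and $\mt=R_\Tmc\cdot P_\Tmc$ from the set-up preceding \Cref{dfn:venn-bag}, so the analysis can be completed by gluing these observations together with the correctness of \Cref{alg:type-elimination}.
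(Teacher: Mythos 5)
Your proposal is correct and follows essentially the same route as the paper's proof: inherit the \ExpTime lower bound from \ALC, derandomize \Cref{alg:type-elimination} by enumerating all guesses, bound the number of augmented types exponentially via the polynomial bounds on Venn bags and local systems, and observe that each satisfiability test (QFBAPA and $\CSP(\cDom)$) on polynomially sized objects runs in exponential time, so the elimination loop terminates within \ExpTime. The concern you flag about $\nt$ and $\mt$ being polynomial is indeed settled by the discussion preceding \Cref{dfn:venn-bag}, exactly as you suggest.
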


\section{Reasoning with ABoxes}
\label{sec:constants}

In DLs, a TBox is often complemented by an \emph{ABox} containing \emph{concept assertions} $C(a)$ and \emph{role assertions} $r(a,b)$, where $a,b\in\NI$, $r\in\NR$, and $C$ is a concept, with the obvious semantics.
In our DL, those assertions can be expressed in the TBox using nominals~\cite{baader_horrocks_lutz_sattler_2017}.
In the presence of a concrete domain, however, we may want to use additional kinds of assertions: \emph{predicate assertions} $P(f_1(a_1),\dotsc,f_k(a_k))$ with $f_i \in \NF$, $a_i \in \NI$, $i = 1,\dotsc,k$, and a $k$-ary predicate~$P$ of~\cDom, and \emph{feature assertions} $f(a,c)$ with $f\in\NF$, $a\in\NI$, and a constant $c\in D$.
The former requires every model~\Imc to satisfy $(f_1^{\Imc}(a_1^{\Imc}),\dotsc,f_k^{\Imc}(a_k^{\Imc})) \in P^D$, and the latter states that $f^\Imc(a^\Imc)=c$.

Using predicate assertions, we can rewrite the TBox $\Tmc_{\mathsf{ex}}$ in~\Cref{sec:dls} into a single, intuitive assertion $\ex{salary}(\texttt{Sam}) < \ex{salary}(\texttt{Jane})$.
This also demonstrates how predicate assertions can be simulated by CIs: instead of $P(f_1(a_1),\dotsc,f_k(a_k))$, we can use $\top \sqsubseteq \suc(\mathsf{ref}_{a_i} = \{a_i\})$ for $i = 1,\dotsc,k$ and $\top \sqsubseteq \exists \mathsf{ref}_{a_1} f_1, \dotsc, \mathsf{ref}_{a_k} f_k. P$.

On the other hand, with feature assertions, we can give specific values and state, for instance, that \texttt{Sam}'s salary is 100,001\;\euro{} with $\ex{salary}(\texttt{Sam},\,\text{100,001})$.
Feature assertions seemingly increase the expressivity, since we can use them to refer to constant values.
However, we first have to specify how these constants are actually encoded.
For the following results, we consider \emph{concrete domains~\cDom with constants}, which extend concrete domains with an encoding for arbitrary constants $c\in D$ and constraint systems that can use such constants in addition to variables.
For \expadm concrete domains with constants, we also require the extended CSP(\cDom) to be decidable in exponential time.
In particular, the main known examples of \expadm concrete domains ($\stru{Q}$, Allen's relations, and RCC8) satisfy this requirement under the reasonable assumptions that all numbers are given as integer fractions in binary encoding and the constants in RCC8 refer to polygonal regions in the rational plane~\cite{Jo18,LiLiWa13}.
We can now use this encoding to represent the constants in feature assertions $f(a,c)$.

Unfortunately, we cannot directly use constants in constraints to extend \Cref{alg:type-elimination} to support feature assertions, since this would result in infinitely many possible local constraint systems.
Another idea to deal with feature assertions is that,
if \cDom has \emph{singleton predicates}~$=_c$ with $(=_c)^D = \{c\}$, then one can express $f(a,c)$ by $\{a\} \sqsubseteq \exists f.{=_c}$.
Since an $\omega$-admissible concrete domain \cDom has a finite signature, however, this only works for a fixed, finite set of values $c \in D$.
Due to the \JD and \JEPD conditions, it turns out that feature assertions are actually equivalent to \emph{additional singleton predicates}~$=_c$ that are not part of~\cDom, but can be used in concepts with the same semantics as defined above.%
\footnote{The results in this section also hold for \ALCD, since they can be shown without $\suc$-restrictions, nominals, or restricting to finitely-branching interpretations~\cite{DBLP:conf/dlog/BorgwardtBK24}.}

\begin{restatable}{lemma}{LemFeatureAssertions}\label{lem:feature-assertions}
  For \ALCOSCCD with an $\omega$-admissible concrete domain~$\cDom$ with constants, the following problems are reducible to each other:
  (a)~consistency with additional singleton predicates, and (b)~consistency with feature assertions.
  The reductions take exponential time, but produce ontologies of polynomial size.
\end{restatable}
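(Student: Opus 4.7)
The plan is to establish the two reductions separately, since the lemma is a bi-implication.

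For direction (b) to (a), I would collect the finite set of constants $c$ that appear in some feature assertion $f(a,c)$, admit $=_c$ as an additional singleton predicate for each such $c$, and replace each feature assertion $f(a,c)$ by the concept inclusion $\{a\}\sqsubseteq \exists f.{=_c}$. Because $f^\Imc$ is a partial function and $(=_c)^D=\{c\}$, this CI precisely captures $f^\Imc(a^\Imc)=c$, so consistency is preserved. The resulting ontology is of polynomial size and the construction is routine.

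For direction (a) to (b), my plan is to replace each singleton predicate $=_c$ by a reference to a named individual whose feature value is $c$, and then to simulate equality with $c$ using the \JD formula $\phi_=$. Concretely, for each $c$ with $=_c$ occurring in the input, I would introduce a fresh individual $a_c$, a fresh feature $f_c$, and a fresh role $r_c$, and add the feature assertion $f_c(a_c,c)$ together with the CI $\top\sqsubseteq \suc(r_c=\{a_c\})$, forcing $r_c^\Imc(d)=\{a_c^\Imc\}$ for every $d\in\Delta^\Imc$. I would then rewrite every occurrence of $\exists f.{=_c}$ into a concept $D_{f,c}$ that mirrors $\phi_=(x,y)$: each atom $P(v_1,\dots,v_k)$ with $v_i\in\{x,y\}$ becomes the CD-restriction $\exists p_{v_1},\dots,p_{v_k}.P$, where $p_x = f$ and $p_y = r_c\,f_c$, and the Boolean connectives in $\phi_=$ are reflected by $\sqcap$, $\sqcup$ and $\neg$ at the concept level. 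Correctness is then immediate from the defining property of $\phi_=$: $d\in D_{f,c}^\Imc$ iff $f^\Imc(d)=f_c^\Imc(a_c^\Imc)=c$, which matches the semantics of $\exists f.{=_c}$.

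The main obstacle will be to push the \JD translation through cleanly when $\phi_=$ contains atoms of arity greater than two that mix $x$ and $y$ in arbitrary positions: each such atom must be encoded as a $k$-ary CD-restriction in which the $i$-th path is chosen according to whether $v_i$ is $x$ or $y$, and one must verify via \JEPD that the resulting CD-restriction faithfully captures the semantics of $P$ applied to $f^\Imc(d)$ and $f_c^\Imc(a_c^\Imc)$ in the intended positions. Since $\phi_=$ depends only on the fixed signature of $\cDom$, its contribution per occurrence is of constant size, so the resulting ontology is polynomial in the input. The exponential-time bound arises because normalising the fixed $\phi_=$ into an equivalent DL concept (and subsequently copying it into each replacement site together with the necessary subsumption constraints) can take time exponential in the structural complexity of $\phi_=$ and of~$\cDom$, even though this does not enlarge the final ontology beyond polynomial size.
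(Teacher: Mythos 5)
Your reduction from feature assertions to singleton predicates (replacing $f(a,c)$ by $\{a\}\sqsubseteq\exists f.{=_c}$) is exactly the paper's, and your high-level idea for the converse direction --- pin the value $c$ to a fresh feature via one feature assertion and then express ``$=c$'' through the \JD formula $\phi_=$ --- is also the paper's starting point. However, the mechanism you choose for making $c$ visible everywhere has a genuine soundness gap. You introduce a fresh role $r_c$ with $\top\sqsubseteq\suc(r_c=\{a_c\})$ and intend every element to have $a_c$ as an $r_c$-successor. In \ALCOSCCD, $\suc$-restrictions are evaluated over $\ars^\Imc(d)=\bigcup_{r\in\NR}r^\Imc(d)$, and set terms may mention $\Umc$ and complements. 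Forcing $a_c^\Imc$ into every element's successor set therefore changes the semantics of the \emph{original} $\suc$-restrictions: e.g.\ the TBox containing $A_1\sqsubseteq\suc(|\Umc\cap B^c|=0 \land |\Umc|\ge 1)$ and $A_2\sqsubseteq\suc(|\Umc\cap B|=0\land|\Umc|\ge 1)$ is consistent, but after your translation $a_c^\Imc$ would have to lie in both $B^\Imc$ and its complement whenever $A_1$ and $A_2$ are both nonempty, so equi-consistency fails. (Note also that, under the given semantics, $\suc(r_c=\{a_c\})$ is satisfied vacuously by $r_c^\Imc(d)=\emptyset$ whenever $a_c^\Imc\notin\ars^\Imc(d)$, so it does not by itself create the edge; if you weaken it this way, $D_{f,c}$ under-approximates $\exists f.{=_c}$ at elements lacking the edge.) The paper's proof deliberately avoids adding any role edges: it asserts $f_c(a,c)$ at the \emph{existing} named individuals, propagates the value of $f_c$ along the existing roles with CIs $\top\sqsubseteq C_{r,c}$ built from a satisfiable disjunct $\beta(c,y)$ of $\phi_=(c,y)$, compares $f$ with $f_c$ locally at each element, and in the converse model construction restricts to the part reachable from named individuals.

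Two further points you would need to address. First, an additional singleton predicate can also occur as $\exists rf.{=_c}$ and inside feature roles ${=_c}(f)$ or ${=_c}(\nx f)$ within $\suc$-restrictions; the paper normalizes these away first ($\exists rf.{=_c}\rightsquigarrow\suc(|r\cap\exists f.{=_c}|\ge1)$, ${=_c}(\nx f)\rightsquigarrow\exists f.{=_c}$ as a concept set variable, ${=_c}(f)\rightsquigarrow$ a set term $\gamma_{f=c}$). Second, because features are \emph{partial}, mirroring a negated atom of $\phi_=$ by concept negation $\neg\exists p_1,\dots,p_k.P$ is unsound (the negated CD-restriction is also satisfied when a feature value is undefined); the paper first rewrites $\phi_=$ into a disjunction of conjunctions of \emph{positive} atoms using \JEPD and the finiteness of the signature, and the exponential running time comes precisely from identifying the correct satisfiable disjunct $\beta(c,y)$ via CSP checks with constants, not from normalizing $\phi_=$ itself.
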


Additionally, feature assertions can be expressed by \emph{predicate assertions} if \cDom is \emph{homogeneous}, i.e.\ such that every isomorphism between finite substructures of~$\cDom$ can be extended to an isomorphism from~$\cDom$ to itself~\cite{BaRy22}.
All known $\omega$-admissible concrete domains are homogeneous~\cite{BaRy22}.

\begin{restatable}{lemma}{LemHomogeneous}\label{lem:homogeneous}
  For \ALCOSCCD with an $\omega$-admissible and homogeneous concrete domain~\cDom with constants, consistency with feature assertions can be reduced to consistency without feature assertions in exponential time. The resulting ontology is of polynomial size.
\end{restatable}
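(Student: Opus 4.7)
The plan is to reduce feature assertions to predicate assertions by exploiting homogeneity, and then eliminate the predicate assertions using the TBox encoding sketched at the start of Section~\ref{sec:constants}. Let the feature assertions of the given ontology be $f_1(a_1,c_1),\dots,f_n(a_n,c_n)$ with constants $c_1,\dots,c_n\in D$. My first step would be, for each $k$-ary predicate~$P$ of~\cDom and each index tuple $(i_1,\dots,i_k)\in\{1,\dots,n\}^k$, to decide whether $(c_{i_1},\dots,c_{i_k})\in P^D$ by a single call to the extended $\CSP(\cDom)$ with constants; whenever the answer is positive, I add the predicate assertion $P(f_{i_1}(a_{i_1}),\dots,f_{i_k}(a_{i_k}))$ to the reduced ontology. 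Because the signature of~\cDom is finite and its maximal arity is a constant, only polynomially many predicate assertions are generated; each of them is in turn rewritten into a constant number of CIs as described in Section~\ref{sec:constants}. The resulting TBox $\Tmc'$ has polynomial size, and the polynomially many exponential-time CSP queries keep the overall reduction time within \ExpTime.

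For correctness, the easy direction is that any model~\Imc of the original ontology assigns $f_i^\Imc(a_i^\Imc)=c_i$, so \JEPD ensures that each newly introduced predicate assertion holds in~\Imc and hence $\Imc\models\Tmc'$. The interesting direction is to turn a model~$\Imc'$ of~$\Tmc'$ into a model of the original ontology. Let $b_i:=f_i^{\Imc'}(a_i^{\Imc'})$; these values are defined because each $b_i$ appears in at least one of the added predicate assertions. By \JEPD, the complete atomic diagram of the tuple $(b_1,\dots,b_n)$ in~\cDom is fully determined by the set of predicate assertions in $\Tmc'$ that hold in~$\Imc'$, and by construction this diagram coincides with that of $(c_1,\dots,c_n)$; invoking the equality-defining formula provided by \JD, we furthermore get $b_i=b_j$ iff $c_i=c_j$. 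Therefore the substructures of~\cDom generated by $\{b_1,\dots,b_n\}$ and $\{c_1,\dots,c_n\}$ are isomorphic via the map $b_i\mapsto c_i$. Homogeneity of~\cDom then extends this finite partial isomorphism to an automorphism $\pi\colon\cDom\to\cDom$ with $\pi(b_i)=c_i$ for all~$i$.

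The last step is to define the new interpretation~\Imc by copying $\Delta^{\Imc'}$ and the interpretations of all concept, role, and individual names, and setting $f^\Imc(d):=\pi(f^{\Imc'}(d))$ whenever $f^{\Imc'}(d)$ is defined. Since $\pi$ preserves every predicate of~\cDom, the extensions of CD-restrictions and of all feature roles are unchanged from~$\Imc'$ to~\Imc; consequently every $\suc$-restriction has the same interpretation in both models, and all CIs of the original ontology remain satisfied. In particular, $f_i^\Imc(a_i^\Imc)=\pi(b_i)=c_i$, so the original feature assertions are satisfied. The main obstacle is ensuring that the atomic diagram of $(b_1,\dots,b_n)$ is really captured up to isomorphism, which forces us to rely not only on \JEPD but also on the \JD equality formula in order to justify the application of homogeneity; once that is in place, invariance of $\suc$-restrictions under $\pi$ follows because feature roles and CD-restrictions are both defined purely through the predicates of~\cDom.
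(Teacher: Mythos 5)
Your proposal is correct and follows essentially the same route as the paper's proof: replace each feature assertion by the polynomially many predicate assertions determined by exponential-time CSP queries on the constants, then in the converse direction use \JEPD and \JD to obtain an isomorphism of finite substructures mapping the realized feature values $b_i$ to the constants $c_i$, extend it to an automorphism by homogeneity, and apply that automorphism to all feature values to recover a model of the original ontology. Your explicit remark that each $b_i$ is defined because it occurs in some added predicate assertion is a small detail the paper leaves implicit, but otherwise the two arguments coincide.
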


Together, \Crefrange{lem:feature-assertions}{lem:homogeneous} show that, under these conditions, we can freely use constant values (either in feature assertions or additional singleton predicates) in \ALCOSCCD, without increasing the complexity of reasoning.
The following result then follows together with \Cref{thm:decidability}.

\begin{theorem}\label{thm:homogeneous}
  If \cDom is an \expadm and homogeneous concrete domain with constants, then consistency in \ALCOSCCD with feature assertions and additional singleton predicates is \ExpTime-complete.
\end{theorem}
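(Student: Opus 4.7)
The \ExpTime lower bound is immediate, since \ALCOSCCD subsumes \ALC and consistency in \ALC is already \ExpTime-hard, a property preserved under extensions of the logic and under the addition of feature assertions and singleton predicates.

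For the matching upper bound, the plan is to compose the reductions of \Cref{lem:feature-assertions} and \Cref{lem:homogeneous} with the decision procedure from \Cref{thm:decidability}. Given an input ontology that uses both feature assertions and additional singleton predicates, I would first apply \Cref{lem:feature-assertions} in the direction that replaces additional singleton predicates by feature assertions, producing in exponential time a polynomially larger ontology that uses only feature assertions and no additional singleton predicates. Since \cDom is homogeneous, I would then apply \Cref{lem:homogeneous} to this intermediate ontology, eliminating the feature assertions in exponential time and again with only polynomial blow-up. The result is a plain \ALCOSCCD TBox over~\cDom to which \Cref{thm:decidability} applies directly.

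To see that the whole procedure runs in \ExpTime, suppose the original input has size~$n$. After the first reduction the intermediate ontology has size polynomial in~$n$, and after the second reduction the final ontology still has size polynomial in~$n$. Although each reduction itself may take time $2^{\mathrm{poly}(n)}$, the final decision algorithm of \Cref{thm:decidability} runs in time $2^{\mathrm{poly}(n)}$ on its polynomial-size input, exploiting that \cDom is \expadm. Summing the three contributions gives a total running time that remains within \ExpTime.

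The only point needing care is the bookkeeping required for the composition: one has to check that the output format of \Cref{lem:feature-assertions} (an \ALCOSCCD ontology with feature assertions) matches the input format of \Cref{lem:homogeneous}, that no additional singleton predicates are re-introduced by either reduction, and that the polynomial size bound on the intermediate ontology is what propagates the \ExpTime bound through the composition despite each individual reduction running in exponential time. Beyond this verification, no further argument is required.
\qed
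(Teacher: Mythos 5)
Your proof is correct and follows essentially the same route as the paper: the lower bound is inherited from \ALC, and the upper bound is obtained by chaining \Cref{lem:feature-assertions} (eliminating singleton predicates in favour of feature assertions) with \Cref{lem:homogeneous} (eliminating feature assertions) and then invoking \Cref{thm:decidability}, with the key observation that the exponential-time reductions produce only polynomial-size ontologies so the overall bound stays in \ExpTime. The bookkeeping caveat you raise at the end is exactly the point the paper also relies on implicitly, so nothing is missing.
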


\section{Undecidable Extensions}
\label{sec:undecidability}
To conclude our investigations, we show that several extensions of \ALCOSCCD, inspired by existing DLs or obtained by seemingly harmless tweaks to the syntax and semantics, are undecidable.
Hereafter, we assume that the domain set of \cDom is infinite and that \cDom is \JD (cf.\ \Cref{sec:preliminaries}).
In this case, we allow w.l.o.g. the usage of set terms $\eqterm{f}{\nx g}$ expressing equality of the values assigned to $f$ and $\nx g$ (we detail the construction of this term \inAppendix).

\paragraph{Comparing set cardinalities and feature values.}
If \cDom is a numerical concrete domain where $D$ is either \Nbb, \Zbb or \Qbb, it is natural to consider comparisons between feature values of an individual $d$ and the cardinalities of sets of role successors of $d$.
For example, we could describe individuals whose age is twice the number of their children using the concept $\suc( \mathsf{age} = 2 \cdot |\mathsf{child}|)$.
This could be achieved by allowing $\suc$-restrictions to contain \emph{mixed numerical constraints} $f = \ell$, where $\ell$ is a PA expression (cf.~\Cref{sec:preliminaries}) that is allowed in \ALCOSCCD and $f \in \NF$; then, we extend $\cdot^\Imc$ by defining $d \in \suc(f = \ell)^{\Imc}$ iff $f^{\Imc}(d) = \sigma_d(\ell)$.
Unfortunately, for the CDs considered here, this leads to undecidability, which can be shown by a reduction to
$\ALC(\cDom)$ with the concrete domain $\cDom=(\Nbb,+_1)$ where $+_1$ is the successor relation, which is known to be undecidable~\cite{BaRy22}.
\begin{theorem}
    If \cDom is a numerical concrete domain that is \JD, then consistency of \ALCOSCCD TBoxes with mixed numerical constraints is undecidable.
\end{theorem}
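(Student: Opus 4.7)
The plan is to reduce the consistency problem of $\ALC(\cDom')$ TBoxes with $\cDom' := (\Nbb, {+_1})$, which is known to be undecidable~\cite{BaRy22}, to consistency of $\ALCOSCCD$ TBoxes extended with mixed numerical constraints over the given numerical~$\cDom$. Given a TBox $\Tmc$ over~$\cDom'$, I would build in polynomial time an $\ALCOSCCD$ TBox $\Tmc'$ such that $\Tmc$ is consistent iff $\Tmc'$ is, thereby transferring undecidability.

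The heart of the reduction is the simulation of each CD-concept $\exists f, g.\,{+_1}$, whose semantics requires $f^\Imc(d), g^\Imc(d) \in \Nbb$ to be defined with $g^\Imc(d) = f^\Imc(d) + 1$, by means of the new mixed numerical constraints. For each feature~$f$ occurring in~$\Tmc$ I introduce a fresh concept name $D_f$ (witnessing ``$f$ is defined''), a fresh role name $r_f \in \NR$, and a fresh concept name $A_f \in \NC$, and add the global CI
\[
  D_f \sqsubseteq \suc\!\bigl(f = |r_f \cap A_f|\bigr),
\]
which pins $f^\Imc(d)$ to the cardinality $|r_f \cap A_f|^\Imc(d)\in\Nbb$ at every $d \in D_f^\Imc$. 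Every occurrence of $\exists f, g.\,{+_1}$ in~$\Tmc$ is then replaced by
\[
  D_f \sqcap D_g \sqcap \suc\!\bigl(g = |r_f \cap A_f| + 1\bigr),
\]
which together with the previous CI enforces $g^\Imc(d) = f^\Imc(d) + 1$. The right-hand sides are valid PA expressions, so these are legitimate mixed numerical constraints in the extended logic. CD-concepts with longer feature paths such as $\exists rf, sg.\,{+_1}$ can be handled analogously by threading the encoding through appropriate existential role restrictions.

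For equi-consistency, any $\cDom'$-model~$\Imc$ of $\Tmc$ is extended to a $\cDom$-model of $\Tmc'$ by interpreting $D_f$ as $\{d \mid f^\Imc(d) \text{ defined}\}$ and adding $f^\Imc(d)$ fresh anonymous $r_f$-successors in $A_f$ at each $d \in D_f^\Imc$; finite branching is preserved since $f^\Imc(d) \in \Nbb$. Conversely, from any $\cDom$-model~$\Imc'$ of $\Tmc'$ one recovers a $\cDom'$-model of $\Tmc$ by setting $f^\Imc(d) := f^{\Imc'}(d)$ precisely on $D_f^{\Imc'}$ (where the value is forced into $\Nbb$ by our CI) and leaving~$f$ undefined elsewhere. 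The main obstacle I anticipate is showing that the replacement concept has exactly the same extension as $\exists f, g.\,{+_1}$ under this translation, so that positive and negative CIs transfer faithfully. This is ensured by the $D_f$-guarded replacement: an individual~$d$ satisfies the replacement in~$\Imc'$ iff $d \in D_f^{\Imc'} \cap D_g^{\Imc'}$ with $g^{\Imc'}(d) = f^{\Imc'}(d) + 1$, which matches exactly the extension of $\exists f, g.\,{+_1}$ in the derived~$\Imc$.
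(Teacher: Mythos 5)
Your core gadget for path-free CD-restrictions---pinning a feature value to the cardinality of a set of role successors via $\suc(f=|r_f\cap A_f|)$ and expressing $g=f+1$ by reusing that cardinality plus one---is sound, and it is essentially the same idea as the paper's, which uses a single fresh concept name $S$ per CD-restriction together with the constraints $\suc(f_0=|S|)$ and $\suc(f_1=|S|+1)$.

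The genuine gap is the feature-path case, which you dismiss with ``can be handled analogously by threading the encoding through appropriate existential role restrictions.'' This case is not optional: if all CD-restrictions of the source TBox were path-free, the concrete-domain constraints would be local to each individual and consistency in $\ALC(\Nbb,{+_1})$ would reduce to plain \ALC plus independent finite CSPs, hence be decidable. The undecidability you import from~\cite{BaRy22} lives entirely in restrictions of the form $\exists r_0f_0, r_1f_1.{+_1}$ that compare feature values \emph{across} role edges. Your proposed mechanism does not handle these: an existential restriction $\exists r.C$ evaluates $C$ at the successor $e$, where the PA expression $|r_f\cap A_f|$ now counts successors of $e$, so all reference to the cardinality fixed at $d$ is lost, and mixed numerical constraints give you no way to relate a cardinality computed at $d$ to a feature value stored at $e$. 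The paper closes exactly this hole by introducing an auxiliary feature $f_i'$ at $d$ with $\suc(f_i'=|S|+i)$ and then requiring $\suc(|r_i\cap(f_i'=\nx f_i)|\ge 1)$, where $(f_i'=\nx f_i)$ is the equality feature role obtained from the \JD hypothesis. It is telling that your proposal never invokes \JD even though it appears in the statement of the theorem---that unused hypothesis is the symptom of the missing step.
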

\begin{proof}
    We force $r \in \NR$ to be functional with the CI $\top \sqsubseteq \suc(|r| \le 1)$.
    We encode the CD-restriction $C := \exists p_0, p_1.{+_1}$ using $C_0\sqcap C_1$, where
    \begin{equation*}
        C_i := \begin{cases}
            \suc(f_i = |S| + i) & \text{if $p_i = f_i$} \\ \suc(f_i' = |S| + i) \sqcap \suc(|r_i \cap (f_i' = \nx f_i)| \ge 1) & \text{if $p_i = r_i f_i$}.
        \end{cases}
    \end{equation*}
    for $i = 0,1$, with fresh names $S \in \NC$, $f_i' \in \NF$.
    \qed
\end{proof}

\paragraph{Local and global cardinality constraints.}
It is possible to extend \ALCSCC by replacing $\suc$-restrictions, ranging over sets of role successors, with \emph{$\sat$-restrictions} $\sat(\con)$ ranging over the whole domain of an interpretation.
For the resulting DL, called \ALCSCCpp, the consistency problem is \NExpTime-complete~\cite{BaBeRu20}.
In this DL, we can state that an individual likes \emph{all} existing cars using the concept $\sat(\mathsf{likes} \cap \mathsf{Car} = \mathsf{Car})$; in contrast, $\suc(\mathsf{likes} \cap \mathsf{Car} = \mathsf{Car})$ describes an individual that likes all cars to which it is related by some role.

If we consider the DL $\ALCSCCpp(\cDom)$ obtained by adding $\sat$-restrictions in the presence of concrete domains, then these restrictions may additionally contain feature roles.
For example, the concept $\sat(\top = (\mathsf{age} \ge \nx \mathsf{age}))$ describes the \emph{overall} oldest individuals, by saying that their age is greater or equal to those of all individuals, while $\suc(\top = (\mathsf{age} \ge \nx \mathsf{age}))$ describes individuals that are not younger than any individuals related to them by some role name.

Formally, both \ALCSCCpp and $\ALCSCCpp(\cDom)$ are evaluated over \emph{finite} interpretations.
In~\cite{BaBeRu20}, this has been used to show that the consistency problem for the extension of \ALCSCCpp with inverse roles is undecidable.
Similarly, we can use $\sat$-restrictions with feature roles to simulate multiplication of cardinalities of \emph{finite} sets, and thus reduce Hilbert's tenth problem~\cite{HilbertTenth} to the consistency of a $\ALCSCCpp(\cDom)$ TBox, provided that \cDom is \JD.
Writing $C \equiv D$ as a shorthand for $C \sqsubseteq D$ and $D \sqsubseteq C$, we can encode the equation $\efr = (x = y \cdot z)$ over integers as a product of cardinalities $|A_x^{\Imc}| = |A_y^{\Imc}| \cdot |A_z^{\Imc}|$, in three steps.
First, we enforce $r_{\efr}^{\Imc} = A_y^{\Imc} \times A_z^{\Imc}$ to hold with $A_y \equiv \sat(|r_{\efr}| \ge 1)$ and $A_y \equiv \sat(r_{\efr} = A_z)$; then, we enforce $|s_{\efr}^{\Imc}| = |A_x^{\Imc}|$ by adding $\top \sqsubseteq \sat(s_{\efr} = \eqterm{f_{\efr}}{\nx g_{\efr}})$ and the CIs 
\begin{equation*}
    \top \sqsubseteq \sat(|\eqterm{g_{\efr}}{\nx f_{\efr}}| \le 1) \;\text{and}\; A_x \sqsubseteq \sat(|\eqterm{g_{\efr}}{\nx f_{\efr}}| \ge 1).
\end{equation*}
Finally, we add $\top \sqsubseteq \sat(|r_{\efr}| = |s_{\efr}|)$, so that, for every finite model \Imc of all these CIs, $|A_x^{\Imc}| = |s_\efr^{\Imc}| = |r_\efr^{\Imc}| = |A_y^{\Imc} \times A_z^{\Imc}| = |A_y^{\Imc}| \cdot |A_z^{\Imc}|$ holds.
\begin{theorem}
    If the concrete domain \cDom is infinite and \JD, then the consistency problem for \ALCSCCppD TBoxes is undecidable.
\end{theorem}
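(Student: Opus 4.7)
The plan is to reduce Hilbert's tenth problem (H10) to consistency of $\ALCSCCppD$ TBoxes, building directly on the construction sketched just before the theorem. Since H10 is undecidable and $\ALCSCCpp(\cDom)$ is evaluated over finite interpretations, a reduction encoding integer values as cardinalities of concept extensions is natural.

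Given an instance of H10, I would first normalize it into a conjunction of atomic equations of the forms $x = n$, $x = y + z$, and $\efr = (x = y \cdot z)$, introducing auxiliary integer variables as needed to break up complex polynomials. For each integer variable $x$, I would add a fresh concept name $A_x$ with intended semantics $x = |A_x^\Imc|$ in a finite model~$\Imc$. Since concept names are interpreted globally inside $\sat$-restrictions, CIs of the form $\top \sqsubseteq \sat(|A_x| = n)$ and $\top \sqsubseteq \sat(|A_x| = |A_y| + |A_z|)$ directly enforce the desired constant and additive relations on the $|A_x^\Imc|$ using only standard QFBAPA arithmetic.

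Each multiplication equation $\efr$ is handled by the construction preceding the theorem: introduce fresh names $r_\efr, s_\efr, f_\efr, g_\efr$ and add the listed CIs, which progressively enforce $r_\efr^\Imc = A_y^\Imc \times A_z^\Imc$, then $|s_\efr^\Imc| = |A_x^\Imc|$, and finally $|r_\efr^\Imc| = |s_\efr^\Imc|$; chaining these identities yields $|A_x^\Imc| = |A_y^\Imc|\cdot|A_z^\Imc|$ in every finite model. The auxiliary equality set term $\eqterm{f_\efr}{\nx g_\efr}$ used here is available because $\cDom$ is \JD.

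The main obstacle is justifying the middle identity $|s_\efr^\Imc| = |A_x^\Imc|$: the CI $\top \sqsubseteq \sat(|\eqterm{g_\efr}{\nx f_\efr}| \le 1)$ ensures partial functionality of the induced feature matching, while $A_x \sqsubseteq \sat(|\eqterm{g_\efr}{\nx f_\efr}| \ge 1)$ forces every element of $A_x$ to be matched, and the infiniteness of $\cDom$ guarantees a supply of fresh feature values to realize these matchings in any finite model. Once the cardinality identities hold for every atomic equation, the constructed TBox admits a finite model iff the original Diophantine system has a solution in~$\Nbb$, yielding undecidability.
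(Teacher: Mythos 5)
Your proposal matches the paper's own proof essentially step for step: both reduce Hilbert's tenth problem via normalized equations $x=n$, $x=y+z$, $x=y\cdot z$, encode variables as cardinalities $|A_x^\Imc|$, handle multiplication by enforcing $r_\efr^\Imc = A_y^\Imc\times A_z^\Imc$ and $|s_\efr^\Imc|=|A_x^\Imc|$ via the \JD-derived equality set terms, and use infiniteness of \cDom to supply fresh feature values in the converse model construction. No substantive differences.
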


\paragraph{Transitive roles.}
Often, we may want a role name to be interpreted as a transitive relation: for instance, $\mathsf{trans}(\mathsf{ancestor})$ in the TBox expresses the fact that the ancestor of an ancestor is also an ancestor.
The interaction between number restrictions and transitivity axioms in the presence of role inclusions is known to lead to undecidability~\cite{HoSaTo00}.
It is possible to regain decidability by disallowing transitive roles within number restrictions, even in the presence of inverse roles~\cite{HoSaTo00}.
Another restriction that leads to decidability is to replace number restrictions with role functionality axioms; in this case, decidability holds even if one additionally allows nominals and inverse roles~\cite{DBLP:conf/aaai/Gutierrez-Basulto17}.

In the DL \SSCC that extends \ALCSCC with transitivity axioms, consistency is undecidable even under all syntactic constraints mentioned above.
In particular, we require that numerical constraints contain no transitive roles and no constants other than~$0$ or~$1$.
By adapting the reduction~\cite{HoSaTo00} from the tiling problem, which is known to be undecidable~\cite{Be66}, we obtain the following.

\begin{theorem}
    \label{thm:undecidability-transitive}
    Consistency in \SSCC is undecidable, even if numerical constraints contain no transitive roles and no constants other than~$0$ or~$1$.
\end{theorem}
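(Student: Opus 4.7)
The plan is to reduce from the undecidable $\mathbb{N}\times\mathbb{N}$ domino tiling problem~\cite{Be66}. Given a finite tile set $T=\{t_1,\dots,t_n\}$ together with horizontal and vertical compatibility relations $H,V\subseteq T\times T$, I will construct an $\SSCC$ TBox $\Tmc$ respecting the stated syntactic restrictions whose models encode valid tilings of the first quadrant.

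The skeleton of $\Tmc$ is the standard tiling encoding, tailored so as to avoid transitive roles in numerical constraints. I would introduce pairwise disjoint concept names $A_{t_1},\dots,A_{t_n}$ with $\top\sqsubseteq\bigsqcup_i A_{t_i}$, and two non-transitive role names $h$ and $v$ made functional and serial via $\top\sqsubseteq\suc(|h|=1)\sqcap\suc(|v|=1)$; both constraints use only the constant $1$ and the non-transitive roles $h,v$, so they comply with the restriction. Horizontal tile compatibility is encoded by the CIs $A_t\sqcap\exists h.A_{t'}\sqsubseteq\bot$ for every $(t,t')\notin H$, and analogously for~$V$; these too remain within the allowed syntax. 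To prepare the ground for the confluence axioms, I declare a transitive role~$r$ through $\mathsf{trans}(r)\in\Tmc$ and impose $h,v\sqsubseteq r$ by $\top\sqsubseteq\suc(h\subseteq r)\sqcap\suc(v\subseteq r)$, which use $r$ only inside set constraints and are therefore permitted.

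The crux, and the main obstacle, is enforcing the grid-confluence property $h\circ v=v\circ h$. In the original Horrocks--Sattler--Tobies reduction this is achieved by counting $r$-successors, precisely what the current restriction forbids. I therefore intend to recast the counting argument using set constraints on~$r$ together with auxiliary concepts and non-transitive roles. The two candidate diagonals of each node are first tagged via nested axioms of the form $\top\sqsubseteq\suc(h\subseteq\suc(v\subseteq D))$ and $\top\sqsubseteq\suc(v\subseteq\suc(h\subseteq D))$, placing both the $v$-successor of the $h$-successor and the $h$-successor of the $v$-successor into a common concept~$D$; since $h,v\sqsubseteq r$ and $r$ is transitive, both candidates automatically become $r$-successors of the current node. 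A gadget built from a fresh non-transitive role $d$ with $\top\sqsubseteq\suc(|d|=1)\sqcap\suc(d\subseteq r)$, combined with a set equality of the form $\top\sqsubseteq\suc(r\cap D = d)$ (a set constraint, and hence allowed even though $r$ is transitive), then forces the two $D$-tagged $r$-successors to coincide with the unique $d$-successor and therefore with each other. Designing these auxiliary axioms so that they produce pointwise confluence without globally collapsing the grid, all while keeping every numerical constraint free of transitive roles and of coefficients other than~$0$ or~$1$, is the hardest part of the proof.

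Once confluence is secured, correctness is routine. From a valid tiling of $\mathbb{N}\times\mathbb{N}$ one builds a model of $\Tmc$ by taking grid cells as the domain, interpreting $h$ and $v$ as the coordinate shifts, $A_t$ according to the tiling, and $r$ as the transitive closure of $h\cup v$; every axiom is satisfied. Conversely, any model of $\Tmc$ yields a valid tiling by reading off the tile concept along the $h$/$v$-grid generated from an arbitrary base element, with tile compatibility guaranteed by the exclusion CIs and the grid shape guaranteed by the confluence gadget. Undecidability of the domino problem then transfers to consistency in $\SSCC$ under the stated restrictions, proving \Cref{thm:undecidability-transitive}.
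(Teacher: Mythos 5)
Your overall strategy (reduction from the $\mathbb{N}\times\mathbb{N}$ tiling problem with functional, serial roles $h,v$ and a transitive superrole used to make both diagonal candidates visible as role successors) is the same as the paper's, but the two steps on which the reduction actually hinges are broken or missing in your sketch. First, declaring a \emph{single} transitive role $r$ with $h\sqsubseteq r$ and $v\sqsubseteq r$ is fatal: \SSCC, like \ALCSCC, is interpreted only over finitely branching interpretations, and since every element is forced to have an $h$- and a $v$-successor, transitivity of $r\supseteq h\cup v$ gives every element infinitely many $r$-successors. Your TBox would therefore be inconsistent independently of whether the tiling problem has a solution, and the reduction collapses. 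The paper avoids exactly this by introducing four concepts $A_{ij}$ arranged in an alternating $2\times 2$ pattern and four transitive roles $r_{ij}$, each a superrole only of $h_i$ and $v_j$, so that from any element the relevant transitive role reaches only the three other corners of its $2\times 2$ block.

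Second, your confluence gadget does not work as described, and you yourself flag it as unresolved. The concept $D$ introduced by $\top\sqsubseteq\suc(h\subseteq\suc(v\subseteq D))$ is global: in the intended grid model essentially every element becomes the diagonal of some other element, so $D$ ends up containing (almost) the whole domain and the constraint $\suc(r\cap D=d)$ collapses the grid rather than merging just the two diagonal candidates of each node. Concepts cannot express the binary relation ``is the diagonal of \emph{this particular} element.'' The paper instead uses the numerical constraint $\top\sqsubseteq\suc(|h^c\cap v^c|=1)$: thanks to the alternation, the only role successors of $d$ that are neither $h$- nor $v$-successors are precisely the two diagonal candidates, so this single cardinality constraint identifies them. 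Note that this constraint is \emph{syntactically} free of transitive roles (it mentions only $h$ and $v$), which is all the theorem's restriction demands; the transitive roles enter only implicitly because $\suc$ ranges over all role successors. Your premise that counting is ``precisely what the current restriction forbids'' is therefore a misreading that led you away from the step that makes the proof go through.
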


\section{Conclusion}
\label{sec:conclusion}

We have presented the very expressive DL \ALCOSCCD that supports concrete domain restrictions and role successor restrictions involving feature values.
We have shown that consistency in this logic is \ExpTime-complete, the same as for the basic DL \ALC.
Moreover, we have discussed the consequences of adding assertions, transitive roles, unrestricted semantics, or mixed constraints, most of which make the logic undecidable.
While feature roles can already express a restricted form of inverse roles, in the future, we would like to investigate the decidability and complexity of $\mathcal{ALCOISCC}(\cDom)$ with full inverse roles, for which it is known that they increase the complexity of classical DLs with nominals and number restrictions to \NExpTime~\cite{DBLP:journals/jair/Tobies00}.
Another avenue of research is to implement a reasoner for \ALCOSCCD, based on a suitable tableaux algorithm~\cite{LuMi07} that needs to integrate a QFBAPA solver and a concrete domain reasoner.
Currently, reasoners for DLs with non-trivial concrete domains only exist for \ALCD and \ELD with so-called $p$-admissible concrete domains and without feature paths~\cite{DBLP:conf/rulemlrr/AlrabbaaBBKK23}.

\begin{credits}
\subsubsection{\ackname} This work was partially supported by DFG grant 389792660 as part of TRR~248 -- CPEC, by the German Federal Ministry of Education and Research (BMBF, SCADS22B) and the Saxon State Ministry for Science, Culture and Tourism (SMWK) by funding the competence center for Big Data and AI \enquote{ScaDS.AI Dresden/Leipzig}.
\subsubsection{\discintname}
The authors have no competing interests to declare that are
relevant to the content of this article.
\end{credits}

%
\appendix
\section{Deciding Consistency --- Auxiliary Results}
\label{app:satisfiability}

The proofs of \Cref{lem:soundness} and~\Cref{lem:completeness} depend on a series of lemmas that we provide in this section.

\subsection{Soundness of~\Cref{alg:type-elimination}}

We observe that for every patchwork \cDom, the fact that \cDom satisfies \AP (see~\Cref{sec:preliminaries}) implies the following property:
\begin{description}
    \item[$\mathsf{AP}^{+}$]\label{prop:ap-plus} if $\Cbb$ is a finite set of constraint systems and $V$ a set of variables such that $V(\Bfr) \cap V(\Cfr) = V$ and \Bfr and \Cfr agree on $V$ for all $\Bfr, \Cfr \in \Cbb$, then the constraint system $\bigcup \Cbb$ is satisfiable iff each $\Cfr \in \Cbb$ is satisfiable.
\end{description}

We can now prove the claims made in the proof of \Cref{lem:soundness}.
\begin{lemma}
   \label{lem:soundness-features}
   For every $m \in \Nbb$, the constraint system $\cs^m$ has a solution.
\end{lemma}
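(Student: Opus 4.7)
The plan is to prove the lemma by induction on $m \in \Nbb$, combining the satisfiable local constraint systems associated with the individual augmented types through the amalgamation property of~\cDom (in particular, via the property $\mathsf{AP}^{+}$ established just before the lemma statement).

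Base case ($m = 0$). The system $\cs^0$ is the union $\bigcup_{\ain \in \Ibb} \cs_{\ain, \varepsilon}$ of the renamings of the local systems $\cs_{\agt_{\ain}}$, each of which is satisfiable by virtue of being part of an augmented type (cf.\ \Cref{dfn:augmented-type}). For distinct $\ain, \ain' \in \Ibb$, the systems $\cs_{\ain, \varepsilon}$ and $\cs_{\ain', \varepsilon}$ share exactly the variables $\{f^{\ain'', \varepsilon} \mid f \in \NF,\ \ain'' \in \Ibb\}$ obtained by renaming~$\csi$, since the word label uniquely determines all anonymous variables. As each local system is complete and contains the same copy of~$\csi$, the shared constraints coincide, so $\mathsf{AP}^{+}$ gives satisfiability of~$\cs^0$.

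Inductive step ($m \to m+1$). Assuming satisfiability of~$\cs^m$, I would enumerate the finitely many new elements $u_1, \dots, u_k$ of $\Delta^{m+1} \setminus \Delta^m$ and add their systems $\cs_{u_j}$ one at a time, applying the binary amalgamation property at each step. Each $\cs_{u_j}$ is satisfiable because it is a renaming of the satisfiable local system of $\wend(u_j) \in \Tbb$. When adding $\cs_{u_j}$ to the currently accumulated system, the shared variables split into two groups: the named variables $f^{\ain'', \varepsilon}$, on which agreement follows from the common inclusion of~$\csi$; and the single link variable $f^{u_j}$, which appears in the accumulated system only via the parent system $\cs_{\ain, w}$ with $u_j = w \cdot (\agt', v, i)$, because distinct elements of~$\Delta^\Imc$ have distinct word labels and hence disjoint fresh variables. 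Agreement on constraints involving $f^{u_j}$ comes from the patching condition: since $\agt'$ patches $\wend(\ain, w)$ at position $(v, i)$, the merged system $\cs_{\wend(\ain, w)} \merge{(v, i)} \cs_{\agt'}$ is satisfiable, and by completeness of both local systems this forces their constraints on the shared variables to coincide. The binary AP then yields satisfiability of the extended accumulated system, and after $k$ such steps $\cs^{m+1}$ is satisfiable.

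The main obstacle is the careful bookkeeping of variables across different augmented types: one must verify that the anonymous variables introduced by distinct local systems do not accidentally coincide after renaming, which is guaranteed by encoding the position $(v, i)$ and the patching augmented type~$\agt'$ into the word label~$u$. A secondary technicality is the pre-renaming extension of $\cs_{t, V}$ with additional variables $f^{(v, i)}$ for $i > V(v)$, added to accommodate Venn regions whose size in the QFBAPA solution exceeds the bag multiplicity; since these variables merely copy the constraints of $f^{(v, V(v))}$, satisfiability is preserved, and the patching condition continues to apply uniformly to all the extra copies, which are all patched by the same augmented type $\agt_{(v, V(v))}$.
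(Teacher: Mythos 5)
Your proposal is correct and follows essentially the same route as the paper's proof: induction on $m$, with the base case handled by applying $\mathsf{AP}^{+}$ to the pairwise-agreeing copies of~$\csi$ inside the named systems, and the inductive step handled by amalgamating each new $\cs_{\ain,w}$ with the accumulated system one at a time, using the patching condition together with completeness of the local systems to establish agreement on the shared variables (those from~$\csi$ and the $f^{\ain,w}$ linking the new element to its parent). The paper additionally passes through explicit \emph{completions} of $\cs^m$ before amalgamating, and handles the extra variables $f^{(v,i)}$ with $i>V(v)$ by copying the value of $f^{(v,V(v))}$ in the solution, both of which you account for in substance.
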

\begin{proof}
    Since \cDom is in particular a patchwork, we can extend every constraint system $\cs$ over \cDom with a solution $h$ to a complete and satisfiable constraint system.
    Indeed, the fact that \cDom is \JEPD implies that, for all $k \in \Nbb$, either \cDom has no $k$-ary relations or for all variables $v_1$, \ldots, $v_k$ the system $\cs$ contains at most one constraint $P(v_1,\dotsc,v_k)$ with a $k$-ary relation~$P$ of~\cDom.
    If $\cs$ contains no such constraint, then we can add to it the unique constraint $P(v_1,\dotsc,v_k)$ for which $(h(v_1),\dotsc,h(v_k)) \in P^D$ holds.
    In this case, $h$ is still a solution of the extended system.
    Hereafter, we call a \emph{completion} of a satisfiable constraint system $\cs$ a complete constraint system obtained using this procedure.

    Our second observation is that for all $(\ain,w) \in \Delta^{\Imc}$ the complete constraint system $\cs_{\ain,w}$ is satisfiable.
    Assuming that $\wend(\ain, w) = (t, V, \cs_{t, V}) \in \Tbb$, we know from~\Cref{dfn:augmented-type} that ${\cs_{t, V}}$ has a solution $h'$.
    We extend $h'$ to a solution $h$ of $\cs_{\ain,w}$ by first setting $h'(f^{(v,i)}) := h'(f^{(v,V(v))})$ for $i = V(v) + 1,\dotsc,\sigma_{\ain,w}(|v|)$ if $f^{(v,V(v))}$ occurs in $\cs_{t, V}$ and then renaming the variables in the domain of $h'$ using the renaming used to construct $\cs_{\ain,w}$, thus obtaining $h$.

    We prove the claim by induction over $m \in \Nbb$. For $m = 0$, the system $\cs^0$ corresponds to the union of all systems $\cs_{\ain,\varepsilon}$ with $\ain \in \Ibb$.
    Every two systems of this form share exactly the variables $f^{\mathfrak{b},\varepsilon}$ for which $f^{\mathfrak{b}}$ occurs in $\csi$; moreover, they agree on these variables, since they all include (up to renaming) a copy of the complete constraint system $\csi$.
    By $\mathsf{AP}^{+}$, we conclude that $\cs^0$ is satisfiable.

    Now, we inductively assume that the constraint system $\cs^m$ has a solution for $m \in \Nbb$ and show that this implies that $\cs^{m+1}$ is satisfiable, too.
    We notice that $\cs^{m+1}$ is the union of $\cs^m$ and all systems $\cs_{\ain,w}$ with $(\ain,w) \in \Delta^{m+1} \setminus \Delta^m$.
    By construction, for every $(\ain,w) \in \Delta^{m+1} \setminus \Delta^m$ there is a unique $(\ain, w') \in \Delta^m$ such that $w = w' \cdot (\agt,v,i)$ for some augmented type $\agt \in \Tbb$, some Venn region $v$ of \Tmc and some natural number $i$.
    In particular, $\agt$ patches $\wend(\ain,w')$ at $(v,i)$ by definition of $\Delta^{m+1}$.
    Let $\cs'$ be a completion of $\cs^m$ and consider the complete constraint system $\cs_{\ain,w}$.
    The variables shared by $\cs'$ and $\cs_{\ain,w}$ are exactly (up to renaming) those occurring in $\csi$ and those of the form $f^{\ain,w}$ for $f \in \NF$ that occur in the complete constraint systems $\cs_{\ain,w'}$ and $\cs_{\ain,w}$.
    Due to the fact that $\agt$ patches $\wend(\ain,w')$ at $(v,i)$, we deduce that these two systems agree on their shared variables.
    Given that $\cs_{\ain,w'}$ is a subsystem of $\cs'$ and that both $\cs'$ and $\cs_{\ain,w}$ are complete, we deduce that these two systems agree on their shared variables, and since both are satisfiable we conclude that $\cs' \cup \cs^w_{\ain}$ is satisfiable.
    By taking a completion of this constraint system and iteratively repeating this process for every other element in $(\ain,w) \in \Delta^{m+1} \setminus \Delta^m$, we obtain a satisfiable constraint system that includes $\cs^{m+1}$ as a subsystem, which is then satisfiable.
    \qed
\end{proof}

\begin{lemma}
    \label{lem:soundness-concepts}
    For
    all
    $d = (\ain,w) \in \Delta^{\Imc}$ and $C \in \Mmc$, $C \in \agroot(\wend(d))$ iff $d \in C^{\Imc}$.
\end{lemma}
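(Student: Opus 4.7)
The plan is to proceed by structural induction on $C \in \Mmc$, writing $t := \agroot(\wend(d))$. The case of a concept name $A \in \NC$ is immediate from the definition of $A^{\Imc}$. For a nominal $\{a\}$, the construction of $\Delta^{\Imc}$ only places named augmented types at the root positions $(\ain, \varepsilon)$, since each set $\Delta_r^{m+1}[\ain, w]$ contributes only successors whose $\wend$ has an anonymous root. Hence, if $w = \varepsilon$ then $t = t_\ain$ and both $\{a\} \in t$ and $d = a^{\Imc}$ are equivalent to $a \in \ain$; if instead $w \ne \varepsilon$ then $t$ is anonymous, so $\{a\} \notin t$ and $d \ne a^{\Imc}$. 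The Boolean cases $\neg C$ and $C \sqcap C'$ follow directly from the corresponding conditions of \Cref{dfn:type} together with the induction hypothesis.

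The case $C = \suc(\con)$ is the heart of the argument. By construction, the role successors of $d = (\ain, w)$ are in bijection with the elements counted by the fixed solution $\sigma_{\ain,w}$ of $\phi_t$: each named patching successor $(\mathfrak{b}, \varepsilon)$ occupies one slot at some Venn region $v$, while the anonymous elements in $\Delta_r^{m+1}[\ain, w]$ populate each $v \in \supp(V)$ up to cardinality $\sigma_{\ain,w}(|v|)$. In either case, the type at such a successor $e$ contains $S_v$: this is explicit in the patching condition $S_v \subseteq t'$, and also holds for the anonymous positions by the choice of $\agt_{(v,i)}$. I will then verify, for any successor $e$ placed at position $v$, the three equivalences $e \in r^{\Imc}(d) \Leftrightarrow X_r \in v$ (by the definition of $r^{\Imc}$), $e \in C^{\Imc} \Leftrightarrow X_C \in v$ (by the induction hypothesis applied to $e$ together with $S_v \subseteq \agroot(\wend(e))$), and $(d, e) \in \gamma^{\Imc} \Leftrightarrow X_\gamma \in v$; the last one follows from \Cref{dfn:local-system}(2) applied to $\cs_{t,V}$ together with the fact that $h^{\Imc}$ solves the merged system $\cs^{\Imc}$ and thus realises the feature values at $d$ and at $e$ exactly as dictated by the constraints on $\gamma$. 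It follows that $\sigma_d$ matches $\sigma_{\ain,w}$ under the identification $r \leftrightarrow X_r$, $C \leftrightarrow X_C$, $\gamma \leftrightarrow X_\gamma$. Since $\phi_t$ contains $\phi_\con$ iff $\suc(\con) \in t$ and $\neg\phi_\con$ otherwise, $\sigma_d$ satisfies $\con$ iff $\suc(\con) \in t$.

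For a CD-restriction $C = \exists p_1, \dotsc, p_k.P$, I invoke \Cref{dfn:local-system}(1) together with the solution $h^{\Imc}$ of $\cs^{\Imc}$. If $C \in t$, there is a constraint $P(f_1^{x_1}, \dotsc, f_k^{x_k}) \in \cs_{t,V}$ with each $x_i$ chosen according to the shape of $p_i$; after the renaming that produces $\cs^{\Imc}$ this constraint is satisfied by $h^{\Imc}$, yielding a tuple in $p_1^{\Imc}(d) \times \dotsb \times p_k^{\Imc}(d)$ that belongs to $P^D$, so $d \in C^{\Imc}$. Conversely, if $C \notin t$, then \Cref{dfn:local-system}(1) forbids such a $P$-constraint from appearing in $\cs_{t,V}$; by completeness of $\cs_{t,V}$ and \JEPD, every admissible tuple of variables $(f_1^{x_1}, \dotsc, f_k^{x_k})$ is constrained by some predicate $Q \ne P$, so no tuple in the relevant product lies in $P^D$ and $d \notin C^{\Imc}$.

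The main obstacle will be the $\suc(\con)$ case: it requires a careful alignment between the abstract QFBAPA solution $\sigma_{\ain,w}$ of $\phi_t$ and the actual structure of role successors built in $\Imc$, including their cardinalities, Venn-region memberships, and feature-role memberships. The feature-role subcase in particular must chain together the local system of $\agt$, the patching and merging operations that yield $\cs^{\Imc}$, and the solution $h^{\Imc}$ obtained via homomorphism $\omega$-compactness; only once this is in place does the QFBAPA formula $\phi_t$ faithfully reflect the situation at $d$ in the constructed model.
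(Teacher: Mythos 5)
Your proposal is correct and follows essentially the same route as the paper's proof: structural induction on $C$, with the nominal case settled by observing that named augmented types occur only at positions $(\ain,\varepsilon)$, the CD-restriction case handled via \Cref{dfn:local-system}(1), completeness of the local system and \JEPD, and the $\suc(\con)$ case resolved by exhibiting the bijection between the actual role successors of $d$ and the elements of the fixed QFBAPA solution $\sigma_{\ain,w}$ and checking the three Venn-region membership equivalences for role names, concepts and feature roles. The paper fills in the same details you flag as the main obstacles (cardinality matching for named versus anonymous Venn regions, and chaining the local system through the merged constraint system to $h^{\Imc}$ for feature roles), so no substantive difference remains.
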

\begin{proof}
    We prove this claim by structural induction over $C \in \Mmc$.
    We assume that $\wend(\ain,w) = \agt = (t, V, \cs_{t,V})$ and first prove the base cases where $C$ is either a concept name, a nominal or an existential CD-restriction.
    \begin{itemize}
        \item The case $C = A \in \NC$ is trivially covered by the definition of $A^{\Imc}$.
        \item If $C = \{a\}$, we notice that $\{a\} \in \agroot(\agt)$ iff $\agt = \agt_{\ain}$ with $a \in \ain$ and $\ain_{\agt} = \ain$; by construction of $\Delta^{\Imc}$, this holds iff $(\ain,w) = (\ain,\varepsilon) \in \{a\}^{\Imc}$.
        \item Let $C = \exists p_1,\dotsc,p_k. P \in \Mmc$. If $C \in \agroot(\agt)$, then~\Cref{dfn:local-system} together with $\agt \in \Tbb$ implies that there is a constraint $P(f_1^{x_1},\dotsc,f_k^{x_k}) \in \cs_{t,V}$ such that for $i = 1,\dotsc,k$ either $x_i = \ain_i$ with $\ain_i \in \Ibb$ or $x_i = \star$ or $x_i = (v_i,j_i)$ for some $v_i \in V$ with $1 \le j_i \le V(v_i)$.
        In the last case, we know that there is $\agt_{(v_i,j_i)} \in \Tbb$ that patches $\agt$ at $(v_i,j_i)$.
        Using these indices and augmented types, we select for $i = 1,\dotsc,k$ the domain elements
        \begin{equation*}
        d_i = (\ain_i, w^i) := 
        \begin{cases}
            (\ain_i, \varepsilon) & \text{if $x_i = \ain_i \in \Ibb$} \\
            (\ain, w) & \text{if $x_i = \star$} \\
            (\ain, w \cdot (\agt_{(v_i,j_i)}, v_i, j_i)) & \text{if $x_i = (v_i,j_i)$}.
        \end{cases}
        \end{equation*}
        Then, $P(f_1^{\ain_1,w^1},\dotsc,f_k^{\ain_k,w^k}) \in \cs^{\Imc}$ holds and thus $(f_1^{\Imc}(d_1),\dotsc,f_k^{\Imc}(d_k)) \in P^D$ by definition of \Imc.
        We prove that $(\ain,w) \in C^{\Imc}$ by showing that $f_i^{\Imc}(d_i) \in p_i^{\Imc}(\ain,w)$ holds for $i=1,\dotsc,k$:
        \begin{itemize}
            \item If $p_i=f_i$, then $x_i=\iota(t)$ is either $\star$ if $t$ is anonymous or $x_i=\ain_i=\ain_t=\ain$ and $w^i=w=\varepsilon$.
            In the former case, $f_i^\Imc(d_i) = f_i^\Imc(\ain,w) \in p_i^\Imc(\ain,w)$.
            In the latter case, we have $f_i^\Imc(d_i) = f_i^\Imc(\ain_i,\varepsilon) \in p_i^\Imc(\ain,w)$.
            \item If $p_i=r_if_i$, then $x_i=\iota((v_i,j_i))$ is either $(v_j,j_i)$ if $v_j$ is anonymous or $x_i=\ain_i=\ain_{v_i}$ and $w^i=\varepsilon$.
            Moreover, in both cases, we have $X_{r_i} \in v_i$, and thus $((\ain,w),(\ain,w^i)) \in r_i^\Imc$ or $((\ain,w),(\ain_{v_i},\varepsilon))\in r_i^\Imc$, respectively,
            Therefore, $f_i^\Imc(d_i) \in p_i^\Imc(\ain,w)$ holds in both cases.
        \end{itemize}

        Vice versa, assume that $C \notin \agroot(\agt)$.
        We show that for all $(c_1,\dotsc,c_k)$ where $c_i \in p_i^{\Imc}(w)$ for $i = 1,\dotsc,k$ it holds that $(c_1,\dotsc,c_k) \notin P^D$.
        We choose individuals $d_i = (\ain_i,w^i)$ for $i = 1,\dotsc,k$ such that $d_i = (\ain,w)$ and $f_i^\Imc(d_i) = c_i$ if $p_i = f_i$  and $d_i = (\ain',w')$ is an $r_i$-successor of $(\ain, w)$ with $f_i^{\Imc}(d_i) = c_i$ if $p_i = r_i f_i$.
        The construction of $\cs_{\ain,w}$ from $\cs_{t,V}$, together with $C \notin \agroot(\agt)$ and thus $\neg C \in \agroot(\agt)$, implies that $P(f_1^{\ain_1,w^1},\dotsc,f_k^{\ain_k,w^k}) \notin \cs_{\ain,w}$ according to~\Cref{dfn:local-system}.
        Since $f^{\Imc}(d_i)$ is defined, however, $f_i^{\ain_i,w^i}$ must occur in $\cs_{\ain,w}$ for $i = 1,\dotsc,k$.
        Given that $\cs_{\ain,w}$ is complete, there must be $P'(f_1^{\ain_1,w^1},\dotsc,f_k^{\ain_k,w^k}) \in \cs_{\ain,w}$ for some $P' \ne P$ $k$-ary.
        The interpretation of features of \Imc is a solution of $\cs^{\Imc}$ and thus of $\cs_{\ain,w}$, so we deduce that $(c_1,\dotsc,c_k) = (f_1^{\Imc}(d_1),\dotsc,f_k^{\Imc}(d_k)) \in (P')^D$, hence $(c_1,\dotsc,c_k) \notin P^D$ (by \JEPD).
    \end{itemize}

    Assume that the claim holds for $C_1, C_2, D \in \Mmc$ to prove the inductive cases.
    \begin{itemize}
        \item If $C = \neg D \in \Mmc$, then $C \in \agroot(\agt)$ iff $D \notin \agroot(\agt)$ iff $(\ain, w) \notin D^{\Imc}$ iff $(\ain, w) \in C^{\Imc}$, where the equivalences hold due to~\Cref{dfn:type}, the inductive hypothesis, and the semantics of negation, respectively.
        \item If $C = C_1 \sqcap C_2 \in \Mmc$, then $C \in \agroot(\agt)$ iff $C_i \in \agroot(\agt)$ for $i = 1,2$ iff $(\ain,w) \in C_i^{\Imc}$ for $i = 1,2$ iff $(\ain,w) \in C^{\Imc}$, with equivalences justified as before.
        \item If $C = \suc(\con) \in \Mmc$, we show that the solution $\sigma_{\ain,w}$ of $\phi_{t,V}$ used to determine the role successors of $(\ain,w)$ satisfies the formula $\phi_{\con}$ iff the QFBAPA assignment $\zeta_{\ain,w}$ induced by $(\ain,w)$ in~\Imc is a solution of $\con$ (recall that $\zeta_{\ain,w}(\Umc)=\ars^\Imc(\ain,w)$).
        Since $C \in \agroot(\agt)$ iff $\phi_{\con}$ occurs in $\phi_{t,V}$, this allows us to conclude that $C \in \agroot(\agt)$ iff $\zeta_{\ain,w}$ is a solution of $\con$, which, by the semantics of \ALCOSCCD, happens iff $(\ain,w) \in C^{\Imc}$.

        To show that $\sigma_{\ain,w}$ satisfies $\phi_\con$ iff $\zeta_{\ain,w}$ satisfies $\con$, we construct a bijection $\pi \colon \zeta_{\ain,w}(\Umc) \to \sigma_{\ain,w}(\Umc)$ such that $(\ain',w') \in \zeta_{\ain,w}(\delta)$ iff $\pi(\ain',w') \in \sigma_{\ain,w}(X_\delta)$ for each role name, concept or feature role~$\delta$.
        For every named element $(\ain',\varepsilon)\in\zeta_{\ain,w}(\Umc)$, we define $\pi(\ain',\varepsilon)$ to be the unique element in $\sigma_{\ain,w}(\bigcap_{a\in\ain'} X_{\{a\}})$: since $(\ain',\varepsilon)$ is a role successor of $(\ain,w)$, we know that $\agt_{\ain'}$ patches $\agt$, which means that a Venn region $v$ with $S_v\subseteq\agroot(\agt_{\ain'})$ and thus $\ain_v=\ain'$ must occur in~$\supp(V)$, which means that $\sigma_{\ain,w}(\Umc)$ contains at least one element satisfying $\bigcap_{a\in\ain'} X_{\{a\}}$; moreover, there can be at most one such element since $\sigma_{\ain,w}$ satisfies $\phi_t$.
        For each anonymous element $(\ain,w')$ with $w'=w\cdot(\agt_{(v,i)},v,j)$, we can assign a unique element $\pi(\ain',w)\in\sigma_{\ain,w}(v)$, since we created exactly $\sigma_{\ain,w}(|v|)$ many such elements~$w'$.
        This mapping~$\pi$ is a bijection since named Venn regions must have cardinality $1$ w.r.t.~$\sigma_{\ain,w}$, we explicitly created enough elements for each anonymous Venn region $v\in\supp(V)$, each such element is a role successor of $(\ain,w)$ due to the constraint $X_{r_1}\cup\dots\cup X_{r_n} = \Umc$, and there are no other role successors of $(\ain,w)$ in~\Imc.

        It remains to show that $(\ain',w') \in \zeta_{\ain,w}(\delta)$ iff $\pi(\ain',w') \in \sigma_{\ain,w}(X_\delta)$.
        For anonymous role successors $(\ain, w')$ with $w' = w \cdot (\agt_{(v,i)}, v, j)$ of $(\ain, w)$, this amounts to showing the following:
        \begin{itemize}
            \item $X_r \in v$ iff $(\ain,w') \in r^{\Imc}((\ain,w))$ for all role names~$r\in\NR$;
            \item $X_D \in v$ iff $(\ain,w') \in D^{\Imc}$ for all concepts~$D\in\Mmc$; and 
            \item $X_\gamma \in v$ iff $(\ain,w') \in \gamma^{\Imc}((\ain,w))$ for all feature roles~$\gamma$.
        \end{itemize}
        The first point is a consequence of the definition of \Imc.
        Using again this definition, we observe that $\agt'$ patches $\agt$ at $(v,i)$, and thus $S_v \subseteq \agroot(\agt_{(v,i)})$ holds.
        Since for $D \in \Mmc$ we have that $D \in S_v$ iff $X_D \in v$ and by inductive hypothesis we have that $D \in \agroot(\agt_{(v,i)})$ iff $(\ain,w') \in D^{\Imc}$, we conclude that $(\ain,w') \in D^{\Imc}$ iff $X_D \in v$.
        Finally, by~\Cref{dfn:local-system} and using the renaming we adopted before, for $\gamma := P(\alpha_1,\dotsc,\alpha_k)$ we have that $X_\gamma \in v$ iff $P(f_1^{x_1},\dotsc,f_k^{x_k}) \in \cs_{\ain,w}$ where $x_i = (\ain, w)$ if $\alpha_i = f_i$ and $x_i = (\ain,w')$ if $\alpha_i = \nx f_i$.
        By construction of \Imc and the definition of $x_i$, we have that $P(f_1^{x_1},\dotsc,f_k^{x_k}) \in \cs^{\Imc}$ iff $(f_1^{\Imc}(x_1),\dotsc,f_k^{\Imc}(x_k)) \in P^D$, and by the semantics of \ALCOSCCD this happens iff $(\alpha_1^{\Imc}((\ain,w),(\ain,w')),\dotsc,\alpha_k^{\Imc}((\ain,w),(\ain,w'))) \in P^D$, i.e. iff $(\ain,w') \in \gamma^{\Imc}(\ain,w)$.

        We can show that a similar characterization for named role successors $(\ain',\varepsilon)$ of $(\ain,w)$, by considering the unique singleton Venn region~$v$ of $\sigma_{\ain,w}$ that satisfies $\bigcap_{a\in\ain'} X_{\{a\}}$.
        \qed
    \end{itemize}
\end{proof}

\subsection{Completeness of~\Cref{alg:type-elimination}}

\begin{lemma}\label{claim:completeness-bag}
    There is a Venn bag $V_d$ for $t_{\Imc}(d)$ w.r.t.~\Tmc such that $V_d(v_e) = n_e$ for all $e \in \wit$, and for all other $v\in\supp(V_d)$ we have $V_d(v)=1$ and there is a role successor $e\in\Delta^\Imc\setminus\wit$ of~$d$ such that $v=v_e$.
\end{lemma}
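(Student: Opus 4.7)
The plan is to construct $V_d$ by applying the savior lemma (\Cref{lem:preliminaries:savior}) to the natural QFBAPA assignment induced by~$d$ in~$\Imc$, after enriching the relevant formula with the cardinality constraints forced by the witnesses in~$\wit$. First, I would consider the QFBAPA assignment $\sigma_d$ defined as in \Cref{sec:dls}: $\sigma_d(\Umc) := \ars^\Imc(d)$, $\sigma_d(X_r) := r^\Imc(d)$, $\sigma_d(X_C) := C^\Imc \cap \ars^\Imc(d)$, and $\sigma_d(X_\gamma) := \gamma^\Imc \cap \ars^\Imc(d)$, for role names~$r$, concepts~$C\in\Mmc$ and feature roles~$\gamma$. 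By the semantics of $\suc$-restrictions and the fact that $\Imc$ is a model of~$\Tmc$, $\sigma_d$ satisfies $\phi_{\con}$ whenever $\suc(\con)\in t_\Imc(d)$ and $\neg\phi_{\con}$ otherwise; the remaining conjuncts of $\phi_{t_\Imc(d)}$ (namely $|\bigcap_{a\in\ain}X_{\{a\}}|\le 1$ and $\Umc=\bigcup_{r\in\NR}X_r$) are satisfied by $\sigma_d$ as well, because named individuals are pairwise distinct and $\ars^\Imc(d)$ is the union of all $r^\Imc(d)$ by definition.

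Next, I would define $S$ to be the set of Venn regions $v_e$ for $e \in \wit$, each associated with the multiplicity $n_v := n_e$. Since each $v_e$ has by definition $\sigma_d(v_e) \ge n_e$ (there are $n_e$ many witnesses $e'\in\wit$ inducing the same Venn region as~$e$), the assignment $\sigma_d$ also satisfies $\phi_{t_\Imc(d),S}$, i.e., $\phi_{t_\Imc(d)}$ together with the constraints $|v|\ge n_v$ for $v\in S$. Note that $|S|\le|\wit|\le\mt$ and each $n_v\le\mt$, so the side conditions of \Cref{dfn:venn-bag} will ultimately hold.

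Now I would invoke \Cref{lem:preliminaries:savior} to obtain a solution $\sigma'$ of $\phi_{t_\Imc(d),S}$ such that at most $\nt$ Venn regions are non-empty in $\sigma'$ and, crucially, every Venn region non-empty under $\sigma'$ is also non-empty under $\sigma_d$. The latter property is the key: it guarantees that each $v\in\supp(\sigma')\setminus\{v_e \mid e\in\wit\}$ corresponds to an actual role successor $e\in\ars^\Imc(d)\setminus\wit$ with $v=v_e$. I then define $V_d$ by setting $V_d(v_e):=n_e$ for $e\in\wit$, $V_d(v):=1$ for every other $v\in\supp(\sigma')$, and $V_d(v):=0$ elsewhere.

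The final step, and the only potentially delicate one, is verifying that $V_d$ is indeed a Venn bag for $t_\Imc(d)$: $|\supp(V_d)|=|\supp(\sigma')|\le\nt$ holds by construction; each multiplicity is bounded by $\max(\mt,1)\le\mt+1$; and $\phi_{t_\Imc(d),V_d}=\phi_{t_\Imc(d)}\land\phi_{V_d}$ is witnessed satisfiable by $\sigma'$, since $\sigma'$ satisfies the constraints $|v|\ge V_d(v)$ (which are either included in $\phi_{t_\Imc(d),S}$ or trivial because $\sigma'(v)\ge 1$ for $v\in\supp(\sigma')$) and the equation $\Umc=\bigcup\supp(V_d)$ (since no Venn region outside $\supp(V_d)=\supp(\sigma')$ is non-empty under $\sigma'$). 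The second required property, that for every $v\in\supp(V_d)$ with $v\neq v_e$ for all $e\in\wit$ there exists a successor $e\in\Delta^\Imc\setminus\wit$ of~$d$ with $v=v_e$, follows from the savior lemma's guarantee that $\supp(\sigma')\subseteq\supp(\sigma_d)$ together with the fact that $\sigma_d(v)\ge 1$ means $v$ is realized by at least one actual role successor of~$d$.
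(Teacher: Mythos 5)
Your proposal is correct and follows essentially the same route as the paper's own proof: apply \Cref{lem:preliminaries:savior} to the formula $\phi_{t_\Imc(d)}$ augmented with the constraints $|v_e|\ge n_e$ for $e\in\wit$ (which the induced assignment $\sigma_d$ satisfies), then read off $V_d$ from the resulting sparse solution, using the lemma's guarantee that every Venn region non-empty under the new solution is already non-empty under $\sigma_d$ and hence realized by an actual role successor. Your additional checks (that $\sigma_d$ satisfies the auxiliary conjuncts of $\phi_{t_\Imc(d)}$, and that $\supp(V_d)=\supp(\sigma')$ because the constraints $|v_e|\ge n_e\ge 1$ force each $v_e$ to remain non-empty) are correct and only make explicit what the paper leaves implicit.
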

\begin{proof}
    Given that $\wit$ contains at most \mt elements, there can be at most \mt different constraints $|v_e| \ge n_e$ generated by elements $e$ of $\wit$ and $1 \le n_e \le \mt + 1$ holds for all these constraints.
    Consider the QFBAPA formula $\phi_{d}$ that is the conjunction of $\phi_{t_\Imc(d)}$ and the constraints $|v_e|\ge n_e$ for $e\in\wit$.
    By construction, the QFBAPA assignment $\sigma_d$ induced by \Imc (see \Cref{sec:dls}) is a solution of $\phi_{t_\Imc(d)}$ with $\sigma_d(\mathcal{U}) = \ars^{\Imc}(d)$, which satisfies the additional constraints $|v_e|\ge n_e$ with $e \in \wit$ since $\wit\subseteq \ars^{\Imc}(d)$, and is thus a solution of $\phi_d$.
    By \Cref{lem:preliminaries:savior}, there is another solution $\sigma_d'$ of $\phi_d$ for which there are at most \nt Venn regions $v$ with $\sigma_d'(v)\neq\emptyset$, and, whenever $\sigma_d'(v)\neq\emptyset$, then also $\sigma_d(v)\neq\emptyset$; that is, every non-empty Venn region~$v$ according to $\sigma_d'$ still corresponds to an element $e\in \ars^{\Imc}(d)$ such that $v_e=v$.
    We define the bag $V_d$ as follows:
    \[
      V_d(v) := \begin{cases}
        n_e & \text{if $v = v_e$ and $e \in \wit$,} \\
        1 & \text{if $v \ne v_e$ for $e \in \wit$ and $\sigma_d'(v)\neq\emptyset$, and} \\
        0 & \text{otherwise.}
      \end{cases}
    \]
    Then $\supp(V_d)$ contains at most \mt Venn regions $v$, all satisfying $V_d(v) \le \mt + 1$.
    Since $\sigma_d'$ is also a solution of~$\phi_{t_\Imc(d),V_d}$ and $\supp(V_d)$ contains exactly the Venn regions that are assigned non-empty sets by~$\sigma_d'$, we conclude that $V_d$ satisfies all conditions of~\Cref{dfn:venn-bag}, hence that $V_d$ is a Venn bag for $t_\Imc(d)$.
    \qed
\end{proof}

\begin{lemma}\label{claim:completeness-local-system}
   $\cs_d$ is a satisfiable local system for $t_{\Imc}(d)$ and $V_d$.
\end{lemma}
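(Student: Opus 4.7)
The plan is to verify, in order, satisfiability of $\cs_d$ and then each of the four requirements on a local system from \Cref{dfn:local-system}, reading everything off the model $\Imc$ through the map $\xi_d$. For satisfiability I would take the canonical candidate $h \colon f^x \mapsto f^\Imc(\xi_d(x))$, which by construction is well-defined on exactly those variables that occur in $\cs_d$ and which satisfies every constraint $P(f_1^{x_1},\dots,f_k^{x_k}) \in \cs_d$ immediately by the defining biconditional of $\cs_d$; the possible collision $\xi_d(\star) = \xi_d((v,j)) = d$, arising when $d$ is one of its own role successors, does no harm, since $h$ then simply assigns the same value to $f^\star$ and $f^{(v,j)}$, which is consistent with whatever constraints \JD forces between them in $\cs_d$. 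Completeness as a constraint system is then immediate from \JEPD, and the inclusion $\csi \subseteq \cs_d$ follows because $\xi_d$ restricted to $\Ibb$ acts as $\ain \mapsto \ain^\Imc$, so both systems record the same tuples of $\Imc$-values on the shared variables $f^\ain$; no additional $f^\ain$ variables can appear in $\cs_d$ because such a variable is present iff $f^\Imc(\ain^\Imc)$ is defined, which is exactly the criterion for its presence in $\csi$.

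For condition~1 of \Cref{dfn:local-system}, the forward direction is the main place where \Cref{claim:completeness-bag} is used: given $C = \exists p_1,\dots,p_k.P \in t_\Imc(d)$, the witnesses chosen during the construction of $\wit$ supply, for each $p_i = r_i f_i$, an element $e_i \in r_i^\Imc(d) \cap \wit$ whose $f_i^\Imc$-value is the $i$-th component of a tuple in $P^D$; since $V_d(v_{e_i}) = n_{e_i}$, the bijection $\lambda_d$ provides an index $j_i$ with $\xi_d(\iota((v_{e_i},j_i))) = e_i$, while for $p_i = f_i$ the corresponding witness is $d = \xi_d(\iota(t_\Imc(d)))$. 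These choices give exactly the variables $x_i$ demanded by condition~1, and the associated constraint then sits in $\cs_d$ by the assumption that the tuple is in $P^D$; the converse is immediate because every $\xi_d(\iota((v,j)))$ with $X_{r_i} \in v$ is indeed an $r_i$-successor of $d$ in $\Imc$. For condition~2 I would fix $v \in \supp(V_d)$, $1 \le j \le V_d(v)$ and set $e := \xi_d(\iota((v,j)))$: in the anonymous case $e = \lambda_d((v,j))$ is an anonymous successor of $d$ with $v_e = v$ by construction, and in the named case $e = \ain_v^\Imc$ plays the same role; unfolding the semantics of feature roles then gives $X_\gamma \in v = v_e$ iff the tuple of feature values at $d$ and $e$ picked out by the pointers $\alpha_i$ lies in $P^D$, which is exactly the defining condition for the corresponding constraint to appear in $\cs_d$.

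The main obstacle I expect throughout is the bookkeeping between named and anonymous components of $V_d$: when $v$ is named with $\ain_v$, the variable $\iota((v,j)) = \ain_v$ that $\cs_d$ uses must coincide with the one already fixed by $\csi$, so that no conflicting constraint is produced on the shared variables $f^{\ain_v}$. This is forced by the fact that $\xi_d(\ain_v) = \ain_v^\Imc$ is the unique role successor of $d$ realizing that individual type, that $S_v \subseteq t_\Imc(\ain_v^\Imc)$ for such named regions, and that $\cs_d$ is determined coordinate-wise from~$\Imc$; together with the degenerate case $\xi_d(\star) = d$ handled above, these observations reconcile all the index conventions encoded by $\iota$, $\lambda_d$, and $\xi_d$.
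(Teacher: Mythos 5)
Your proof is correct and follows essentially the same route as the paper's: exhibit the solution $h_d(f^x) := f^\Imc(\xi_d(x))$, and verify conditions~1 and~2 of \Cref{dfn:local-system} by tracing witnesses through $\wit$, $\lambda_d$, and $\xi_d$. The extra care you take with completeness, the inclusion of $\csi$, and the degenerate case $\xi_d(\star)=\xi_d((v,j))=d$ only makes explicit what the paper leaves implicit.
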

\begin{proof}
    Clearly, $h_d(f^{x}) := f^{\Imc}(\xi_d(x))$ for $f\in\NF$, $x \in X_d$ defines a solution of~$\cs_d$.
    We show that $\cs_d$ is a local system for~$t_{\Imc}(d)$ and~$V_d$, according to~\Cref{dfn:local-system}:
  \begin{enumerate}
    \item We have $\exists p_1,\dots,p_k.P\in t_\Imc(d)$ iff $(c_1,\dots,c_k)\in P^D$ for some values $c_i\in p_i^\Imc(d)$ with $i = 1,\dotsc,k$; by construction of~$V_d$, if $p_i=r_if_i$ we can find $e_i\in \wit$ such that $(d,e_i) \in r_i^\Imc$ and $f_i^\Imc(e_i)=c_i$, and set $x_i := \lambda_d^{-1}(e_i)$; if $p_i = f_i$ (and $f_i^\Imc(d)=c_i$), we set $x_i := \iota(t_\Imc(d))$.
    In both cases, we have $f_i^\Imc(\xi_d(x_i))=c_i$.
    By definition of $\cs_d$, then, $(c_1,\dots,c_k)\in P^D$ iff $P(f_1^{x_1},\dotsc,f_k^{x_k}) \in \cs_d$.

    \item Consider any variable of the form $X_{P(\alpha_1,\dots,\alpha_k)}$, a Venn region $v\in\supp(V_d)$ and $1\le j\le V_d(v)$. We know that $v = v_e$ for some individual $e=\lambda_d(x)$ with $x \in X_d$, and thus $X_{P(\alpha_1,\dots,\alpha_k)} \in v = v_e$ iff $\big(\alpha_1^\Imc(d,e),\dots,\alpha_k^\Imc(d,e)\big) \in P^D$.
    Setting $x_i := \iota(t_\Imc(d))$ if $\alpha_i = f_i$ and $x_i := x$ if $\alpha_i = \nx f_i$, we obtain that $\big(\alpha_1^\Imc(d,e),\dots,\alpha_k^\Imc(d,e)\big) \in P^D$ iff $\big(f_1^\Imc(\xi_d(x_1)),\dots,f_k^\Imc(\xi_d(x_k))\big) \in P^D$ iff $P(f_1^{x_1},\dots,f_k^{x_k}) \in \cs_d$.
    \qed
  \end{enumerate}
\end{proof}

\begin{lemma}\label{claim:completeness-patching}
   Every augmented type in $\Tbb_\Imc$ is patched in $\Tbb_\Imc$.
\end{lemma}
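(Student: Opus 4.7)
The plan is to show, for every $d \in \Delta^\Imc$ and every pair $(v, i)$ with $v \in \supp(V_d)$ and $1 \le i \le V_d(v)$, that some $\agt' \in \Tbb_\Imc$ patches $\agt_\Imc(d) = (t_\Imc(d), V_d, \cs_d)$ at $(v, i)$. The natural witness is a role successor $e$ of $d$ with $v_e = v$: if $v$ is named with $\ain_v$, take $e := \ain_v^\Imc$; if $v$ is anonymous, take $e := \lambda_d((v, i))$. In either case I set $\agt' := \agt_\Imc(e) = (t_\Imc(e), V_e, \cs_e)$, which lies in $\Tbb_\Imc$. The root-inclusion condition $S_v \subseteq t_\Imc(e)$ is then immediate from $v_e = v$ and the defining property of $v_e$ in the completeness proof, namely that $X_C \in v_e$ iff $C \in t_\Imc(e)$ for every $C \in \Mmc$.

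The main step is to exhibit a solution of the merged system $\cs_d \merge{(v,i)} \cs_e$. I would define a valuation $g$ by reading each variable back to the element of $\Delta^\Imc$ it represents and evaluating the corresponding feature under $\Imc$: concretely, $g(f^\ain) := f^\Imc(\ain^\Imc)$ for $\ain \in \Ibb$; the self-variable of $\cs_d$ is sent to $f^\Imc(d)$; $g(f^{(v',j)}) := f^\Imc(\lambda_d((v',j)))$ for anonymous $v' \in \supp(V_d)$; each fresh renamed variable $(f^{(w,j)})'$ coming from an anonymous Venn region of $\cs_e$ is sent to $f^\Imc(\lambda_e((w,j)))$; and the renamed copy of the self-variable of $\cs_e$, which becomes $f^{(v,i)}$ in the merged system, is sent to $f^\Imc(e)$. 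Because $\cs_d$ and $\cs_e$ were both defined by reading off the predicates that the corresponding feature values satisfy in $\Imc$, every constraint of the merged system is automatically satisfied by $g$.

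The main obstacle is to confirm that $g$ is well-defined on the variables shared by $\cs_d$ and the renamed $\cs_e$. Sharing can occur only in two places: at the named variables $f^\ain$, where the common subsystem $\csi$ forces both systems to evaluate to $f^\Imc(\ain^\Imc)$; and at the merge point, where the entry $f^{\iota((v,i))}$ in $\cs_d$ encodes the value $f^\Imc(\lambda_d((v,i))) = f^\Imc(e)$, which is exactly what the renamed self-variable of $\cs_e$ encodes. When $v$ is named, the self-variable of $\cs_e$ is not renamed at all, because $\ain_{t_\Imc(e)} = \ain_v$ (which follows from $S_v \subseteq t_\Imc(e)$ combined with the named status of $v$), so the two systems already share the label $f^{\ain_v}$ and evaluate it to the same value. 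Hence $g$ is a solution of the merged system, so $\agt_\Imc(e)$ patches $\agt_\Imc(d)$ at $(v,i)$, as required.
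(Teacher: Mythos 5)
Your proposal is correct and follows essentially the same route as the paper: choose the witness successor $e$ (the named individual $\ain_v^\Imc$ or $\lambda_d((v,i))$, i.e.\ $\xi_d((v,i))$), derive $S_v \subseteq t_\Imc(e)$ from $v_e = v$, and glue the two canonical solutions $h_d(f^x) = f^\Imc(\xi_d(x))$ and $h_e(f^x) = f^\Imc(\xi_e(x))$ into a solution of the merged system, checking agreement on the $\csi$-variables and at the merge point. The well-definedness discussion, including the named case where $\cs_e$ has no $\star$-variable and already uses $f^{\ain_v}$, matches the paper's argument.
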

\begin{proof}
    Consider an augmented type $\agt_\Imc(d) = \big( t_\Imc(d), V_d, \cs_d \big) \in \Tbb_\Imc$ as defined above, $v\in\supp(V_d)$, and $1\le i\le V_d(v)$.
    We consider $e:=\xi_d((v,i))$ if $v$ is anonymous, and $e:=\ain_v^\Imc=\xi_d(\ain_v)$ otherwise.
    In both cases, we have $v_e=v$ by construction of~$\xi_d$.
    We show that $\agt_{\Imc}(e)$ patches $\agt_\Imc(d)$ at $(v,i)$.
    First,
    \begin{align*}
      S_v
      & = \{ C \in \Mmc \mid X_C \in v \} \cup \{ \lnot C \in \Mmc \mid X_C^c \in v \} \\
      & \subseteq \{ C \in \Mmc \mid C \in t_\Imc(e) \} \cup \{ \lnot C \in \Mmc \mid C \notin t_{\Imc}(e) \} \\
      & = t_\Imc(e).
    \end{align*}
    Second, we consider the system $\cs_d \merge{(v,i)} \cs_e$ obtained by renaming the variables in $\cs_e$ as in \Cref{dfn:augmented-type}.
    As discussed above, $h_d(f^x) := f^{\Imc}(\xi_d(x))$ defines a solution of~$\cs_d$, and similarly $h_e(f^x) := f^{\Imc}(\xi_e(x))$ is a solution of~$\cs_e$.
    In particular, $h_d(f^{\ain}) = f^{\Imc}(\ain^\Imc) = h_e(f^{\ain})$ for all $\ain\in\Ibb$ and, if $e$ is anonymous, $h_d(f^{(v,i)}) = f^\Imc(\xi_d((v,i))) = f^\Imc(e) = f^\Imc(\xi_e(\star)) = h_e(f^\star)$.
    Thus, the mapping $h$ defined as the union of $h_d$ and $h_e$ (after renaming) is a solution of $\cs_d \merge{(v,i)} \cs_e$.
    \qed
\end{proof}

\subsection{Complexity}

\ThmDecidability*
\begin{proof}
  \ExpTime-hardness follows from \ExpTime-hardness for \ALC~\cite{DBLP:conf/ijcai/Schild91}.
  It remains to show that \Cref{alg:type-elimination} can be executed in deterministic exponential time by enumerating all choices in Lines~1 and~2 instead of guessing them.
  First, there are only exponentially many individual type systems \Ibb and individual constraint systems for Line~1 of \Cref{alg:type-elimination}.
  Moreover, there are only exponentially many augmented types $(t,V,\cs_{t,V})$ since the size of Venn bags~$V$ is bounded polynomially and thus also $\cs_{t,V}$ can contain only polynomially many variables and constraints.
  In addition, satisfiability of the polynomially large $\phi_{t,V}$ and $\cs_{t,V}$ can be checked in exponential time since satisfiability of QFBAPA formulae is NP-complete~\cite{KuRi07} and $\cDom$ is \expadm, respectively.
  Therefore, the initial set \Tbb in Line~3 can be constructed in exponential time and there are also only exponentially many possibilities to assign augmented types~$\agt_\ain$ to the individual types in~\Ibb in Line~2.
  Since each iteration of the loop in Line~4 removes one augmented type from \Tbb, there can be at most exponentially many iterations.
  Each iteration can be performed in exponential time, as each check for patching involves a polynomial test to check whether $S_v \subseteq t'$ and an exponential check for satisfiability of a constraint system of polynomial size, and at most exponentially many patching checks occur.
  By \Cref{lem:soundness,lem:completeness}, we conclude that consistency of an \ALCOSCCD TBox is decidable in exponential time.
  \qed
\end{proof}

\section{Reasoning with ABoxes --- Proofs}
\label{app:constants}

\paragraph{Referring to feature values of named individuals}
We can use the roles~$\mathsf{ref}_a$ (mentioned in \Cref{sec:dls,sec:constants}) in arbitrary CD-restrictions to refer to the feature values of named individuals.
For example, $(\exists \mathsf{child}\, \mathsf{salary}, \mathsf{ref}_\mathtt{Sam} \,\mathsf{salary}.{<})$ compares the salary of an anonymous child to that of \texttt{Sam}.

We introduce a related construction here, variants of which will be used in several of the following proofs.
The idea is to introduce features like $\mathsf{salary}_{\mathtt{Sam}}$ that can be used to express feature roles like $\nx \mathsf{salary} < \mathsf{salary}_{\mathtt{Sam}}$ within $\suc$-restrictions, in order to quantify the number of successors with a $\ex{salary}$ smaller than \texttt{Sam}'s.
For this purpose, the interpretation of the feature~$f_a$ needs to be such that $f_a^\Imc(d)$ is equal to $f^\Imc(a^\Imc)$ at every individual $d\in\Delta^\Imc$.
The idea is to use the role $\mathsf{ref}_a$ to enforce this, using a CI like $\top\sqsubseteq\forall\mathsf{ref}_a f,f_a.{=}$.
However, \cDom does not necessarily contain the equality predicate~$=$, which means that this may not be a valid CI in \ALCOSCCD.
Nevertheless, by \JD, we know that there is a quantifier-free, equality-free first-order formula $\phi_=(x,y)$ over the signature of \cDom that is equivalent to \mbox{$x=y$}.
Moreover, by \JEPD and finiteness of the signature, we can express negated atoms as disjunctions of positive atoms, so that we may assume $\phi_=(x,y)$ to be a disjunction of conjunctions of positive atoms.

We can use this to construct the concept $C_{\mathsf{ref}_a f=f_a}$ that is obtained from $\phi_=(x,y)$ by replacing $\land$ with $\sqcap$, $\lor$ with $\sqcup$ and every atom $P(t_1,\dots,t_n)$ with $\forall p_1,\dots,p_n.P$, where $p_i=\mathsf{ref}_a f$ whenever $t_i=x$ and $p_i=f_a$ whenever $t_i=y$.
This concept is equivalent to the intended CD-restriction $\forall \mathsf{ref}_a f, f_a.{=}$ since $\mathsf{ref}_a$ is functional, i.e.\ every individual has exactly one $\mathsf{ref}_a$-successor, namely~$a$.
Thus, the CI $\top\sqsubseteq C_{\mathsf{ref}_a f=f_a}$ enforces that, whenever both $f^\Imc(a^\Imc)$ and $f_a^\Imc(d)$ are defined, then they must be equal.
Finally, we can complement this CI by similar ones to express that these feature values are either both defined or both undefined: $\exists f_a,f_a.{=}\sqsubseteq\exists \mathsf{ref}_af,\mathsf{ref}_af.{=}$ and $\exists \mathsf{ref}_af,\mathsf{ref}_af.{=}\sqsubseteq\exists f_a,f_a.{=}$
(we can construct concepts $C_{f_a=f_a}$ and $C_{\mathsf{ref}_af=\mathsf{ref}_af}$ equivalent to $\exists f_a,f_a.{=}$ and $\exists \mathsf{ref}_af,\mathsf{ref}_af.{=}$, respectively, similarly to $C_{\mathsf{ref}_af=f_a}$ above).

\LemFeatureAssertions*
\begin{proof}
    We can express every feature assertion $f(a,c)$ by $\{a\}\sqsubseteq\exists f.{=_c}$.
	For the other direction, consider an $\ALCOSCCD$ TBox~\Tmc that uses additional singleton predicates.
    We show how to construct a TBox~$\Tmc'$ and an ABox~$\Amc'$ that simulate all additional singleton predicates~$=_c$ in~\Tmc by using feature assertions.
    Since $=_c$ is unary, it can occur only in CD-restrictions of the form $\exists f.{=_c}$ or $\exists rf.{=_c}$ and feature roles ${=_c}(f)$ or ${=_c}(\nx f)$.
    CD-restrictions $\exists rf.{=_c}$ can be equivalently expressed as $\suc(|r \cap \exists f.{=_c}| \ge 1)$, and ${=_c}(\nx f)$ can directly be replaced by $\exists f.{=_c}$.
    This means that we can assume that $=_c$ occurs only in expressions of the form $\exists f.{=_c}$ or ${=_c}(f)$.
    
    The main idea is to store the value~$c$ in a special feature~$f_c$ by using feature assertions, and make sure that the value of~$f_c$ is equal to~$c$ at every element reachable from a named individual by a role chain.
    We can then express $\exists f.{=_c}$ and ${=_c}(f)$ by making $f$ equal to~$f_c$, for which we exploit~\JD.
	
	First, we ensure that $f_c$ is a total function by adding the axiom $\top\sqsubseteq\exists f_c.\top_\cDom$ to~$\Tmc'$, where $\top_\cDom$ is interpreted as~$D$.
    Although $\top_\cDom$ may not be a predicate of $\cDom$, by \JEPD and the fact that the signature of~$\cDom$ is non-empty and finite, $\top_\cDom$ can be expressed as the disjunction of some $k$-ary predicates $P_1,\dots,P_m$.
    This implies that for every $d \in D$ there is exactly one $k$-ary predicate $P_i$ such that $(d,\dotsc,d) \in P_i^D$.
    Thus, we can write $\exists f.\top_\cDom$ equivalently as ${\exists f,\dotsc,f.P_1}\sqcup{\dotsb}\sqcup{\exists f,\dotsc,f.P_m}$, where each restriction $\exists f,\dots,f.P_i$ repeats $f$ for $k$ times.

	We next give $f_c$ the value $c$ for all elements reachable from a named individual.
    We start by adding all feature assertions $f_c(a,c)$ to~$\Amc'$, for every individual name~$a$ occurring in~\Tmc.
    If \Tmc does not contain any individual names, we instead add only $f_c(a^*,c)$ for a fresh individual name~$a^*$.
    It remains to transfer this value to all reachable elements.
  
	Since \cDom is $\omega$-admissible, equality between two variables $x$, $y$ can be expressed using a formula $\phi_=(x,y)$ that is a disjunction of conjunctions of positive atoms over the signature of $\cDom$ (i.e., not including the additional singleton predicates). 
    Now consider the formula $\phi_=(c,y)$, where $x$ is replaced by the constant~$c$. Since $\phi_=(c,y)$ is equivalent to $c=y$, we can find a single disjunct $\beta(c,y)$ of $\phi_=(c,y)$ such that $\beta(c,y)$ is satisfiable and equivalent to $c=y$; otherwise, $\phi_=(c,y)$ would be satisfied also for values of~$y$ other than~$c$.
    The disjunct $\beta(c,y)$ can be identified in exponential time due to our assumption that \cDom is an \expadm concrete domain with constants.
    
    For every $r\in\NR(\Omc)$, we now obtain the concept~$C_{r,c}$ from $\beta(c,y)$ by replacing $\wedge$ with $\sqcap$ and every atom $P(t_1,\ldots,t_n)$ with $\forall p_1,\ldots,p_n.P$, where $p_i=f_c$ if $t_i=c$ and $p_i=rf_c$ if $t_i=y$, and add the axiom $\top\sqsubseteq C_{r,c}$ to $\Tmc'$. This ensures that, in every model $\Imc$ of $\Amc'$ and $\Tmc'$, for all elements $d$ that are reachable from a named individual by a sequence of role connections, we have that $f_c^\Imc(d)=c$.
	
	We can now replace every concept of the form $\exists f.{=_c}$ in $\Tmc$ with a concept $C_{f=c}$ that is obtained from $\beta(c,y)$ by replacing $\land$ with $\sqcap$ and atoms $P(t_1,\ldots,t_n)$ with $\exists f_1,\ldots,f_n.P$, where $f_i=f_c$ if $t_i=c$ and $f_i=f$ if $t_i=y$.
    Similarly, we can replace feature roles ${=_c}(f)$ by $\gamma_{f=c}$ obtained from $\beta(c,y)$ by replacing $\land$ with $\cap$ and atoms $P(t_1,\ldots,t_n)$ with $P(\alpha_1,\ldots,\alpha_n)$, where $\alpha_i=f_c$ if $t_i=c$ and $\alpha_i=f$ if $t_i=y$.
    The constructed TBox~$\Tmc'$ and ABox~$\Amc'$ are of polynomial size w.r.t.\ the size of~\Tmc since each assertion, concept, or concrete role $f_c(a,c)$, $\exists f_c.\top_\cDom$, $C_{r,c}$, $C_{f=c}$, $\gamma_{f=c}$ is of linear size, the replacements of~$\exists f.{=_c}$ by~$C_{f=c}$ are independent of each other since $\alpha(c,y)$ cannot contain the additional singleton predicates, and similarly for $\gamma_{f=c}$.

    Let now \Imc be a model of~\Tmc.
    By interpreting $f_c$ as the total function with $f_c(d)=c$ for all $d\in\Delta^\Imc$ and, optionally, $a^*$ as an arbitrary element from~$\Delta^\Imc$, we obtain a model of~$\Amc'$ and~$\Tmc'$.
    Conversely, let $\Imc'$ be a model of~$\Amc'$ and~$\Tmc'$.
    We restrict $\Imc'$ to the subdomain of all elements reachable from a named element $a^{\Imc'}$ by a chain of role connections $r^{\Imc'}$ for $r\in\NR$.
    The resulting interpretation~$\Imc''$ is still a model of~$\Amc'$ and~$\Tmc'$ since the evaluation of concepts on $\Delta^{\Imc''}$ does not depend on unconnected elements from $\Delta^{\Imc''}\setminus\Delta^{\Imc'}$ (see \Cref{sec:dls}).
    The new axioms in $\Amc'$ and $\Tmc'$ ensure that $f_c(d)=c$ holds for all $d\in\Delta^{\Imc''}$, and therefore $\Imc''$ is also a model of~\Tmc.
    \qed
\end{proof}

\LemHomogeneous*
\begin{proof}
    Let \Tmc be an \ALCOSCCD TBox and \Amc an ABox containing feature assertions.
    Since predicate assertions can be expressed by TBox axioms, it suffices to show how to simulate the feature assertions by using predicate assertions.
    Let $\Amc'$ result from \Amc by removing all feature assertions and adding the predicate assertions $P(f_1(a_1),\dotsc,f_k(a_k)) \in \Amc'$ for all combinations of feature assertions $f_i(a_i,c_i) \in \Amc$, $i = 1,\dotsc,k$, with $(c_1,\dotsc,c_k) \in P^D$.
    We can check $(c_1,\dotsc,c_k) \in P^D$ in exponential time since \cDom is an \expadm concrete domain with constants.
    However, the size of $\Amc'$ is polynomial in the input, since the predicates of~$\cDom$ are fixed.
  
    It is easy to see that every model of \Tmc and \Amc is also a model of $\Amc'$.
    Conversely, let \Imc be a model of \Tmc and $\Amc'$ and let $\cDom_\Amc$, $\cDom_\Imc$ be the finite substructures of $\cDom$ over the domains
    \begin{center}
      $D_\Amc := \{ c \mid f(a,c) \in \Amc \}$ and $D_\Imc := \{ f^{\Imc}(a) \mid f(a,c) \in \Amc \}$,
    \end{center}
    respectively.
    By definition of~$\Amc'$ and \JEPD, we have $\big(f_1^{\Imc}(a_1^\Imc),\dotsc,f_k^{\Imc}(a_k^\Imc)\big) \in P^D$ iff $(c_1,\dotsc,c_k) \in P^D$, for all combinations of feature assertions $f_i(a_i,c_i)$ in~\Amc.
    By~\JD, this in particular implies that $f_1^\Imc(a_1^\Imc)=f_2^\Imc(a_2^\Imc)$ iff $f_1(a_1,c),f_2(a_2,c)\in \Amc$ for some value $c\in D$, which means that the two substructures have the same number of elements.
    Moreover, by the first equivalence, the mapping $f^{\Imc}(a^\Imc) \mapsto c$ for all $f(a,c) \in \Amc$ is an isomorphism between $\cDom_\Imc$ and $\cDom_\Amc$.
    Since $\cDom$ is homogeneous, there exists an isomorphism $h \colon D \to D$ such that $h(f^{\Imc}(a^\Imc)) = c$ if $f(a,c) \in \Amc$.
    Now, we obtain $\Imc'$ from \Imc by changing the interpretation of feature names to $f^{\Imc'}(d) := h(f^{\Imc}(d))$ iff this value is defined for $f \in \NF$ and $d \in \Delta^{\Imc}$.
    Since $h$ is an isomorphism, we have $C^\Imc=C^{\Imc'}$ for all concepts~$C$, including CD-restrictions and $\suc$-restrictions with feature roles, which shows that $\Imc'$ is a model of~\Tmc.
    Moreover, it also satisfies all feature assertions $f(a,c)\in\Amc$ since $f^{\Imc'}(a^{\Imc'})=h(f^\Imc(a^\Imc))=c$ by construction.
    \qed
\end{proof}

\section{Undecidable Extensions --- Reductions}
\label{app:undecidability}

Throughout this section, we assume that the domain set of \cDom is infinite and that \cDom is \JD (cf.\ \Cref{sec:preliminaries}).
If equality over \cDom is defined by the quantifier-free, equality-free formula $\phi_{=}(x,y)$, we write $\eqterm{f}{\nx g}$ to denote the set term obtained by replacing every atom $P(x_1,\dotsc,x_k)$ in $\phi_{=}(x,y)$ with the feature role $P(\alpha_1,\dotsc,\alpha_k)$, where $\alpha_i = f$ if $x_i = x$ and $\alpha_i = \nx g$ if $x_i = y$ for $i = 1,\dotsc,k$, and every Boolean connective with the corresponding set operation.
\subsection{Local and global cardinality constraints}

We reduce the solvability of a system of Diophantine equations \Emc over the natural numbers to the consistency of a \ALCSCCppD TBox $\Tmc_{\Emc}$.
Without loss of generality, we assume that every equation in \Emc is of the form $x = y \cdot z$, $x = y + z$ or $x = n$ with $x$, $y$, $z$ variables and $n$ a natural number.
The TBox $\Tmc_{\Emc}$ contains a concept name $A_x$ for every variable $x$ occurring in \Emc and a conjunction of CIs $\Tmc_{\efr}$ for every equation $\efr \in \Emc$, built as follows:
\begin{itemize}
    \item if $\efr = (x = n)$, then $\Tmc_{\efr}$ contains the CI $\top \sqsubseteq \sat(|A_x| = n)$;
    \item if $\efr = (x = y + z)$, then $\Tmc_{\efr}$ contains the CI $\top \sqsubseteq \sat(|A_x| = |A_y| + |A_z|)$;
    \item if $\efr = (x = y \cdot z)$, then $\Tmc_{\efr}$ contains the following CIs and \emph{concept definitions} $C \equiv D$, which are a shorthand for $C \sqsubseteq D$ and $D \sqsubseteq C$:
    \begin{itemize}
        \item $A_y \equiv \sat(|r_{\efr}| \ge 1)$ and $A_y \equiv \sat(r_{\efr} = A_z)$ where $r_{\efr}$ is a fresh role name;
        \item $\top \sqsubseteq \sat(s_{\efr} = \eqterm{f_{\efr}}{\nx g_{\efr}})$ and $\top \sqsubseteq \sat(|\eqterm{g_{\efr}}{\nx f_{\efr}}| \le 1)$ as well as $A_x \sqsubseteq \sat(|\eqterm{g_{\efr}}{\nx f_{\efr}}| \ge 1)$ with $s_{\efr} \in \NR$ and $f_{\efr}, g_{\efr} \in \NF$ fresh names;
        \item $\top \sqsubseteq \sat(|r_{\efr}| = |s_{\efr}|)$.
    \end{itemize}
\end{itemize}
The key result for the correctness of the reduction is the following.
\begin{lemma}
    \label{lem:hilbert}
    If $\efr = (x = y \cdot z)$ then $|A_x^{\Imc}| = |A_y^{\Imc}| \cdot |A_z^{\Imc}|$ for every model \Imc of $\Tmc_{\efr}$.
\end{lemma}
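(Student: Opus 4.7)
The plan is to chain three equalities $|A_x^\Imc| = |s_\efr^\Imc| = |r_\efr^\Imc| = |A_y^\Imc|\cdot|A_z^\Imc|$, one link per group of axioms in~$\Tmc_\efr$.

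For the rightmost link, I would read the two equivalences $A_y \equiv \sat(|r_\efr|\ge 1)$ and $A_y \equiv \sat(r_\efr = A_z)$ pointwise at each $d\in\Delta^\Imc$. The second forces $r_\efr^\Imc(d) = A_z^\Imc$ whenever $d\in A_y^\Imc$, while the contrapositive of the first gives $r_\efr^\Imc(d) = \emptyset$ whenever $d\notin A_y^\Imc$. Hence $r_\efr^\Imc = A_y^\Imc \times A_z^\Imc$, and in particular $|r_\efr^\Imc| = |A_y^\Imc|\cdot|A_z^\Imc|$.

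For the middle link, the CI $\top \sqsubseteq \sat(|r_\efr| = |s_\efr|)$ gives $|r_\efr^\Imc(d)| = |s_\efr^\Imc(d)|$ pointwise at each $d$; since \ALCSCCpp is evaluated over finite interpretations, summing over the finite domain yields the global equality $|r_\efr^\Imc| = |s_\efr^\Imc|$.

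For the leftmost link, the CI $\top \sqsubseteq \sat(s_\efr = \eqterm{f_\efr}{\nx g_\efr})$, together with the construction of $\eqterm{\cdot}{\cdot}$ from $\phi_=$, forces the exact description $s_\efr^\Imc = \{(d,e) \mid f_\efr^\Imc(d) = g_\efr^\Imc(e)\}$. Re-indexing by the second coordinate gives $|s_\efr^\Imc| = \sum_{e\in\Delta^\Imc} N_e$ with $N_e := |\{d \mid f_\efr^\Imc(d) = g_\efr^\Imc(e)\}|$. The global CI $\top \sqsubseteq \sat(|\eqterm{g_\efr}{\nx f_\efr}|\le 1)$, read at~$e$, amounts exactly to $N_e \le 1$, while $A_x \sqsubseteq \sat(|\eqterm{g_\efr}{\nx f_\efr}|\ge 1)$ forces $N_e = 1$ whenever $e\in A_x^\Imc$; interpreting the $A_x$-axiom in its intended equivalence form, $N_e = 1$ iff $e\in A_x^\Imc$. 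Therefore $|s_\efr^\Imc| = |A_x^\Imc|$, closing the chain.

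The main obstacle is this third link: the two feature roles $\eqterm{f_\efr}{\nx g_\efr}$ and $\eqterm{g_\efr}{\nx f_\efr}$ appear with their coordinates swapped -- the former describes $s_\efr^\Imc$ from the perspective of the first coordinate, while the latter constrains fibres over the second coordinate -- and matching the two re-indexings is the crux of the argument. The finiteness of~\Imc, essential to \ALCSCCpp semantics, is what lets the pointwise middle equality become a global one, and the \JD property of~\cDom is what makes the auxiliary set term $\eqterm{\cdot}{\cdot}$ definable in the first place.
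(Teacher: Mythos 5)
Your proposal follows the same three-link chain $|A_x^{\Imc}| = |s_{\efr}^{\Imc}| = |r_{\efr}^{\Imc}| = |A_y^{\Imc}|\cdot|A_z^{\Imc}|$ as the paper, with essentially the same justification for each link: the two $A_y$-equivalences yield $r_{\efr}^{\Imc}=A_y^{\Imc}\times A_z^{\Imc}$, the CI $\top\sqsubseteq\sat(|r_{\efr}|=|s_{\efr}|)$ together with finiteness gives the middle equality, and the $\le 1$ and $\ge 1$ constraints on $\eqterm{g_{\efr}}{\nx f_{\efr}}$ give the left one. The single point where you hedge --- reading $A_x\sqsubseteq\sat(|\eqterm{g_{\efr}}{\nx f_{\efr}}|\ge 1)$ ``in its intended equivalence form'' so that $N_e=1$ holds \emph{only} for $e\in A_x^{\Imc}$ --- is precisely the step the paper also passes over without further argument (it asserts that all the CIs imply $e\in A_x^{\Imc}$ for every $(d,e)\in s_{\efr}^{\Imc}$), so your account matches the paper's proof, including at that spot.
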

\begin{proof}
    First, we show that $r_{\efr}^{\Imc} = A_y^{\Imc} \times A_z^{\Imc}$.
    If $(d,e) \in r_{\efr}^{\Imc}$ holds, then $d \in A_y^{\Imc}$ follows from $A_y \equiv \sat(|r_{\efr}| \ge 1)$; in turn, this implies that $e \in A_z^{\Imc}$ due to $A_y \equiv \sat(r_{\efr} = A_z)$.
    Vice versa, if $d \in A_y^{\Imc}$ and $e \in A_z^{\Imc}$, then $A_y \equiv \sat(r_{\efr} = A_z)$ implies that $(d,e) \in r_{\efr}^{\Imc}$.
    We conclude that $(d,e) \in r_{\efr}^{\Imc}$ iff $d \in A_y^{\Imc}$ and $e \in A_z^{\Imc}$.

    Next, we show that $|s_{\efr}^{\Imc}| = |A_x^{\Imc}|$.
    To show that $|s_{\efr}^{\Imc}| \ge |A_x^{\Imc}|$ holds, we observe that for every $e \in A_x^{\Imc}$ there exists $d \in \Delta^{\Imc}$ such that $f_{\efr}^{\Imc}(d) = g_{\efr}^{\Imc}(e)$ by $A_x \sqsubseteq \sat(|\eqterm{g_{\efr}}{\nx f_{\efr}}| \ge 1)$, and this implies that $(d,e) \in s_{\efr}^{\Imc}$ by $\top \sqsubseteq \sat(s_{\efr} = \eqterm{f_{\efr}}{\nx g_{\efr}})$. Thus, $s_{\efr}^{\Imc}$ contains at least as many tuples as elements of $A_x^{\Imc}$.
    To establish $|s_{\efr}^{\Imc}| \le |A_x^{\Imc}|$, we notice that the function $h$ mapping $(d,e) \in s_{\efr}^{\Imc}$ to $e \in \Delta^{\Imc}$ is an injective function from $s_{\efr}^{\Imc}$ to $A_x^{\Imc}$.
    All the CIs in $\Tmc_{\efr}$ imply that $e \in A_x^{\Imc}$.
    Assuming that $h((d,e)) = h((d',e'))$ and thus $e = e'$, the fact that
    \begin{equation*}
        f_{\efr}^{\Imc}(d) = g_{\efr}^{\Imc}(e) = g_{\efr}^{\Imc}(e') = f_{\efr}^{\Imc}(d')
    \end{equation*}
    together with $\top \sqsubseteq \sat(|\eqterm{g_{\efr}}{\nx f_{\efr}}| \le 1)$ implies $d = d'$, hence that $h$ is injective.
    Finally, we use the CI $\top \sqsubseteq \sat(|r_{\efr}| = |s_{\efr}|)$ to conclude that
    \begin{equation*}
        |A_x^{\Imc}| = |s_{\efr}^{\Imc}| = |r_{\efr}^{\Imc}| = |A_y^{\Imc} \times A_z^{\Imc}| = |A_y^{\Imc}| \cdot |A_z^{\Imc}|,
    \end{equation*}
    where the last identity holds because the domain of \Imc is finite.
    \qed
\end{proof}
\begin{theorem}
    A system of Diophantine equations \Emc has a solution over the natural numbers iff the TBox $\Tmc_{\Emc}$ is consistent.
\end{theorem}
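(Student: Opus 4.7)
The plan is to prove each direction of the biconditional in turn. For the direction from consistency of $\Tmc_\Emc$ to solvability of $\Emc$, let $\Imc$ be any model of $\Tmc_\Emc$; this model is necessarily finite by the semantics of $\ALCSCCppD$. I would define a candidate solution by $\nu(x) := |A_x^\Imc|$ for every variable $x$ appearing in $\Emc$, and then check that each equation is satisfied: constant equations $x = n$ follow from the CI $\top \sqsubseteq \sat(|A_x| = n) \in \Tmc_\efr$; addition equations $x = y + z$ follow from $\top \sqsubseteq \sat(|A_x| = |A_y| + |A_z|) \in \Tmc_\efr$; and multiplication equations $x = y \cdot z$ follow immediately from the preceding lemma, which guarantees $|A_x^\Imc| = |A_y^\Imc| \cdot |A_z^\Imc|$ in every model of $\Tmc_\efr$.

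For the converse direction, I would first appeal to the standard reduction from Hilbert's tenth problem to its restriction to positive-integer solutions, which allows us to assume w.l.o.g.\ that $\nu(x) \ge 1$ for every variable $x$. Given such a solution $\nu$, I would construct a finite model $\Imc$ of $\Tmc_\Emc$ as follows: pick a sufficiently large finite domain $\Delta^\Imc$ and fix, for every variable $x$, a subset $A_x^\Imc \subseteq \Delta^\Imc$ with $|A_x^\Imc| = \nu(x)$. For every multiplication equation $\efr = (x = y \cdot z)$, set $r_\efr^\Imc := A_y^\Imc \times A_z^\Imc$; using that $D$ is infinite, choose pairwise distinct constants $\{c_e^\efr : e \in A_x^\Imc\} \subseteq D$ (with disjoint choices across different $\efr$), and define $f_\efr^\Imc(e) := g_\efr^\Imc(e) := c_e^\efr$ for $e \in A_x^\Imc$, leaving these features undefined on $\Delta^\Imc \setminus A_x^\Imc$. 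The definable set term $\eqterm{f_\efr}{\nx g_\efr}$ then yields $s_\efr^\Imc = \{(e,e) : e \in A_x^\Imc\}$, so that $|s_\efr^\Imc| = \nu(x) = \nu(y) \cdot \nu(z) = |r_\efr^\Imc|$.

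It then remains to verify the CIs in each $\Tmc_\efr$: the constant and addition CIs are immediate from the chosen cardinalities; the equivalences $A_y \equiv \sat(|r_\efr| \ge 1)$ and $A_y \equiv \sat(r_\efr = A_z)$ hold because $r_\efr^\Imc(d) = A_z^\Imc \ne \emptyset$ for $d \in A_y^\Imc$ and $r_\efr^\Imc(d) = \emptyset$ otherwise; the CIs $\top \sqsubseteq \sat(|\eqterm{g_\efr}{\nx f_\efr}| \le 1)$ and $A_x \sqsubseteq \sat(|\eqterm{g_\efr}{\nx f_\efr}| \ge 1)$ hold since the $c_e^\efr$ are pairwise distinct and $g_\efr^\Imc$ is defined only on $A_x^\Imc$; and $\top \sqsubseteq \sat(|r_\efr| = |s_\efr|)$ follows from the cardinality computation. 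Because each multiplication equation uses its own fresh role and feature names, the constructions for different equations do not interfere syntactically.

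The main obstacle will be handling the potential interference across multiplication equations and the edge case of zero-valued variables. Using fresh names per equation and disjoint constant sets (exploiting that $D$ is infinite) handles the symbolic side, while the positivity assumption resolves the arithmetic side: if some $\nu(z) = 0$ while $\nu(y) > 0$, then $r_\efr^\Imc = A_y^\Imc \times A_z^\Imc = \emptyset$ together with the two equivalences defining $A_y$ would force $A_y^\Imc = \emptyset$, contradicting $|A_y^\Imc| = \nu(y)$; the reduction to positive solutions is precisely what avoids this degeneracy. The concrete-domain side of the construction remains clean because \JD guarantees that $\eqterm{\cdot}{\cdot}$ faithfully captures equality of feature values, and freely chosen pairwise distinct constants yield no further spurious interactions through predicates of \cDom.
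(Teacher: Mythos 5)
Your forward direction coincides with the paper's: read off $\nu(x):=|A_x^\Imc|$ and invoke the multiplication lemma. The genuine gap is in your model construction for the converse, specifically in how you realize $s_\efr$. In \ALCSCCpp, a role name inside a $\sat$-restriction still denotes the set of role successors of the \emph{current} individual; only the universe is widened to $\Delta^\Imc$. This is forced by the paper's example $\sat(\mathsf{likes}\cap\mathsf{Car}=\mathsf{Car})$ (which would be ill-typed if $\mathsf{likes}$ denoted a set of pairs) and by the axiom $A_y\equiv\sat(|r_\efr|\ge 1)$ itself, which is only meaningful if $|r_\efr|$ depends on the individual being tested. Consequently, $\top\sqsubseteq\sat(|r_\efr|=|s_\efr|)$ demands $|r_\efr^\Imc(d)|=|s_\efr^\Imc(d)|$ at \emph{every} $d$, not merely that the global relations have equally many pairs (the global equality used in the multiplication lemma is obtained by summing the pointwise ones). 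Your choice of $f_\efr^\Imc=g_\efr^\Imc$ injective on $A_x^\Imc$ forces $s_\efr^\Imc$, via $\top\sqsubseteq\sat(s_\efr=\eqterm{f_\efr}{\nx g_\efr})$, to be the diagonal on $A_x^\Imc$, so $|s_\efr^\Imc(d)|\le 1$ for all $d$, while $|r_\efr^\Imc(d)|=\nu(z)$ for $d\in A_y^\Imc$. As soon as $\nu(z)\ge 2$, or $d\in A_x^\Imc\setminus A_y^\Imc$, the CI is violated and your interpretation is not a model.

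The repair is to spread $s_\efr$ out: define $f_\efr^\Imc$ injectively on $A_y^\Imc$ (not on $A_x^\Imc$), partition $A_x^\Imc$ into $\nu(y)$ blocks of size $\nu(z)$, and set $g_\efr^\Imc(e):=f_\efr^\Imc(d)$ where $d$ is the $A_y$-element owning the block of $e$. Then each $d\in A_y^\Imc$ has exactly $\nu(z)$ many $s_\efr$-successors, matching $|r_\efr^\Imc(d)|$; each $e\in A_x^\Imc$ has exactly one $s_\efr$-predecessor, which yields the $\le 1$ and $\ge 1$ constraints on $\eqterm{g_\efr}{\nx f_\efr}$; and globally $|s_\efr^\Imc|=\nu(y)\cdot\nu(z)=|A_x^\Imc|$. (The paper's own enumeration via pairs $(i,i\cdot j)$ is aiming at exactly this balancing; an explicit bijection $(i,j)\mapsto(i-1)\nu(z)+j$ onto $A_x^\Imc$ is the clean way to do it.) The rest of your verification is sound, and your point about zero-valued variables is well taken and is in fact glossed over by the paper: if $\nu(y)>0$ and $\nu(z)=0$, the two equivalences defining $A_y$ are contradictory, so one must first normalize to positive solutions of the Diophantine system, exactly as you propose.
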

\begin{proof}
    Assume that \Imc is a finite model of $\Tmc_{\Emc}$.
    Then, the assignment $x_\star := |A_x^{\Imc}|$ to every variable $x$ is a solution of all equations $\efr \in \Emc$.
    This is trivial for $\efr = (x = n)$ and $\efr = (x = y + z)$ , and~\Cref{lem:hilbert} shows that this holds for $\efr = (x = y \cdot z)$.

    Vice versa, assume that \Emc has a solution assigning the natural number $x_\star$ to every variable $x$, and that every value assigned by this solution is smaller or equal than the natural number $n_{\Emc}$.
    We define the finite interpretation $\Imc$ with domain $\Delta^{\Imc} := \{1,\dotsc,n_{\Emc}\}$ with $A_x^{\Imc} := \{1,\dotsc,x_\star\}$ for every variable $x$.
    Assuming that $\efr = (x = y \cdot z)$, we define the interpretation of role names
    \begin{equation*}
        r_{\efr}^{\Imc} := A_y^{\Imc} \times A_z^{\Imc} \;\text{and}\; s_{\efr}^{\Imc} := \{ (i, i \cdot j) \mid i \in A_y^{\Imc}, 1 \le j \le z_\star\}.
    \end{equation*}
    To define the interpretation of feature names $f_{\efr}$, $g_{\efr}$, we assume that $h_{\efr}$ is an injective mapping from $A_y^{\Imc}$ to $D$, which always exists since we assumed \cDom to be infinite.
    Then, we define $f_{\efr}^{\Imc}(i) := h_{\efr}(i)$ iff $i \in A_y^{\Imc}$ and $g_{\efr}^{\Imc}(j) := f_{\efr}^{\Imc}(i)$ iff $i \in A_y^{\Imc}$ and $(i,j) \in s_{\efr}^{\Imc}$.
    It is then straightforward to verify that \Imc is a finite model of $\Tmc_{\Emc}$. 
\end{proof}

\begin{corollary}
    If the concrete domain \cDom is infinite and \JD, then the consistency problem for \ALCSCCppD TBoxes is undecidable.
\end{corollary}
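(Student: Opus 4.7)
The plan is to obtain this corollary as a direct consequence of the preceding theorem together with the undecidability of Hilbert's tenth problem~\cite{HilbertTenth}. Given a system $\Emc$ of Diophantine equations over the natural numbers, the TBox $\Tmc_\Emc$ described above is effectively constructible: each equation contributes a constant number of CIs over fresh concept, role, and feature names, so the construction is polynomial in the size of~$\Emc$. By the preceding theorem, $\Emc$ has a solution over~$\Nbb$ iff $\Tmc_\Emc$ is consistent, which yields a many-one reduction from Hilbert's tenth problem to the consistency problem for \ALCSCCppD TBoxes; since the former is undecidable, so is the latter.

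What I would verify explicitly is that every syntactic ingredient used in $\Tmc_\Emc$ is actually legal in \ALCSCCppD under the hypotheses of the corollary. The concepts $\sat(|A_x|=n)$, $\sat(|A_x|=|A_y|+|A_z|)$, $\sat(|r_\efr|\ge 1)$, $\sat(r_\efr=A_z)$, and $\sat(|r_\efr|=|s_\efr|)$ are straightforward $\sat$-restrictions. The nontrivial ingredient is the set term $\eqterm{f_\efr}{\nx g_\efr}$ (and $\eqterm{g_\efr}{\nx f_\efr}$) appearing in the multiplication gadget: here the \JD hypothesis is needed so that the quantifier-free, equality-free formula $\phi_=(x,y)$ exists and can be rewritten, as described at the start of this section of the appendix, into a Boolean combination of feature roles $P(\alpha_1,\dots,\alpha_k)$ with $\alpha_i\in\{f_\efr,\nx g_\efr\}$.

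The remaining hypothesis, that $D$ is infinite, is genuinely needed for the reduction to work in the ``if'' direction of the theorem: given a solution $(x_\star)_x$ of~$\Emc$, the model construction defines the interpretation of the feature pair $f_\efr,g_\efr$ by choosing an injection $h_\efr\colon A_y^\Imc\to D$ and setting $f_\efr^\Imc(i):=h_\efr(i)$ on $A_y^\Imc$ and $g_\efr^\Imc(j):=f_\efr^\Imc(i)$ whenever $(i,j)\in s_\efr^\Imc$. The injectivity of $h_\efr$ is what ensures that $|s_\efr^\Imc|=|A_y^\Imc|\cdot|A_z^\Imc|$ agrees with $|A_x^\Imc|$, and it is available precisely because $A_y^\Imc$ is finite and $D$ is infinite. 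Without an infinite carrier one could not, in general, provide enough distinct feature values to witness all required products.

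No real obstacle arises: the difficult work---designing the multiplication gadget and proving $|A_x^\Imc|=|A_y^\Imc|\cdot|A_z^\Imc|$---is already carried out in the preceding lemma and theorem, and the corollary simply packages that reduction with Matiyasevich's theorem. The only thing to double-check is that no step of the reduction implicitly uses properties of \cDom beyond infiniteness and \JD (e.g.\ specific arithmetic predicates), which a line-by-line inspection of the CIs in $\Tmc_\efr$ confirms.
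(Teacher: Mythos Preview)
Your proposal is correct and mirrors the paper's approach exactly: the corollary is obtained directly from the preceding theorem (the reduction from Diophantine solvability to consistency of $\Tmc_\Emc$) together with Matiyasevich's undecidability result, and your checks that \JD and infiniteness of $D$ are precisely the hypotheses consumed by the construction match the paper's reasoning.
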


\subsection{Transitive roles} We show how to reduce the solvability of a tiling problem $P$ to the consistency of a restricted \SSCC TBox $\Tmc_{P}$.
\begin{definition}
    A \emph{tiling problem} $P := (T,H,V)$ consists of a finite set $T$ of \emph{tile types}
    and binary relations $H, V \subseteq T \times T$ called \emph{horizontal} and \emph{vertical matching} conditions, respectively.
    The function $\pi \colon \mathbb{N} \times \mathbb{N} \to T$ is a 
    \emph{solution} of $P$ if for all $i, j \in \mathbb{N}$ it holds that
    $(\pi(i,j),\pi(i+1,j)) \in H$ and $(\pi(i,j),\pi(i,j+1)) \in V$.
\end{definition}
Adapting the reduction introduced in~\cite{HoSaTo00}, we introduce concept names $A_t$ for $t \in T$ and role names $h$ and $v$ meant to capture the matching conditions of $P$.
If \Imc is a model of $\Tmc_{P}$, we ensure that every $d \in \Delta^{\Imc}$ has exactly one tile type and enforce the existence of exactly one $h$- and one $v$-successor for $d$ using the CI
\begin{equation}
    \begin{split}
        \top &\sqsubseteq {\textstyle \bigsqcup_{t \in T}} A_t \sqcap {\textstyle \bigsqcap_{t \ne t' \in T}} \neg (A_{t} \sqcap A_{t'}) \\
        \top &\sqsubseteq \suc(|h| = 1 \land |v| = 1).
    \end{split}
    \tag{\textsf{successors}}\label{eqn:tiling-successors} 
\end{equation}
The matching conditions of $P$ are encoded in $\Tmc_{P}$ by adding for $t \in T$ the CIs
\begin{equation}
    A_t \sqsubseteq \suc( h \subseteq {\textstyle \bigsqcup_{(t,t') \in H}} A_{t'}) \;\text{and}\; 
    A_t \sqsubseteq \suc(v \subseteq {\textstyle \bigsqcup_{(t,t') \in V}} A_{t'}).
    \tag{\textsf{matching}}\label{eqn:tiling-matching}
\end{equation}
For every model \Imc of $\Tmc_{P}$ we want to ensure that every $v$-successor of a $h$-successor of $d \in \Delta^{\Imc}$ is also a role successor of $d$, and similarly for every $h$-successor of a $v$-successor.
As \Imc must be finitely branching, though, it is not possible to simply use one transitive role $r$ that includes both $h$ and $v$, as this would imply that $d$ has infinitely many successors.
Instead, we introduce role names $h_i$, $v_j$ and $r_{ij}$ with $i,j \in \{0,1\}$, where $r_{ij}$ is a transitive superrole of $h_i$ and $v_j$ thanks to
\begin{equation}
    \mathsf{trans}(r_{ij}) \;\text{and}\; h_i \sqsubseteq r_{ij}\;\text{and}\; v_j \sqsubseteq r_{ij} \;\text{for}\; i,j \in \{0,1\},
    \tag{\textsf{super}}\label{eqn:tiling-super}
\end{equation}
where $r \sqsubseteq s$ is an abbreviation for $\top \sqsubseteq \suc(r \subseteq s)$.
Then, we partition the domain with four concept names $A_{ij}$ with $i,j \in \{0,1\}$ using a similar CI as the one used in~\eqref{eqn:tiling-successors}.
If $d$ is labelled with $A_{ij}$, then none of its successors should be labelled with the same concept $A_{ij}$; further, for $d$ the roles $h_i$ and $v_j$ act as $h$ and $v$ and connect to other individuals $d'$ labelled by concepts $A_{i'j'}$ following the CIs
\begin{equation}
    \begin{split}
        A_{ij} &\sqsubseteq \suc(h \subseteq A_{(1-i)j} \land v \subseteq A_{i(1-j)}) \;\text{and} \\
        A_{ij} &\sqsubseteq \suc(h = h_i \land v = v_j) \;\text{for}\; i,j \in \{0,1\}.
    \end{split}
     \tag{\textsf{local}}\label{eqn:tiling-local}
\end{equation}
These axioms enforce an alternating pattern of roles that ensures that, if $d$ belongs to $A_{ij}$, then it has finitely many $r_{ij}$-successors.

What is left is to ensure for models \Imc of $\Tmc_{P}$ is that the $v$-successor of the $h$-successor of $d \in \Delta^{\Imc}$ is equal to the $h$-successor of the $v$-successor of $d$.
Since both are also successors of $d$ thanks to the presence of transitive roles, we force them to be equal by introducing the CI
\begin{equation}
    \top \sqsubseteq \suc(|h^c \cap v^c| = 1).\tag{$\star$}\label{eqn:tiling}
\end{equation}
\tikzset{help lines/.style={lightgray, very thin}}
\tikzset{ordering/.style={lightgray, very thin}}
\begin{figure}[t]
  \centering
  \begin{tikzpicture}[basic settings]

    \begin{scope}[node distance=1.6cm]
      \node[] (d0) {$A_{00}$};
      \node[] (d1) [above of=d0]{$A_{01}$};
      \node[] (d2) [above of=d1]{$A_{00}$};
      \node[] (d3) [above of=d2]{$\vdots$};
      \node[] (e0) [right of=d0]{$A_{10}$};
      \node[] (e1) [above of=e0]{$A_{11}$};
      \node[] (e2) [above of=e1]{$A_{10}$};
      \node[] (e3) [above of=e2]{$\vdots$};
      \node[] (f0) [right of=e0]{$A_{00}$};
      \node[] (f1) [above of=f0]{$A_{01}$};
      \node[] (f2) [above of=f1]{$A_{00}$};
      \node[] (f3) [above of=f2]{$\vdots$};
      \node[] (g0) [right of=f0]{$\dotsb$};
      \node[] (g1) [above of=g0]{$\dotsb$};
      \node[] (g2) [above of=g1]{$\dotsb$};
    \end{scope}

    \begin{scope}
      \path (d0)
        edge[role=red] node [red,right, midway] {$v_0$} node [gray,align=center,sloped,midway] {\scriptsize $r_{00}$ $r_{10}$} (d1)
        edge[role=blue] node [blue,above,midway] {$h_0$} node [gray,below,midway] {\scriptsize $r_{00}$ $r_{01}$} (e0);
      \path (d1)
        edge[role=red] node [red,right, midway] {$v_1$} node [gray,align=center,sloped,midway] {\scriptsize $r_{01}$ $r_{11}$} (d2)
        edge[role=blue] node [blue,above,midway] {$h_0$} node [gray,below,midway] {\scriptsize $r_{00}$ $r_{01}$} (e1);
      \path (d2)
        edge[role=red] node [red,right, midway] {$v_0$} node [gray,align=center,sloped,midway] {\scriptsize $r_{00}$ $r_{10}$} (d3)
        edge[role=blue] node [blue,above,midway] {$h_0$} node [gray,below,midway] {\scriptsize $r_{00}$ $r_{01}$} (e2);
      \path (e0)
        edge[role=red] node [red,right, midway] {$v_0$} node [gray,align=center,sloped,midway] {\scriptsize $r_{00}$ $r_{10}$} (e1)
        edge[role=blue] node [blue,above,midway] {$h_1$} node [gray,below,midway] {\scriptsize $r_{10}$ $r_{11}$} (f0);
      \path (e1)
        edge[role=red] node [red,right, midway] {$v_1$} node [gray,align=center,sloped,midway] {\scriptsize $r_{01}$ $r_{11}$} (e2)
        edge[role=blue] node [blue,above,midway] {$h_1$} node [gray,below,midway] {\scriptsize $r_{10}$ $r_{11}$} (f1);
      \path (e2)
        edge[role=red] node [red,right, midway] {$v_0$} node [gray,align=center,sloped,midway] {\scriptsize $r_{00}$ $r_{10}$} (e3)
        edge[role=blue] node [blue,above,midway] {$h_1$} node [gray,below,midway] {\scriptsize $r_{10}$ $r_{11}$} (f2);
      \path (f0)
        edge[role=red] node [red,right, midway] {$v_0$} node [gray,align=center,sloped,midway] {\scriptsize $r_{00}$ $r_{10}$} (f1)
        edge[role=blue] node [blue,above,midway] {$h_0$} node [gray,below,midway] {\scriptsize $r_{00}$ $r_{01}$} (g0);
      \path (f1)
        edge[role=red] node [red,right, midway] {$v_1$} node [gray,align=center,sloped,midway] {\scriptsize $r_{01}$ $r_{11}$} (f2)
        edge[role=blue] node [blue,above,midway] {$h_0$} node [gray,below,midway] {\scriptsize $r_{00}$ $r_{01}$} (g1);
      \path (f2)
        edge[role=red] node [red,right, midway] {$v_0$} node [gray,align=center,sloped,midway] {\scriptsize $r_{00}$ $r_{10}$} (f3)
        edge[role=blue] node [blue,above,midway] {$h_0$} node [gray,below,midway] {\scriptsize $r_{00}$ $r_{01}$} (g2);
    \end{scope}

    \begin{scope}
      \path (d0)
        edge[role=purple, densely dotted] node [purple,sloped,below] {$r_{00}$} (e1);
      \path (d1)
        edge[role=purple, densely dotted] node [purple,sloped,below] {$r_{01}$} (e2);
      \path (e0)
        edge[role=purple, densely dotted] node [purple,sloped,below] {$r_{10}$} (f1);
      \path (e1)
        edge[role=purple, densely dotted] node [purple,sloped,below] {$r_{11}$} (f2);
    \end{scope}

    \end{tikzpicture}
    \caption{A representation of the structure enforced using transitive roles.}
    \label{fig:tiling}
\end{figure}
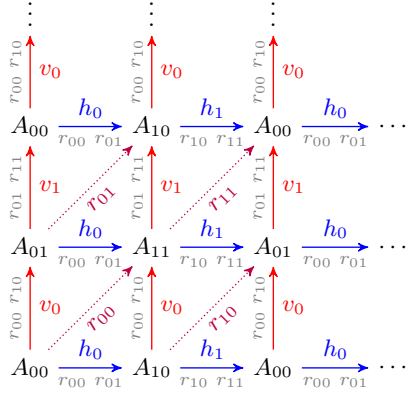
The effect of~\eqref{eqn:tiling-super}, \eqref{eqn:tiling-local} and~\eqref{eqn:tiling} on the models of $\Tmc_{P}$ is showed in~\Cref{fig:tiling}.
We establish the crucial property enjoyed by the models of~$\Tmc_{P}$ below.
\begin{lemma}
    \label{lem:tiling-link}
    If \Imc is a model of $\Tmc_{P}$, then the binary relations $h^{\Imc} \circ v^{\Imc}$ and $v^{\Imc} \circ h^{\Imc}$ coincide and are functional.
\end{lemma}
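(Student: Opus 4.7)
The proof plan is to use the four-coloring by the concept names $A_{ij}$ to pin down, for each $d$, four distinct witnesses in four different colors, and then to invoke the cardinality constraint \eqref{eqn:tiling} to force the "diagonal" witness obtained via $h$-then-$v$ to coincide with the one obtained via $v$-then-$h$. Functionality will come directly from the constraint $\suc(|h| = 1 \land |v| = 1)$ in \eqref{eqn:tiling-successors}, so the real content is showing the two compositions coincide.

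First I would fix an arbitrary $d \in \Delta^{\Imc}$ and use \eqref{eqn:tiling-successors} together with the partition CI to assign $d$ its unique color $A_{ij}$ and its unique $h$-successor $e$ and $v$-successor $e'$. By the second conjunct of \eqref{eqn:tiling-local}, the successors are actually reached via $h_i$ and $v_j$, respectively; by the first conjunct of \eqref{eqn:tiling-local} (applied at $d$), we get $e \in A_{(1-i)j}^{\Imc}$ and $e' \in A_{i(1-j)}^{\Imc}$. Iterating the same reasoning at $e$ and at $e'$ yields uniquely determined $f := v^{\Imc}(e)$ and $f' := h^{\Imc}(e')$, both of which lie in $A_{(1-i)(1-j)}^{\Imc}$. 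Since the four colors $A_{ij}, A_{(1-i)j}, A_{i(1-j)}, A_{(1-i)(1-j)}$ are pairwise disjoint, $f$ and $f'$ are both distinct from $d$, $e$, and $e'$.

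Next I would move $f$ and $f'$ into the successor set $\ars^{\Imc}(d)$, which is where the $\suc$-restriction in \eqref{eqn:tiling} is evaluated. The step $(d,e) \in h_i^{\Imc}$ gives $(d,e) \in r_{ij}^{\Imc}$ by \eqref{eqn:tiling-super}, and $(e,f) \in v_j^{\Imc}$ gives $(e,f) \in r_{ij}^{\Imc}$ likewise; transitivity of $r_{ij}$ then yields $(d,f) \in r_{ij}^{\Imc}$. The symmetric calculation gives $(d,f') \in r_{ij}^{\Imc}$. Since $f, f'$ lie in $A_{(1-i)(1-j)}^{\Imc}$ but the unique $h$- and $v$-successors of $d$ lie in $A_{(1-i)j}^{\Imc}$ and $A_{i(1-j)}^{\Imc}$, neither $f$ nor $f'$ belongs to $h^{\Imc}(d) \cup v^{\Imc}(d)$. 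Therefore $f, f' \in \sigma_d(h^c \cap v^c)$.

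Finally, \eqref{eqn:tiling} forces $|\sigma_d(h^c \cap v^c)| = 1$, so $f = f'$, which is exactly the equation $v^{\Imc}(h^{\Imc}(d)) = h^{\Imc}(v^{\Imc}(d))$. Functionality of the two compositions follows since $h^{\Imc}$ and $v^{\Imc}$ are each functional on $\Delta^{\Imc}$ by \eqref{eqn:tiling-successors}. The only subtle point, and the step that really uses the machinery of transitive super-roles, is the one that moves $f$ and $f'$ into $\ars^{\Imc}(d)$; without the transitive $r_{ij}$ the concept $\suc(|h^c \cap v^c| = 1)$ would be evaluated over a set from which $f$ and $f'$ are absent, and the argument would collapse.
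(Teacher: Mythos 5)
Your proof is correct and follows essentially the same route as the paper's: use the $A_{ij}$-coloring and \eqref{eqn:tiling-local} to place the two diagonal witnesses in $A_{(1-i)(1-j)}^{\Imc}$, pull them into $\ars^{\Imc}(d)$ via \eqref{eqn:tiling-super} and transitivity of $r_{ij}$, exclude them from $h^{\Imc}(d)\cup v^{\Imc}(d)$ by disjointness of the colors, and let \eqref{eqn:tiling} collapse them into a single element. You in fact spell out the transitivity chain $(d,e),(e,f)\in r_{ij}^{\Imc}\Rightarrow(d,f)\in r_{ij}^{\Imc}$ more explicitly than the paper does.
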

\begin{proof}
    If \Imc be a model of $\Tmc_{P}$, \eqref{eqn:tiling-successors} guarantees that for every $d \in \Delta^{\Imc}$ that there are four individuals $d_1,d_2,e_1,e_2 \in \Delta^{\Imc}$ such that $(d,d_1) \in h^{\Imc}$, $(d_1,d_2) \in v^{\Imc}$, $(d,e_1) \in v^{\Imc}$ and $(e_1,e_2) \in h^{\Imc}$.
    By~\eqref{eqn:tiling-super} and~\eqref{eqn:tiling-local} we deduce that $d \in A_{ij}^{\Imc}$ iff $(d,d_2) \in r_{ij}^{\Imc}$ and $(d,e_2) \in r_{ij}^{\Imc}$ for $i,j \in \{0,1\}$.
    Moreover, $d_2$ and $e_2$ are both different from $d_1$ and $e_1$, since the concepts $A_{ij}$ are disjoint for $i,j \in \{0,1\}$ and by~\eqref{eqn:tiling-local}.
    Together with~\eqref{eqn:tiling-successors}, this implies that $d_2, e_2 \notin h^{\Imc}(d)$ and $d_2, e_2 \notin v^{\Imc}(d)$.
    Then, we conclude by~\eqref{eqn:tiling} that $d_2 = e_2$ must hold and that $h^{\Imc} \circ v^{\Imc}$ and $v^{\Imc} \circ h^{\Imc}$ coincide.
    \qed
\end{proof}

\begin{lemma}
    \label{lem:tiling-reduction}
    The tiling problem $P$ has a solution iff $\Tmc_{P}$ is consistent.
\end{lemma}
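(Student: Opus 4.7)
The proof will establish both directions separately, relying critically on Lemma~\ref{lem:tiling-link} for the model-to-solution direction and on an explicit construction over~$\Nbb\times\Nbb$ for the other direction.

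For the $(\Leftarrow)$ direction, suppose $\Imc$ is a model of $\Tmc_P$. I would fix an arbitrary $d_{0,0} \in \Delta^\Imc$ and define a grid $\{d_{i,j} \mid i,j \in \Nbb\} \subseteq \Delta^\Imc$ inductively: by~\eqref{eqn:tiling-successors}, every element has exactly one $h$- and one $v$-successor, so set $d_{i+1,j}$ to be the unique $h$-successor of $d_{i,j}$ and $d_{i,j+1}$ the unique $v$-successor. The subtle point is that $d_{i+1,j+1}$ could be defined either as the $v$-successor of $d_{i+1,j}$ or as the $h$-successor of $d_{i,j+1}$, and we need these to coincide; this is exactly what \Cref{lem:tiling-link} guarantees, since $h^\Imc \circ v^\Imc = v^\Imc \circ h^\Imc$ is a function. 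Then define $\pi(i,j) := t$ where $t$ is the unique tile type with $d_{i,j} \in A_t^\Imc$ (uniqueness follows from \eqref{eqn:tiling-successors}); the CIs in \eqref{eqn:tiling-matching} immediately yield $(\pi(i,j),\pi(i+1,j)) \in H$ and $(\pi(i,j),\pi(i,j+1)) \in V$.

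For the $(\Rightarrow)$ direction, given a solution $\pi$ of $P$, I would construct $\Imc$ with $\Delta^\Imc := \Nbb\times\Nbb$ and put $A_t^\Imc := \{(i,j) \mid \pi(i,j) = t\}$, $h^\Imc := \{((i,j),(i{+}1,j))\}$, $v^\Imc := \{((i,j),(i,j{+}1))\}$. The checkerboard concepts are interpreted by parity: $A_{kl}^\Imc := \{(i,j) \mid i \equiv k \pmod 2,\; j \equiv l \pmod 2\}$, and the partial roles are set to agree with $h,v$ on the appropriate parity classes, so $h_k^\Imc := h^\Imc \cap (A_{k0}^\Imc \cup A_{k1}^\Imc)\times\Delta^\Imc$ and $v_l^\Imc := v^\Imc \cap (A_{0l}^\Imc \cup A_{1l}^\Imc)\times\Delta^\Imc$. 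Finally, $r_{kl}^\Imc$ is the transitive closure of $h_k^\Imc \cup v_l^\Imc$. Verifying the CIs~\eqref{eqn:tiling-successors}, \eqref{eqn:tiling-matching}, \eqref{eqn:tiling-super} and~\eqref{eqn:tiling} is straightforward; in particular, the diagonal commutation used in \eqref{eqn:tiling} is immediate for the grid.

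The main obstacle is confirming finite branching, i.e.\ that each $(i,j)$ has only finitely many $r_{kl}$-successors. This is where the checkerboard structure pays off: from $(i,j) \in A_{kl}^\Imc$, the role $h_k$ is enabled only on $A_{k0} \cup A_{k1}$ and the role $v_l$ only on $A_{0l} \cup A_{1l}$, so after one $h_k$ step we land in a class where $h_k$ is no longer enabled, and similarly for $v_l$. Consequently, any path in $h_k^\Imc \cup v_l^\Imc$ starting at $(i,j)$ visits at most the four elements $(i,j)$, $(i{+}1,j)$, $(i,j{+}1)$, $(i{+}1,j{+}1)$, so $r_{kl}^\Imc((i,j))$ is finite, and $\Imc$ is a finitely branching model of $\Tmc_P$.
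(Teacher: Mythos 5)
Your proposal is correct and follows essentially the same route as the paper: the model-to-solution direction builds the grid inductively via the unique $h$- and $v$-successors and invokes \Cref{lem:tiling-link} for well-definedness, and the solution-to-model direction uses the same $\Nbb\times\Nbb$ grid with parity classes $A_{kl}$. Your only deviation is defining $r_{kl}$ as the transitive closure of $h_k\cup v_l$ rather than the paper's explicit union with the diagonal pairs, but your checkerboard argument shows these coincide and also makes explicit the finite-branching check that the paper leaves implicit.
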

\begin{proof}
    Let \Imc be a model of $\Tmc_{P}$. We define the mapping $\pi \colon \mathbb{N} \times \mathbb{N} \to \Delta^{\Imc}$ inductively, as follows.
    First, let $\pi(0,0)$ be an arbitrary individual in \Imc, which exists since this set must be non-empty.
    Assuming that for $i,j \in \mathbb{N}$ the value $\pi(i,j) := d$ is defined, we define $\pi(i+1,j)$ as the unique $h$-successor of $d$ in \Imc $\pi(i,j+1)$ as the unique $v$-successor of $d$ in \Imc.
    \Cref{lem:tiling-link} guarantees that $\pi$ is well-defined: indeed, the individual $\pi(i+1,j+1)$ is supposed to be both the unique $h$-successor of $\pi(i,j+1)$ and the unique $v$-successor of $\pi(i,j+1)$, and the lemma ensures that these are indeed the same elements.
    Clearly, for all $i, j \in \mathbb{N}$
    \begin{equation}
      \label{eqn:tiling-correct}
      (\pi(i,j),\pi(i+1,j)) \in h^\Imc \; \text{and} \; (\pi(i,j),\pi(i,j+1)) \in v^\Imc.
    \end{equation}
    Using $\pi$, we define $\pi_P \colon \mathbb{N} \times \mathbb{N} \to T$ as $\pi_P(i,j) := t$ iff $\pi(i,j) \in A_t^\Imc$. Then, the fact that $\Imc$ satisfies~\eqref{eqn:tiling-successors} and~\eqref{eqn:tiling-correct} ensures that $\pi_P$ is a solution of $P$.
    
    Next, let $\pi$ be a solution of $P$.
    We define the interpretation $\Imc_\pi$ with domain $\mathbb{N} \times \mathbb{N}$ as follows.
    For each tile type $t \in T$, we set $A_t^{\Imc_\pi}$ as the set of elements $(m,n)$ for which $\pi(m,n) = t$.
    For each element $(m,n)$ in the domain, we add $((m,n),(m+1,n))$ to $h^{\Imc_\pi}$ and $((m,n),(m,n+1))$ to $v^{\Imc_\pi}$.
    Then, writing $(m \equiv i \mod 2)$ to denote that the remainder of the division of $m \in \Nbb$ by $2$ is $i$ (and similarly for $n$ and $j$), we set for $i,j \in \{0,1\}$
    \begin{align*}
        A_{ij}^{\Imc_\pi} &:= \{ (m,n) \in \Delta^{\Imc_\pi} \mid m \equiv i \mod 2,\ n \equiv j \mod 2 \} \\
        h_{i}^{\Imc_\pi} &:= \{ ((m,n),(m+1,n)) \mid (m,n) \in A_{ij}^{\Imc_\pi} \} \\
        v_{j}^{\Imc_\pi} &:= \{ ((m,n),(m,n+1)) \mid (m,n) \in A_{ij}^{\Imc_\pi} \} \\
        r_{ij}^{\Imc_\pi} &:= h_{i}^{\Imc_\pi} \cup v_{j}^{\Imc_\pi} \cup \{ ((m,n),(m + 1,n + 1)) \mid (m,n) \in A_{ij}^{\Imc_\pi}\}
    \end{align*}
    It is straightforward to verify that $\Imc_\pi$ is a model of $\Tmc_{P}$.
    \qed
  \end{proof}

\paragraph{Considerations.} While the above reduction is not syntactically using transitive roles within number restrictions, the semantics of \SSCC is such that every $\suc$-restriction implicitly ranges over the transitive roles, as well.
If we disallow role complement as used in~\eqref{eqn:tiling}, we could still enforce correctness of the reduction by adding
$A_{ij} \sqsubseteq \suc(|A_{(1-i)(1-j)}| = 1)$
to $\Tmc_P$, which works because of the implicit ranging over role successors induced by the semantics of \SSCC.

One may ask if lifting this semantic condition allows us to regain decidability.
If we consider the extension $\SSCC^{++}$ of \SSCC, defined in the same spirit of \ALCSCCpp w.r.t. \ALCSCC, we can replace~\eqref{eqn:tiling} with $A_{ij} \sqsubseteq \sat(|r_{ij} \cap A_{(1-i)(1-j)}| = 1)$ to cause undecidability.
While only using coefficients $0$ or $1$, this variant requires an explicit usage of transitive roles in number restrictions.
It is unclear if the same can effect can be achieved while disallowing transitive roles within numerical constraints.

\end{document}